\newcommand{\esssup}{\mathop{\mathrm{ess}\:\mathrm{sup}}\limits} 
\def\R{\mathbb R}
\def\Z{\mathbb Z}
\def\A{\mathbb A}
\def\B{\mathbb B}
\def\1{\mathbf 1}
\def\1{\bold 1}
\def\eps{\varepsilon}
\def\le{\leqslant}
\def\geq{\geqslant}
\def\ge{\geqslant}
\definecolor{darkred}{rgb}{0.9,0.1,0.1}
\theoremstyle{theorem}
\newtheorem{theorem}{Theorem}[section]
\newtheorem{proposition}[theorem]{Proposition}
\newtheorem{lemma}[theorem]{Lemma}
\newtheorem{remark}[theorem]{Remark}
\newtheorem{corollary}[theorem]{Corollary}
\numberwithin{equation}{section}
\begin{document}

\title[On operator estimates in   homogenization of nonlocal
operators]{On operator estimates in   homogenization of non-local
operators of convolution type}

\author[A.~Piatnitski,\  V.~Sloushch,\  T.~Suslina and E.~Zhizhina]{A.~Piatnitski${}^{1,2}$,\  V.~Sloushch${}^3$,\  T.~Suslina${}^3$ and E.~Zhizhina${}^2$}



\keywords{\small
convolution type operator, periodic homogenization, operator estimates,
effective operator.}

\maketitle

\noindent
\begin{center}
{
\tiny{${}^1$ The Arctic university of Norway, campus Narvik,
P.O.Box 385, 8505 Narvik, Norway}\\[1.1mm]
${}^2$ Institute of Information Transmission Problems of RAS, Bolshoi Karetny, 19,
127051 Moscow, Russia\\[1.1mm]
${}^3$ St.~Petersburg State University, Universitetskaya nab. 7/9, 199034 St.~Petersburg, Russia\\[3mm]
}
\end{center}
\hskip 1.32cm
{\tiny{\it Email addresses}. \ \ A. Piatnitski:  {\tt apiatnitski@gmail.com}, \ \ V.~Sloushch: {\tt v.slouzh@spbu.ru}} \\[0.7mm]
${ }$\hskip 3.75cm
{\tiny{ T. Suslina: {\tt  t.suslina@spbu.ru}, \ \ E. Zhizhina: {\tt elena.jijina@gmail.com}  }}

\def\supind#1{${}^\mathrm{#1}$}


\begin{abstract}
The paper studies a bounded symmetric operator ${\mathbb A}_\eps$ in $L_2(\R^d)$ with
$$
({\mathbb A}_\eps u) (x) = \eps^{-d-2} \int_{\R^d} a((x-y)/\eps) \mu(x/\eps, y/\eps) \left( u(x) - u(y) \right)\,dy;
$$
here $\eps$ is a small positive parameter.
It is assumed that  $a(x)$ is a non-negative  $L_1(\R^d)$ function such that $a(-x)=a(x)$ and the moments
$M_k = \int_{\R^d} |x|^k a(x)\,dx$, $k=1,2,3$, are finite.  It is also assumed that $\mu(x,y)$
is $\Z^d$-periodic both in $x$ and $y$ function such that  $\mu(x,y) = \mu(y,x)$ and $0< \mu_- \leqslant \mu(x,y) \leqslant \mu_+< \infty$.
Our goal is to study the limit behaviour of the resolvent
 $({\mathbb A}_\eps + I)^{-1}$, as $\eps\to0$.
We show that, as $\eps \to 0$, the operator $({\mathbb A}_\eps + I)^{-1}$ converges in the operator
norm in $L_2(\R^d)$ to the resolvent $({\mathbb A}^0 + I)^{-1}$ of the effective operator ${\mathbb A}^0$
being a second order elliptic differential operator with constant coefficients of the form
${\mathbb A}^0= - \operatorname{div} g^0 \nabla$.
We then  obtain  sharp in order estimates of the rate of convergence.

\end{abstract}

\bigskip

\noindent


\section*{Introduction}

\subsection{Motivation: Contact model}
\label{subsec_01}
Evolution processes in the models of mathematical biology and population dynamics are often described in terms of
parabolic equation of the form $\partial_t u = -Au$ with a nonlocal convolution type operator $A$, the non-locality of $A$ reflects the fact that
the interaction in these models is nonlocal.  The kernel of $A$ has a multiplier $a(x-y)$ that specifies the intensity of interaction depending
on the distance and determines the localization properties of $A$.

One of the models of this type that has been considered in the existing literature is the so-called {\sl contact model} in $\mathbb R^d$, see \cite{KKP, KPMZh, KPZh}. This model relies on a continuous time Markov process that belongs to the class of death and birth processes
and is defined on the space $\Gamma$ of infinite locally finite configurations $\gamma\subset\mathbb R^d$.
 The behaviour of the process is determined by the intensity of birth and death.  On the one hand, each point $x\in\gamma$ can produce
 an offspring $y$ with the intensity $a(x-y)$, independently of other points, and we assume that  $\int_{\mathbb{R}^d} a(z) \,dz =1$.
 On the other hand, each point of $\gamma$ has a random life time, and the intensity of death is $m(x)>0$.  In the general case
 the intensities of birth and death might depend on the position in the space.
 The infinitesimal generator of such dynamics reads
 $$
LF(\gamma)  =  \sum_{x \in \gamma} \int_{\mathbb{R}^d} a(x-y)  \left( F(\gamma \cup y) - F(\gamma) \right) dy \ + \  \sum_{x \in \gamma} m(x) \left( F(\gamma\backslash x) - F(\gamma) \right).
$$
 The case of constant death intensity,  $m(x) \equiv \kappa$, has been studied in details in  \cite{KKP}, the so-called critical regime
  $m(x) \equiv 1$  being of special interest here. In this regime the process has a family of invariant measures.

The contact model has a remarkable property: the equation for the first correlation function describing the  density of
configuration is closed and not coupled with the equations for higher order correlation functions.
Notice that for the second and higher order correlation functions the corresponding evolutions  have a complicated hierarchical structure
that involves lower order correlation functions. The evolution equation for the first correlation function reads
\begin{equation}\label{PP}
\frac{\partial u}{\partial t} \, = \, -A u, \quad u = u (t, x), \quad x \in \mathbb{R}^d, \; t \ge 0, \quad u(0,x) = u_0 (x) \ge 0,
\end{equation}
where
\begin{equation*}
A u(x) \, = \,   m(x) u(x) \, - \, \int_{\mathbb{R}^d} a(x-y) u(y)\, dy.
\end{equation*}
If $m(x) \equiv 1$, the operator $A$ takes the form
\begin{equation}\label{LA}
A u(x)  =   u(x) \, - \, \int_{\mathbb{R}^d} a(x-y) u(y)\, dy \, = \, \int_{\mathbb{R}^d} a(x-y) (u(x) - u(y)) \,dy.
\end{equation}

The operator $A$ with a kernel $a(x-y)$ that depends only on the difference $x-y$ is a proper tool for describing space homogeneous media.
In order to model the processes in space inhomogeneous media it is natural to consider operators with more general kernels
of the form $a(x-y)\mu(x,y)$. These kernels depend both on the difference $x-y$ and on the starting and end points of jumps, the function $\mu(x,y)$ representing the local characteristics of the environment.
In particular, for the contact model in a periodic medium the operator $A$ in evolution equation   \eqref{PP} for the first correlation function
should be replaced with an operator
\begin{equation}\label{LA-per}
\mathbb{A} u(x)  =  \int_{\mathbb{R}^d} a(x-y) \mu(x,y) (u(x) - u(y)) dy
\end{equation}
with a periodic function $ \mu(x,y)$.

The study of large time behaviour of solutions to nonlocal evolution equations can be reduced to proper homogenization problems
for the corresponding nonlocal operators.  To illustrate this we consider a parabolic equation $\partial_t u = - Au$, where $A$  can be both second order elliptic differential operator and nonlocal convolution type operator. The natural way to avoid growing time intervals
is to multiply the temporal variable by a small positive parameter that we call $\eps^2$.  Then, to preserve the structure of the equation, one has to multiply the spatial variable by $\eps$, this change of variables being called the diffusive scaling.  In a non-homogeneous medium
this leads to  a family of equations that depends on a small parameter $\eps$, and we naturally arrive at the homogenization problem.

While the homogenization problems for differential operators have been actively studied for quite a long period, see for instance the monographs \cite{BaPa},
\cite{BeLP} and  \cite{JKO} and the bibliography therein,  similar problems for nonlocal convolution type operators with intregrable kernels
have not been considered in the existing literature till recently.  For the first time the periodic homogenization problem for such operators
was investigated in  \cite{PZh}, where it was shown that, under natural moment and coerciveness conditions, the limit operator is a second
order elliptic differential operator with constant coefficients and a positive definite effective matrix.  The homogenization procedure
relied on the corrector techniques, however the justification of convergence required new methods adapted to studying the nonlocal
operators. It was also proved in \cite{PZh} that the corresponding jump Markov process satisfies the invariance principle in the Skorokhod topology.  Similar homogenization problems in random media have been considered in  \cite{PiZh_jmpa}.
In the case of non-symmetric convolution type operators homogenization problem has been addressed in \cite{PiaZhi19}, it was shown
that  for the corresponding parabolic equations homogenization takes place in  moving coordinates.
Convolution type operators in periodically and randomly perforated domains  have been studied in \cite{BraPia22} and \cite{BraPia21}
by the variational methods.

A number of interesting  homogenization results for the Levy type nonlocal operators with non-integrable kernels have been obtained
in the recent works \cite{Sandr16}, \cite{Schwab10}, \cite{CCKW22}. However, the properties of these operators differ a lot from the properties of operators studied in this
work.

\subsection{Estimates for the rate of convergence in homogenization theory}
\label{subsec_02}
Essential interest in obtaining various estimates for the rate of convergence in the mathematical homogenization theory
 arose shortly after emergence of this theory itself. This interest was stimulated by  important applications of  this theory in such disciplines as
 composite materials,  photonic crystals, porous media, network constructions and many others.   In addition, obtaining  these estimates turned out to be an interesting mathematical challenge.

For periodic media the first results in this direction have been obtained in  \cite{Kesa, Kesa2} and \cite{Vanni}.
The works   \cite{Kesa, Kesa2} dealt with boundary value problems for elliptic operators with oscillating coefficients,
for these problems $L_2$ estimates for the rate of convergence were proved.
Then  in \cite{Vanni} similar estimates have been obtained for elliptic operators in
periodically perforated domains including spectral problems with the Steklov condition
on the perforation border.

Later on many authors contributed to this topic, a large number of interesting results on the rate of convergence
for solutions to elliptic and evolution differential equations and systems of equations have been obtained both in periodic and locally periodic environments. Among others, the elasticity system and, in some particular cases, the Maxwell system have been   investigated.
We refer here only to some books \cite{BaPa, BeLP, JKO}, where these questions have been addressed.

For homogenization problems in random statistically homogeneous environments the first estimates for the rate of convergence
 were obtained in   \cite{Yur}. Further progress in this field was achieved in the recent works  \cite{GloOt2, GloOt1}.

Notice however that in all the above mentioned papers the estimates for the rate of convergence were obtained in the strong topology,
not the operator topology, the constants in these estimates depended on the regularity of the right-hand sides.

In the pioneer works  of M.~Birman and T.~Suslina  \cite{BSu1, BSu3, BSu4} a new approach to homogenization problems in periodic media has been developed, it relies on a version of the spectral method. With the help of this method  for a wide class of homogenization problems
estimates in the operator topology were justified.

In order to clarify this method, let us consider a scalar elliptic operator of the form
$A_\eps = - \operatorname{div} g(x/\eps) \nabla$, $\eps>0$, in $L_2(\R^d)$
with a positive definite bounded $\Z^d$-periodic matrix $g(\cdot)$.
Classical result states that for any $F\in L_2(\mathbb R^d)$, as $\eps\to0$, a solution $u_\eps$ of the equation    $$
(A_\eps u_\eps)(x) + u_\eps(x) = F(x),\quad x \in \R^d,
$$
converges in $L_2(\R^d)$ to a solution $u_0$ of the homogenized equation
that reads
$$
(A^0 u_0)(x) + u_0(x) = F(x),\quad x \in \R^d;
$$
here $A^0 = - \operatorname{div} g_{\operatorname{hom}} \nabla$ is the so-called effective operator and  $g_{\operatorname{hom}}$ is a constant positive definite matrix.
Moreover, the estimate
$$
\| u_\eps - u_0\|_{L_2(\R^d)} \leqslant C(F) \eps
$$
holds true.
In \cite{BSu1} a stronger estimate has been obtained. Namely, it was shown that
$$
\| u_\eps - u_0\|_{L_2(\R^d)} \leqslant C \eps \|F\|_{L_2(\R^d)}
$$
with a constant $C$ that does not depend on $F$.
This result can be reformulated in terms of operator convergence: as $\eps\to0$, the resolvent
$(A_\eps +I)^{-1}$ converges in the operator norm in  $L_2(\R^d)$ to $(A^0 +I)^{-1}$,
and the following order-sharp estimate is fulfilled:
\begin{equation}
\label{BSu1}
\| (A_\eps +I)^{-1} - (A^0 +I)^{-1}\|_{L_2(\R^d) \to L_2(\R^d)} \leqslant C \eps.
\end{equation}
In the homogenization theory the inequalities of this type are called operator estimates of approximation error.
Then in \cite{BSu3} a more advanced uniform operator approximation of the resolvent $(A_\eps +I)^{-1}$ in
$L_2(\mathbb R^d)$ has been found, the estimate derived in this work takes into account the corrector
and provides the precision of order $O(\eps^2)$. As was shown in \cite{BSu4} similar approximation  yields
the precision of order $O(\eps)$ in the space of operators acting from $L_2(\mathbb R^d)$ to $H^1(\R^d)$.

Analogous results for parabolic semigroups  $e^{-A_\eps t}$, $t>0$, have been obtained in  \cite{Su1, Su2, V, Su3}.
It should be emphasized that the above mentioned results on the estimates in operator topology are valid for a wide class of second order  matrix elliptic operators.

The spectral method used in the cited collection of works relies on scaling transformation, Floquet-Bloch theory and analytic perturbation theory.

We illustrate this method by explaining the main ideas of obtaining  the estimate in \eqref{BSu1}.
To explain the main ideas, we derive the estimate in \eqref{BSu1}. 
\begin{equation}
\label{BSu2}
\| (A + \eps^2 I)^{-1} - (A^0 + \eps^2 I)^{-1}\|_{L_2(\R^d) \to L_2(\R^d)} \leqslant C \eps^{-1},
\end{equation}
 where $A = - \operatorname{div} g(x) \nabla = D^* g(x) D$, $D = -i \nabla$. Applying then the unitary Gelfand transform we represent
 the operator $A$ as a direct integral with respect to the operators  $A(\xi)$ acting in  $L_2(\Omega)$
and depending on a parameter $\xi \in \widetilde{\Omega}$  (quasi-momentum); here $\Omega = [0,1)^d$ is the periodicity cell $\mathbb R^d/\mathbb Z^d$, and  $\widetilde{\Omega} = [-\pi,\pi)^d$ is the cell of the dual lattice. The operator $A(\xi)$  is of the form $A(\xi) = (D + \xi)^* g(x) (D+ \xi)$ and equipped with the periodic boundary conditions. Estimate \eqref{BSu2} is a consequence of the following one:
\begin{equation*}
\| (A(\xi) + \eps^2 I)^{-1} - (A^0(\xi) + \eps^2 I)^{-1}\|_{L_2(\Omega) \to L_2(\Omega)} \leqslant C \eps^{-1},
\quad \xi \in \widetilde{\Omega}.
\end{equation*}
The family of operators $\{A(\xi)\}$ is analytic, and all the operators of this family have
a compact resolvent. Thus the methods of analytic perturbation theory apply, and
the resolvent  \hbox{$(A(\xi) + \eps^2 I )^{-1}$} can be approximated in terms of the spectral characteristics
of this operator family at the spectral edge.  In particular, the effective matrix is 
proportional with the coefficient $\frac12$ to the Hessian
of the first eigenvalue $\lambda_1(\xi)$ of the operator $A(\xi)$ at $\xi=0$.
Therefore, the homogenization can be treated as a spectral threshold effect at the edge
 of the spectrum of an elliptic operator.

Later on with the help of the spectral method various operator estimates for the rate of convergence were obtained for a wide class of elliptic
and evolution equations with periodic coefficients, among them are non-stationary Schr\"odinger-type and hyperbolic equations,
see \cite{BSu5, M, Su4, D, DSu1, DSu2}, stationary and non-stationary Maxwell system, see   \cite{Su_Maxwell, Su_Maxwell2, DSu3},
and many others.

Another approach to estimating the operator norm of the discrepancy in the homogenization procedure, the so-called ''shift method'',  was developed by V.~Zhikov and S.~Pastukhova in \cite{Zh, ZhPas1, ZhPas2}, see also the survey paper  \cite{ZhPas3} and references therein.
This approach is based on the analysis of the first approximation of the solution and on introducing an additional variable that takes on values in
$\Omega$. It allows to deal with homogenization of boundary-value problems in bounded domains in $\mathbb R^d$.

The works \cite{Zh, ZhPas1} studied homogenization problems both for elliptic operators defined
in the whole $\mathbb R^d$ and for boundary-value problems with the Dirichlet and Neumann boundary conditions.
It was shown that the discrepancy admits the estimate of order $O(\eps^{1/2})$ in the uniform operator topology in
$L_2$ and, in the presence of correctors, in the operator norm $L_2 \to H^1$. The estimates are not as good as in the case
of the whole space due to the influence of the domain boundary.
Similar results for the Dirichlet and Neumann problems for a scalar operator in a bounded domain have been obtained in \cite{Gr1, Gr2}  by means of the unfolding method.
In \cite{Gr2} for the first time an estimate of the sharp order $O(\eps)$ was justified in the operator norm in $L_2$.

For elliptic systems similar estimates have been proved in \cite{KeLiS}
and \cite{PSu, Su_Math, Su_SIAM},  the case of initial-boundary problems
for the corresponding parabolic equations has been treated in \cite{MSu, GeS},
and Maxwell system in a bounded domain was considered in \cite{Su_Maxw3, Su_Maxw4}.

In recent years further progress has been achieved in this topic. We quote here \cite{Gu}, where operator estimates were obtained for the Stokes system,  \cite{Ve, KukSu, Pas1, Pas2, SlSu, MilSu, Su_Ari, Su_Lob} that dealt with elliptic operators of higher order in $\mathbb R^d$,
and   \cite{Su_high1, Su_high2, Su_high3},  where the higher order elliptic operators in a bounded domain were considered.

The works \cite{Bor, PasT1, PasT2, Se1, Se2, Se3} focused on operator estimates for operators with locally periodic
coefficients, the high contrast problems were investigated in  \cite{ChCo, ChKis, ChErKis}.
Similar questions can be raised when boundary homogenization problems are considered. Some of them have been addressed
in \cite{BorCFR, BorSh, BorCD}.

However, it should be emphasized that in all these papers the operator estimates have been obtained for differential operators.
To our best knowledge the case of nonlocal operators has not been examined in the existing literature.

\subsection{Problem setup. Main result}
\label{subsec_03}
The present work studies convolution type operators that read
 \begin{equation}
 \label{Sus1}
 \mathbb{A}_\eps u(x)=\frac1{\eps^{d+2}}\int_{\mathbb R^d} a\Big(\frac{x-y}{\eps}\Big)\mu\Big(\frac x\eps,\frac y\eps\Big)\big( u(x)-u(y)\big)\,dy,\ \ x\in\mathbb{R}^{d},\ \ u\in L_{2}(\mathbb{R}^{d}),
 \end{equation}
here $\eps$ is a small positive parameter,  $a(x)$ is an even non-negative function such that  $a(\cdot)\in L_{1}(\mathbb{R}^{d})$ and $\|a\|_{L_{1}}>0$, $\mu(x,y)$ is a $\mathbb{Z}^{d}$ periodic in $x$ and $y$ function such that
$\mu(x,y)=\mu(y,x)$ and $0<C_1\leqslant \mu(x,y)\leqslant C_2$ for some constants $C_1$ and $C_2$.
Under these conditions the operator $\mathbb{A}_{\eps}$ is bounded, self-adjoint and non-negative in $L_2(\mathbb R^d)$.
We assume furthermore that the moments $M_{k}=\int_{\mathbb{R}^{d}}|x|^{k}a(x)dx$ are finite
for $k=1,2,3$.

As was mentioned above,  the homogenization problem for operators defined in \eqref{Sus1} was solved in  \cite{PZh}.
It was shown that the resolvent $(\mathbb{A}_{\varepsilon}+I)^{-1}$ converges, as $\eps\to0$, in the strong operator topology
to the resolvent $(\mathbb{A}^{0}+I)^{-1}$ of the effective operator being the second order elliptic operator
of the form   $\mathbb{A}^{0}=-\operatorname{div}g^{0}\nabla$. It is interesting to observe that although
the operators  $\mathbb{A}_{\eps}$ are bounded and nonlocal for any $\eps>0$, the limit operator is unbounded
and local.


Although the estimates for the rate of convergence in the just mentioned homogenization problem seem not to be presented in the mathematical literature,  for sufficiently regular right-hand sides such estimates can be obtained with the help of asymptotic expansions constructed in~\cite{PZh}.

Our goal here is to prove the convergence of the resolvent $(\mathbb{A}_{\varepsilon}+I)^{-1}$
to the resolvent $(\mathbb{A}^{0}+I)^{-1}$  in the operator norm  topology in $L_2(\mathbb R^d)$
and to derive order-sharp  estimates for the discrepancy.


In order to formulate the main result of this paper, we first recall the definition of the effective matrix  $g^0$.
Consider an auxiliary cell problem that reads: find a $\mathbb{Z}^{d}$-periodic vector-valued solution of the equation
\begin{equation}\label{Sloushch1}
\left\{
\begin{array}{l}
\int_{\mathbb{R}^{d}}a(x-y)\mu(x,y)(v(x)-v(y))\, dy=\int_{\mathbb{R}^{d}}a(x-y)\mu(x,y)(x-y)\, dy,\\[2mm]
\int_{\Omega}v(y)dy=0;
\end{array}
\right.
\end{equation}
here $\Omega:=[0,1)^{d}$ is a periodicity cell. This problem has a unique solution.
The effective matrix $g^{0}=\frac{1}{2}\{g_{ij}\}_{i,j=1,\dots,d}$ is defined by
\begin{equation*}
g_{ij}=\int_{\Omega}dx\int_{\mathbb{R}^{d}} dy\Big((x_{i}-y_{i})(x_{j}-y_{j})-v_{j}(x)(x_{i}-y_{i})-v_{i}(x)(x_{j}-y_{j})\Big)a(x-y)\mu(x,y),\
\ i,j=1,\dots,d.
\end{equation*}
It turns out that the matrix $g^{0}$ is positive definite, see \cite{PZh}.
The domain of the homogenized operator coincides with the Sobolev space $H^2(\mathbb R^d)$.
Our main result is Theorem \ref{teor3.1} which states that the following estimate holds true:
\begin{equation}\label{Sloushch2}
\|(\mathbb{A}_{\varepsilon}+I)^{-1}-(\mathbb{A}^{0}+I)^{-1}\|_{L_{2}(\mathbb{R}^{d})\to
L_{2}(\mathbb{R}^{d})}\leqslant C(a,\mu)\varepsilon,\ \ \varepsilon>0.
\end{equation}
This estimate is sharp in order, and the constant $C(a,\mu)$ can be calculated explicitly.

\subsection{The methods}
\label{subsec_04}

To approximate the resolvent of operator \eqref{Sus1} and justify the formulated above result we modify and adapt the spectral method
discussed in Section \ref{subsec_02}.

The first two steps, the scaling transformation and the unitary Gelfand transform allowing to represent
the operator in \eqref{LA-per} as a direct integral with respect to the family $\mathbb{A}(\xi)$, remain unchanged.
 However, in contrast to the case of differential operators, the family $\mathbb{A}(\xi)$ depending on the parameter
$\xi \in \widetilde{\Omega}$ need not be analytic in $\xi$, thus the methods of analytic perturbation theory do not
apply to this family. Instead, our approach relies on a $C^3$ regularity of this family which is granted by the existence of the first three moments
of function $a(z)$.

Let us explain some details of this approach. It is known that, in order to approximate the resolvent
$(\mathbb{A}(\xi)+\eps^{2}I)^{-1}$ for small  $\eps>0$, it suffices to construct the asymptotics of the operator-function   $\mathbb{A}(\xi)F(\xi)$, $\xi\to0$, where $F(\xi)$ is a spectral projector of  the operator  $\mathbb{A}(\xi)$ 
that corresponds to  some neighbourhood of zero.

In the previous works that dealt with the differential operators the asymptotics of the operator $\mathbb{A}(\xi)F(\xi)$ for small  $|\xi|$
relied on the asymptotics of the principal eigenvalue  $\lambda_1(\xi)$ of operator $A(\xi)$.
In the present paper, in addition to this classical method, we provide two alternative ways of computing the asymptotics of
$\mathbb{A}(\xi)F(\xi)$ as $\xi\to0$ which are not based on constructing the asymptotics of  $\lambda_1(\xi)$.
One of them that looks promising for applications in the homogenization theory relies on computing the coefficients of the asymptotics
of  $\mathbb{A}(\xi)F(\xi)$, $\xi\to0$, by means of integrating the resolvent $(\mathbb{A}(0)-\zeta I)^{-1}$ as well as some operator-functions of this resolvent, over appropriate contours, see Section 4.2.

As in the case of differential operators, the key role in the studied homogenization problem is played by the spectral characteristics of
operator $\mathbb{A}$ near the lower edge of its spectrum (the so-called threshold characteristics).  Thus, the homogenization of nonlocal
operator can also be considered as a threshold effect at the spectrum edge.

\subsection{Plan of the paper}
The paper consists of Introduction, three sections and Appendix.
In Section \ref{sec_1} we introduce  operator  ${\mathbb A}$, represent this operator as a direct integral over the family ${\mathbb A}(\xi)$
and obtain lower bounds for the quadratic form of the operator ${\mathbb A}(\xi)$.

In  Section 2 the threshold characteristics of the operator family
${\mathbb A}(\xi)$ are studied in the vicinity of the spectrum bottom. Then we approximate the resolvent
$(\!{\mathbb A}(\xi) + \eps^2 I)^{-1}$ for small~$\eps$.

Finally, in Section 3 we complete the proof of the main result of this work.

The Appendix contains a number of auxiliary statements.

\subsection{Notation}
\label{subsec_06}

The norm in a normed space $X$ is denoted by $\|\cdot\|_{X}$, the index $X$ is dropped if this
does not lead to an ambiguity. If $X$ and $Y$ are linear normed space, the norm of a linear
bounded operator $T:X\to Y$ is denoted $\|T\|_{X\to Y}$ or just $\|T\|$.
The notation $\mathcal{B}(X)$ is used for the space of bounded linear operators in a normed space $X$.
Given a collection of vectors $F\subset X$, we denote its span  by $\mathcal{L}\{F\}$.
An open ball in a normed space centered at $x_{0}$ of radius $r>0$ is denoted by  $B_{r}(x_{0})$.

The spectrum and the essential spectrum of a self-adjoint operator $\A$ in a Hilbert space $\mathfrak{H}$
are denoted by $\sigma(\A)$ and $\sigma_{\text{e}}(\A)$, respectively.

For a Lebesgue measurable set $\mathcal O\subset \R^d$ of positive Lebesgue measure
we denote by $L_{p}({\mathcal O})$, \hbox{$1 \le p \le \infty$}, the standard $L_p$ space
of functions defined on $\mathcal O$.
The inner product in  $L_{2}({\mathcal O})$ is denoted  by $(\cdot,\cdot)_{L_{2}({\mathcal O})}$,
while for the scalar product in $\R^d$ and $\mathbb{C}^{d}$ the notation  $\langle\cdot,\cdot\rangle$ is used.
Symbol $\mathcal{S}(\R^{d})$ stands for the Schwartz class of functions in $\R^{d}$.
The characteristic function of a set $E\subset\R^d$ is denoted by $\mathbf{1}_{E}$.

\section{Nonlocal Schr\"odinger operator: \\ representation as a direct integral and estimates}
\label{sec_1}

\subsection{Operator $\A(a,\mu)$}
\label{subsec_11}
Given real-valued functions $a\in L_{1}(\R^d)$ and $\mu\in L_{\infty}(\R^d\times \R^d)$ we define a nonlocal Schr\"odinger operator $\A = \A(a,\mu)$ in the space  $L_{2}(\R^d)$, $d\ge 1$, by
\begin{equation*}
\A(a,\mu) u(x):=\int_{\R^d}a(x-y)\mu(x,y)(u(x)-u(y)) \,dy,\ \ x\in\R^d.
\end{equation*}
Rearranging the expression on the right-hand side,  one has $\A(a,\mu)=p(x;a,\mu)-\B(a,\mu)$ with
\begin{equation*}
\begin{gathered}
p(x;a,\mu):=\int_{\R^d}a(x-y)\mu(x,y) \,dy,\ \ x\in\R^d,\\
\B(a,\mu) u(x):=\int_{\R^d}a(x-y)\mu(x,y)u(y)\,dy,\ \ x\in\R^d;
\end{gathered}
\end{equation*}
here and in what follows we identify the function $p(\cdot;a,\mu)$ and the operator of multiplication by this function.
According to the Schur test, see \cite[Theorem 5.2]{Halmos78}, the operator   $\B(a,\mu):L_{2}(\R^d)\to L_{2}(\R^d)$ is bounded in $L_2(\mathbb R^d)$
and its norm satisfies the upper bound $\|\B(a,\mu)\|\le\|\mu\|_{L_\infty}\|a\|_{L_1}$.
For the reader convenience we  prove this statement in the Appendix, see Lemma \ref{lemma4.1}.
The potential $p(x;a,\mu)$ is also bounded
and admits the estimate $\|p(\cdot;a,\mu)\|_{L_\infty}\le\|\mu\|_{L_\infty}\|a\|_{L_1}$. Therefore, the operator
$\A(a,\mu):L_{2}(\R^d)\to L_{2}(\R^d)$ is bounded.
It is convenient to introduce, in addition to $\A(a,\mu)$, an auxiliary operator $\A_{0}(a):=\A(a,\mu_{0})$ with $\mu_{0}\equiv 1$.
Letting $\B_{0}(a):=\B(a,\mu_{0})$ and $p_{0}(x;a):=p(x;a,\mu_{0})$ we observe that $\B_{0}(a)$ is a convolution operator with the kernel
$a(\cdot)$, and the potential $p_{0}(x;a)$ is constant.


From now on we assume that the functions $a$ and  $\mu$ possess the following properties:
\begin{gather}
\label{e1.1}
a\in L_{1}(\R^d),\ \ \operatorname{mes}\{x\in\R^d:a(x)\not=0\}>0,\ \ a(x)\ge 0,\ \ a(-x)=a(x),\ \ x\in\R^d;
\\
\label{e1.2}
0<\mu_{-}\le\mu(x,y)\le\mu_{+}<+\infty,\ \ \mu(x,y)=\mu(y,x),\ \ x,y\in\R^d;
\\
\label{e1.3}
\mu(x+m,y+n)=\mu(x,y),\ \ x,y\in\R^d,\ \ m,n\in\Z^d;
\\
\label{e1.4}
M_{k}(a):=\int_{\R^d}|x|^{k}a(x)dx<+\infty,\ \ k=1,2,3.
\end{gather}
Since the coefficients $a(\cdot)$ and $\mu(\cdot)$ are real-valued, the potential  $p(x)=p(x;a,\mu)$ is also real.   Moreover, under conditions  (\ref{e1.1}) and (\ref{e1.2}) the potential $p(x;a,\mu)$ satisfies the estimate
\begin{equation}
\label{e1.5}
\mu_{-}\|a\|_{L_1(\R^d)} \le p(x)\le\mu_{+}\|a\|_{L_1(\R^d)},\ \ x\in\R^d,
\end{equation}
and the operator $\B(a,\mu)$ is bounded and self-adjoint.  Thus
 $\A(a,\mu)$ is bounded and self-adjoint as well.

\subsection{Quadratic form of operator $\A(a,\mu)$}
\label{subsec12}
Under assumptions \eqref{e1.1} and \eqref{e1.2} the quadratic form of operator $\A(a,\mu)$ admits the following representation
(see, for instance, \cite{KPMZh}):
\begin{equation}\label{e1.6}
(\A(a,\mu) u,u)=\frac{1}{2} \int_{\R^d} \int_{\R^d} dx\,dy\, a(x-y)\mu(x,y) |u(x)-u(y)|^{2},\ \ u\in L_{2}(\R^d).
\end{equation}
 Indeed, from the relations $a(x-y) = a(y-x)$ and $\mu(x,y) = \mu(y,x)$ we obtain
\begin{multline*}
(\A(a,\mu) u,u)=\intop_{\R^d}dx\,\overline{u(x)}\intop_{\R^d}dy\,a(x-y)\mu(x,y)(u(x)-u(y))=
\\
=\intop_{\R^d}\intop_{\R^d}dx\,dy\, a(x-y)\mu(x,y)|u(x)-u(y)|^{2}+\intop_{\R^d}dy\,\overline{u(y)} \intop_{\R^d}dx\, a(x-y)\mu(x,y)(u(x)-u(y))=
\\
=\intop_{\R^d}\intop_{\R^d}dx\,dy\, a(x-y)\mu(x,y)|u(x)-u(y)|^{2}-(\A(a,\mu) u,u),\ \ u\in L_{2}(\R^d).
\end{multline*}
This yields  (\ref{e1.6}).
Considering representation \eqref{e1.6} we conclude that the operator  $\A(a,\mu)$  is non-negative
and the estimates
\begin{equation}
\label{e1.7}
\mu_{-}(\A_{0}(a) u,u)\le (\A(a,\mu) u,u)\le\mu_{+}(\A_{0}(a) u,u),\ \ u\in L_{2}(\R^d),
\end{equation}
hold true.  Since $\B_{0}(a)$ is a convolution operator, in the space of Fourier images of functions from $L_2(\mathbb R^d)$ it
acts as the operator of multiplication by the function $\widehat a(\xi)$, $\xi\in\R^d$, with
\begin{equation*}
\widehat a(\xi) := \int_{\R^d}e^{-i \langle \xi, z \rangle}a(z)\, dz,\ \ \xi\in\R^d.
\end{equation*}
Notice that due to \eqref{e1.1} the function $\widehat a(\xi)$ is continuous and tends to zero at infinity.
Consequently, the operator $\A_{0}(a)$ is unitary equivalent to the operator of multiplication by the function $\widehat a(0)-\widehat a(\xi)$,
and $\lambda_{0}=0$ belongs to the spectrum of  $\A_{0}(a)$.
Since $\A_{0}(a)$ is a non-negative operator, $\lambda_{0}$ is its spectral edge.
In view of estimates \eqref{e1.7} the point $\lambda_{0}=0$ is also the lower edge of the spectrum of
operator  $\A(a,\mu)$.

\subsection{Representation of $\A(a,\mu)$ as a direct integral}
\label{subsec_13}
Due to conditions \eqref{e1.1}--\eqref{e1.3} the operator of multiplication by $p(x;a,\mu)$ and the operator $\B(a,\mu)$ commute
with the operators $S_n$ defined by
\begin{equation*}
S_{n}u(x) =u(x+n),\ \ x\in\R^d,\ \ n\in\Z^d.
\end{equation*}
So does $\A(a,\mu)$.
Thus $\A(a,\mu)$ and $\B(a,\mu)$ are periodic operators with a periodicity lattice  $\Z^d$. Denote by $\Omega:=[0,1)^{d}$ the corresponding
periodicity cell and by $\widetilde\Omega:=[-\pi,\pi)^{d}$ the periodicity cell of the dual lattice  $(2\pi\mathbb{Z})^{d}$.

Let us recall the definition of the Gelfand transform, see, for example,  \cite{Sk} or \cite[Chapter~2]{BSu1}. We call it $\mathcal{G}$.
For functions $u$ from the Schwartz class  ${\mathcal S}(\R^d)$ it is defined as follows:
\begin{equation*}
\mathcal{G}u(\xi,x):=(2\pi)^{-d/2}\sum_{n\in\Z^d}u(x+n)e^{-i \langle \xi, x+n\rangle},\
\ \xi\in\widetilde\Omega,\ \ x\in\Omega,\ \ u\in {\mathcal S}(\R^d).
\end{equation*}
Since $\|\mathcal{G}u\|_{L_{2}(\widetilde\Omega\times\Omega)}=\|u\|_{L_{2}(\R^d)}$, then $\mathcal{G}$ can be extended by continuity
to a unitary mapping from $L_{2}(\R^d)$ to $\int_{\widetilde\Omega}\bigoplus L_{2}(\Omega)d\xi=L_{2}(\widetilde\Omega\times\Omega)$.

In what follows we assume that for any $v\in\int_{\widetilde\Omega}\bigoplus L_{2}(\Omega)d\xi$ and any
 $\xi\in\widetilde\Omega$ the function $v(\xi,\cdot)$ is extended periodically from the cell $\Omega$ to the whole
$\R^d$.
Since the operators  $\A(a,\mu)$ and $\B(a,\mu)$ are periodic, the Gelfand transform partially diagonalizes them.
More precisely, the operators $\mathcal{G}\A(a,\mu) \mathcal{G}^{*}$ and $\mathcal{G}\B(a,\mu) \mathcal{G}^{*}$ take the form
\begin{equation}
\label{e1.8}
\mathcal{G}\A(a,\mu) \mathcal{G}^{*}u(\xi,x)=\A(\xi,a,\mu)u(\xi,x),\ \
\mathcal{G}\B(a,\mu) \mathcal{G}^{*}u(\xi,x)=\B(\xi,a,\mu)u(\xi,x),\ \
\xi\in\widetilde\Omega,\ \ x\in\Omega;
\end{equation}
here the operators $\A(\xi,a,\mu)$ and $\B(\xi,a,\mu)$ are bounded self-adjoint operators in $L_{2}(\Omega)$
that are defined by
\begin{equation*}
\A(\xi,a,\mu)=p(x;a,\mu)-\B(\xi,a,\mu),\ \
\B(\xi,a,\mu)u(x)=\int_{\Omega}\widetilde a(\xi,x-y)\mu(x,y)u(y)\,dy,\ \ u\in L_{2}(\Omega);
\end{equation*}
\begin{equation*}
\widetilde{a}(\xi,z) :=\sum_{n\in\Z^d}a(z+n)e^{-i \langle \xi, z+n \rangle },\ \
\xi\in\widetilde\Omega,\ \ z\in\R^d.
\end{equation*}
We emphasize that $p(x;a,\mu)$ does not depend on $\xi$ and satisfies the relation
\begin{equation}\label{e1.9}
p(x;a,\mu)=\int_{\Omega}\widetilde a(0,x-y)\mu(x,y)\,dy.
\end{equation}
In view of $\Z^d$-periodicity of function $\widetilde a(\xi,\cdot)$ the estimate
$\int_{\Omega}|\widetilde a(\xi,z)|\,dz\le\|a\|_{L_1}$
implies that
$\widetilde a(\xi,\cdot)\in L_{1,\mathrm{loc}}(\R^d)$  and
$$
\|\widetilde a(\xi,\cdot)\|_{L_{1}([-1,1]^{d})}\le 2^{d}\|a\|_{L_1(\R^d)}, \quad \xi\in\widetilde\Omega.
$$
Therefore, by Corollary \ref{corollary4.2}, the operator $\B(\xi,a,\mu)$ is compact, and the following estimate
holds
\begin{equation*}
\|\B(\xi,a,\mu)\|\le\mu_{+}\intop_{[-1,1]^{d}}|\widetilde
a(\xi,z)|dz\le 2^{d}\mu_{+}\|a\|_{L_1(\R^d)},\ \ \xi\in\widetilde\Omega.
\end{equation*}
We conclude that the essential spectrum of the operator  $\A(\xi,a,\mu)$ coincides with
$\mathrm{ess}\text{-}\mathrm{Ran}\,p(\cdot;a,\mu)$ and, due to the lower bound in \eqref{e1.5},
for any $\xi\in\widetilde\Omega$ the spectrum of   $\A(\xi,a,\mu)$  in the interval $(-\infty,\mu_{-}\|a\|_{L_1} )$
is discrete.
The goal of this section is to estimate the lower edge of the spectrum of operator $\A(\xi,a,\mu)$ for
all $\xi$.

The next statement provides a representation of the quadratic form of the operator $\A(\xi,a,\mu)$ that will
be convenient for the further analysis.

\begin{lemma}
\label{lemma1.1}
Under conditions  \eqref{e1.1}--\eqref{e1.3} for any $\xi\in\widetilde\Omega$ the following relation holds:
\begin{equation}
\label{e1.11}
(\A(\xi,a,\mu)u,u) = \frac{1}{2}\intop_{\Omega}dx\intop_{\R^d}\,dy\,
a(x-y)\mu(x,y)|e^{i\langle \xi, x\rangle}u(x)-e^{i \langle \xi, y\rangle}u(y)|^{2},\ \ u\in L_{2}(\Omega);
\end{equation}
it is assumed here that the function $u\in L_{2}(\Omega)$  is extended periodically to the whole $\R^d$.
\end{lemma}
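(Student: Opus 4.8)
The plan is to start from the spectral/direct-integral representation of the operator established in \eqref{e1.8}, namely that $\A(\xi,a,\mu) = p(x;a,\mu) - \B(\xi,a,\mu)$ with $\B(\xi,a,\mu)$ having integral kernel $\widetilde a(\xi, x-y)\mu(x,y)$, and to compute the quadratic form by the same symmetrization trick that was already used to derive \eqref{e1.6}. First I would write, for $u \in L_2(\Omega)$ extended $\Z^d$-periodically,
\begin{equation*}
(\A(\xi,a,\mu) u, u) = \intop_\Omega dx\, \overline{u(x)} \intop_\Omega dy\, \widetilde a(\xi, x-y)\mu(x,y)\bigl(u(x) - u(y)\bigr),
\end{equation*}
using \eqref{e1.9} for the potential term. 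The next step is to unfold the sum defining $\widetilde a(\xi, x-y) = \sum_{n\in\Z^d} a(x-y+n) e^{-i\langle\xi, x-y+n\rangle}$: the inner integral over $y\in\Omega$ against this periodized kernel turns into an integral over all of $\R^d$, since $\mu(x,y)$ and the periodic extension of $u$ are $\Z^d$-periodic in $y$. Concretely, $\intop_\Omega dy\, \widetilde a(\xi,x-y)\mu(x,y) f(y) = \intop_{\R^d} dy\, a(x-y) e^{-i\langle\xi, x-y\rangle}\mu(x,y) f(y)$ for any $\Z^d$-periodic $f$. This gives
\begin{equation*}
(\A(\xi,a,\mu) u, u) = \intop_\Omega dx\, \overline{u(x)} \intop_{\R^d} dy\, a(x-y) e^{-i\langle\xi, x-y\rangle}\mu(x,y)\bigl(u(x) - u(y)\bigr).
\end{equation*}

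Now I would factor the phase: multiply and divide appropriately to recognize $e^{-i\langle\xi,x-y\rangle}(u(x)-u(y)) = e^{-i\langle\xi,x\rangle}\bigl(e^{i\langle\xi,x\rangle}u(x) - e^{i\langle\xi,y\rangle}u(y)\bigr)$, so that, setting $w(x) := e^{i\langle\xi,x\rangle} u(x)$, the form becomes $\intop_\Omega dx\, \overline{w(x)} \intop_{\R^d} dy\, a(x-y)\mu(x,y)(w(x) - w(y))$. Note $w$ is no longer periodic, but it is a perfectly good function on $\R^d$ against which the kernel $a(x-y)\mu(x,y)$ may be integrated; what matters is that the outer integral stays over $\Omega$. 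Then I would apply exactly the symmetrization identity from the proof of \eqref{e1.6}: split $\overline{w(x)}(w(x)-w(y)) = |w(x)-w(y)|^2 + \overline{w(y)}(w(x)-w(y))$, and in the second term use $a(x-y)=a(y-x)$, $\mu(x,y)=\mu(y,x)$ together with the fact that the full double integral (now genuinely $\intop_\Omega\intop_{\R^d}$) is invariant under swapping $x\leftrightarrow y$ \emph{provided} one also checks that the roles of $\Omega$ and $\R^d$ in the domains of integration can be interchanged. This last point is the one place where periodicity must be invoked again — the double integral $\intop_\Omega dx\intop_{\R^d} dy\, a(x-y)\mu(x,y)\overline{w(y)}(w(x)-w(y))$ equals $\intop_{\R^d} dy\intop_\Omega dx \cdots$, and after the change of variable $y = y'+n$, $n\in\Z^d$, summing over $n$ restores the symmetric form $\intop_\Omega dy'\intop_{\R^d} dx\cdots$ using $\Z^d$-periodicity of $\mu$ and of $w\bar w$-type combinations in the relevant variables. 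Carrying this out yields $(\A(\xi,a,\mu)u,u) = \frac12 \intop_\Omega dx\intop_{\R^d} dy\, a(x-y)\mu(x,y)|w(x)-w(y)|^2$, which is \eqref{e1.11}.

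\textbf{The main obstacle} I anticipate is bookkeeping with the domains of integration: unlike the whole-space computation in \eqref{e1.6}, here the outer variable is confined to $\Omega$ while the inner one ranges over $\R^d$, so the naive Fubini-plus-swap argument needs the periodic unfolding to be tracked carefully (in particular, the term $\overline{w(y)}(w(x)-w(y))$ with $y\in\R^d$ must be folded back into $y\in\Omega$ using $\Z^d$-periodicity of $\mu$ and of the periodic extension of $u$, noting that $|w|^2 = |u|^2$ is periodic while $w$ itself is not — it is the combination appearing in the symmetrized integrand that is periodic). All integrals converge absolutely because $a\in L_1(\R^d)$, $\mu\in L_\infty$, and $u\in L_2(\Omega)$ is bounded in the relevant $L_2$ sense, so no delicate limiting argument is needed; it is purely a matter of organizing the change of variables. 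Once that is in place the identity follows, and a useful consequence (to be recorded) is that the form is non-negative and, via $\mu_-\le\mu\le\mu_+$, squeezed between the corresponding forms for $\A(\xi,a,\mu_0)$, paralleling \eqref{e1.7}.
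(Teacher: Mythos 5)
Your overall route is exactly the paper's: unfold the periodized kernel, pull out the phase to form $w(x)=e^{i\langle\xi,x\rangle}u(x)$, then symmetrize as in the derivation of \eqref{e1.6}, handling the cross term by swapping $x\leftrightarrow y$ and folding the unbounded variable back to $\Omega$ via $\Z^d$-periodicity. Your treatment of that last symmetrization/fold-back step is sound. However, the middle of your argument contains two errors that happen to cancel. First, your starting display is wrong: by \eqref{e1.9} and \eqref{e1.12} the correct representation is
\begin{equation*}
(\A(\xi,a,\mu)u,u)=\intop_{\Omega}dx\,\overline{u(x)}\intop_{\Omega}dy\,
\bigl(\widetilde a(0,x-y)u(x)-\widetilde a(\xi,x-y)u(y)\bigr)\mu(x,y),
\end{equation*}
i.e.\ the potential term carries the kernel $\widetilde a(0,\cdot)$, not $\widetilde a(\xi,\cdot)$. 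Your version, with $\widetilde a(\xi,x-y)\mu(x,y)(u(x)-u(y))$ under the inner integral, differs from the truth by $\intop_{\Omega}|u(x)|^{2}\intop_{\R^d}a(x-y)\bigl(e^{-i\langle\xi,x-y\rangle}-1\bigr)\mu(x,y)\,dy\,dx$, which does not vanish: for $\mu\equiv1$ and $u=\mathbf{1}_{\Omega}$ your formula gives $0$, while the correct value is $\hat a(0)-\hat a(\xi)=\hat A(\xi)\neq0$ for $\xi\neq0$. Second, the ``factorization'' you invoke, $e^{-i\langle\xi,x-y\rangle}(u(x)-u(y))=e^{-i\langle\xi,x\rangle}\bigl(e^{i\langle\xi,x\rangle}u(x)-e^{i\langle\xi,y\rangle}u(y)\bigr)$, is algebraically false: the right-hand side equals $u(x)-e^{-i\langle\xi,x-y\rangle}u(y)$, which differs from the left-hand side by $\bigl(1-e^{-i\langle\xi,x-y\rangle}\bigr)u(x)$. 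This second error exactly undoes the first, so you do land on the correct expression $\intop_{\Omega}dx\,\overline{w(x)}\intop_{\R^d}dy\,a(x-y)\mu(x,y)(w(x)-w(y))$, which is the paper's \eqref{e1.13} paired with $\overline{w(x)}$ — but as written, two of your three claimed equalities are false.

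The repair is immediate and removes both missteps at once: start from the correct \eqref{e1.12}, unfold, and note that $\widetilde a(0,x-y)u(x)-\widetilde a(\xi,x-y)u(y)$ becomes $\sum_{n\in\Z^d}a(x-y+n)\bigl(u(x)-e^{-i\langle\xi,x-y+n\rangle}u(y)\bigr)$, whence $\A(\xi,a,\mu)u(x)=\intop_{\R^d}a(x-y)\mu(x,y)e^{-i\langle\xi,x\rangle}\bigl(w(x)-w(y)\bigr)\,dy$ with no phase attached to $u(x)$ and no spurious identity needed. From there your symmetrization argument — split off $|w(x)-w(y)|^{2}$, swap $x\leftrightarrow y$ in the remaining term using $a(-z)=a(z)$ and $\mu(x,y)=\mu(y,x)$, and fold the $\R^d$-variable back to $\Omega$ using periodicity of $\mu$ and $u$ together with the recombination of the phases — goes through exactly as in the paper and yields \eqref{e1.11}.
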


\begin{proof}
By \eqref{e1.9} for any $x\in\Omega$ we have
\begin{equation}\label{e1.12}
\A(\xi,a,\mu)
u(x)=\int_{\Omega} \left( \widetilde{a}(0,x-y)u(x)-\widetilde{a}(\xi,x-y)u(y)\right) \mu(x,y)\, dy,\
\ \ u\in L_{2}(\Omega).
\end{equation}
Identifying a function  $u\in L_{2}(\Omega)$ with its $\Z^d$-periodic extension to the whole $\R^d$,
one can rearrange \eqref{e1.12} as follows:
\begin{multline}
\label{e1.13}
\A(\xi,a,\mu) u(x)=\sum_{n\in\Z^d}\int_{\Omega} \bigl(a(x-y+n)u(x)-a(x-y+n)
e^{-i \langle \xi, x-y+n \rangle}u(y)\bigr)\mu(x,y) \,dy=
\\
= \int_{\R^d}a(x-y)\mu(x,y) \bigl(u(x)-e^{-i \langle \xi, x-y \rangle}u(y)\bigr) \,dy=
\\
= \int_{\R^d}a(x-y)\mu(x,y)e^{-i \langle \xi, x\rangle} \bigl(e^{i \langle \xi, x\rangle}u(x)
-e^{i \langle \xi, y \rangle}u(y)\bigr) \,dy,\ \ x\in\Omega,\ \ u\in L_{2}(\Omega).
\end{multline}
Then the quadratic form of the operator $\A(\xi,a,\mu)$ takes the form
\begin{multline}\label{e1.14}
(\A(\xi,a,\mu)u,u)=\int_{\Omega}dx\int_{\R^d} \,dy\,
a(x-y)\mu(x,y)e^{-i \langle \xi, x\rangle}\overline{u(x)} \bigl( e^{i \langle \xi, x \rangle}u(x)
-e^{i \langle \xi, y \rangle}u(y) \bigr) =
\\
=\int_{\Omega}dx\int_{\R^d}dy\, a(x-y)\mu(x,y) \bigl|e^{i \langle \xi, x \rangle}u(x)
-e^{i \langle \xi, y \rangle}u(y) \bigr|^{2}+J[u],
\ \ u\in L_{2}(\Omega),
\end{multline}
where the functional $J[u]$ is given by
\begin{equation}
\label{e1.15}
J[u]:=\int_{\Omega}dx\int_{\R^d}dy\, a(x-y)\mu(x,y)e^{-i \langle \xi, y \rangle}\overline{u(y)}
\bigl(e^{i \langle \xi, x \rangle}u(x)-e^{i \langle \xi, y\rangle}u(y) \bigr),\ \ u\in L_{2}(\Omega).
\end{equation}
Exchanging the variables $x$ and $y$ and considering conditions (\ref{e1.1}), (\ref{e1.2}) we conclude that
\begin{multline}\label{e1.16}
J[u]=-\int_{\R^d}dx \int_{\Omega}dy\, a(x-y)\mu(x,y)e^{-i \langle\xi, x\rangle}\overline{u(x)}
\bigl(e^{i \langle \xi, x\rangle}u(x)- e^{i \langle \xi, y \rangle}u(y)\bigr)=
\\
=-\sum_{n\in\Z^d}\int_{\Omega}dx\int_{\Omega}dy\, a(x-y+n)\mu(x,y)e^{-i \langle \xi, x+n \rangle}\overline{u(x)}
\bigl(e^{i \langle \xi, x+n \rangle}u(x)-e^{i \langle \xi, y\rangle}u(y)\bigr)=
\\
= -\sum_{n\in\Z^d}\int_{\Omega}dx\int_{\Omega}dy\, a(x-y+n)\mu(x,y)e^{-i \langle \xi, x \rangle}\overline{u(x)}
\bigl(e^{i \langle\xi, x\rangle} u(x)-e^{i \langle \xi, y-n \rangle}u(y)\bigr)=
\\
=-\int_{\Omega}dx\, \overline{u(x)}\int_{\Omega}dy\, \mu(x,y) \bigl(\widetilde{a}(0,x-y)u(x)-
\widetilde{a}(\xi,x-y)u(y)\bigr)=
\\
=-(\A(\xi,a,\mu)u,u),\ \ u\in L_{2}(\Omega).
\end{multline}
Now \eqref{e1.11} follows from (\ref{e1.14}) and (\ref{e1.16}).
\end{proof}

\subsection{Estimates of the quadratic form of operator $\A(\xi,a,\mu)$}
As above it is convenient to consider the case $\mu=\mu_{0}\equiv 1$ separately. Recalling
the notation $\A_{0}(\xi,a):=\A(\xi,a,\mu_{0})$, $\B_{0}(\xi,a):=\B(\xi,a,\mu_{0})$, from the relations
in   \eqref{e1.2} and \eqref{e1.11} we derive the estimates
\begin{equation}
\label{e1.17}
\mu_{-}(\A_{0}(\xi,a)u,u)\le(\A(\xi,a,\mu)u,u)\le\mu_{+}(\A_{0}(\xi,a)u,u),\
\ u\in L_{2}(\Omega),\ \ \xi\in\widetilde\Omega.
\end{equation}
The operators $\A_{0}(\xi,a)$, $\xi\in\widetilde\Omega$, are diagonalized by means of  the unitary discrete Fourier transform  $F:L_{2}(\Omega)\to \ell_{2}(\Z^d)$ defined as follows:
\begin{gather*}
Fu(n)=\int_{\Omega}u(x)e^{-2\pi i \langle n, x \rangle}dx,\ \ n\in\Z^d,\ \ u\in L_{2}(\Omega);
\\
F^{*}v(x)=\sum_{n\in\Z^d}v_{n}e^{2\pi i \langle n, x\rangle},\ \ x\in\Omega,\ \ v=\{v_{n}\}_{n\in\Z^d}\in\ell_{2}(\Z^d).
\end{gather*}
We have
\begin{equation}
\label{e1.18}
\A_{0}(\xi,a)=F^{*} \bigl[\hat a(0)-\hat a(2\pi n+\xi) \bigr] F,\ \ \hat a(k):=\int_{\R^d}a(x)e^{-i \langle k, x \rangle}dx,\ \ k\in\R^d;
\end{equation}
here $[f(n)]$, $f(n)=\hat a(0)-\hat a(2\pi n+\xi)$,  stands for the operator of multiplication by the function
$f(n)$ in the space  $l_2(\Z^d)$.
Thus, the symbol of the operator  $\A_{0}(\xi,a)$ is a sequence
$\{\hat A_{n}(\xi)\}_{n\in\Z^d}$ with
\begin{equation*}
\hat A_{n}(\xi)=\hat A(\xi+2\pi n)=\hat a(0)-\hat a(2\pi n+\xi)=\int_{\R^d} \bigl(1-\cos( \langle z, \xi+2\pi n \rangle) \bigr)a(z)\,dz, \ \ n\in\Z^d,\ \ \xi\in\widetilde\Omega.
\end{equation*}
Here we have used the fact that $a(z)$ is an even function and therefore  the integral $\int_{\R^d}\sin ( \langle z, \xi+2\pi n \rangle )a(z)\,dz$ is equal to zero.

\begin{lemma}
\label{lemma1.2}
Under conditions \eqref{e1.1}--\eqref{e1.3} the point $\lambda_{0}=0$ is a simple eigenvalue of the operator $\A(0,a,\mu)$.
Moreover, $\operatorname{Ker}\A(0,a,\mu)=\mathcal{L}\{\mathbf{1}_{\Omega}\}$.
\end{lemma}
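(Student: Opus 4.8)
The plan is to use the quadratic form representation \eqref{e1.11} at $\xi = 0$, which reduces to \eqref{e1.6} on the cell: namely
\[
(\A(0,a,\mu)u,u) = \frac{1}{2}\intop_{\Omega}dx\intop_{\R^d}dy\, a(x-y)\mu(x,y)|u(x)-u(y)|^2,
\]
with $u$ extended $\Z^d$-periodically to $\R^d$. Since $a\ge 0$ and $\mu\ge\mu_->0$, this form is non-negative, so $u\in\Ker\A(0,a,\mu)$ if and only if the double integral vanishes, i.e. $a(x-y)(u(x)-u(y)) = 0$ for a.e.\ $(x,y)$. Constants obviously lie in the kernel, giving $\mathcal{L}\{\mathbf{1}_\Omega\}\subset\Ker\A(0,a,\mu)$; the content of the lemma is the reverse inclusion and simplicity. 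Since zero is the bottom of the spectrum (established just before the lemma), showing $\dim\Ker\A(0,a,\mu)=1$ simultaneously shows $0$ is a simple eigenvalue.

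First I would reduce to the case $\mu\equiv\mu_0$. Indeed, by \eqref{e1.17} (or directly from the form representation, since $\mu_-\le\mu\le\mu_+$) one has $\Ker\A(0,a,\mu)=\Ker\A_0(0,a)$: if $(\A(0,a,\mu)u,u)=0$ then $0\le\mu_-(\A_0(0,a)u,u)\le(\A(0,a,\mu)u,u)=0$, hence $u\in\Ker\A_0(0,a)$, and conversely. So it suffices to compute $\Ker\A_0(0,a)$. Now $\A_0(0,a)$ is diagonalized by the discrete Fourier transform $F$: by \eqref{e1.18}, its symbol at $\xi=0$ is the sequence $\widehat A_n(0)=\widehat a(0)-\widehat a(2\pi n)=\int_{\R^d}(1-\cos\langle z,2\pi n\rangle)a(z)\,dz$, $n\in\Z^d$. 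Thus $u=F^*v\in\Ker\A_0(0,a)$ iff $v_n=0$ for every $n$ with $\widehat A_n(0)>0$. Since $\widehat A_0(0)=0$, the mode $n=0$ (i.e.\ the constant function) is always in the kernel. The lemma will follow once I show $\widehat A_n(0)>0$ for all $n\in\Z^d\setminus\{0\}$.

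So the crux is the strict positivity $\int_{\R^d}(1-\cos\langle z,2\pi n\rangle)a(z)\,dz>0$ for $n\neq 0$. The integrand is non-negative (since $1-\cos\ge 0$), so the integral vanishes only if $a(z)=0$ for a.e.\ $z$ with $\langle z,2\pi n\rangle\notin 2\pi\Z$, i.e.\ $a$ is supported (up to null sets) on the union of hyperplanes $\{z:\langle z,n\rangle\in\Z\}$, a set of Lebesgue measure zero. This contradicts the assumption $\mathrm{mes}\{z:a(z)\neq 0\}>0$ in \eqref{e1.1}. Hence $\widehat A_n(0)>0$ for all $n\neq 0$, so $\Ker\A_0(0,a)=F^*(\mathcal{L}\{\delta_0\})=\mathcal{L}\{\mathbf 1_\Omega\}$, which is one-dimensional, completing the proof.

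The argument is essentially routine; the only point requiring a little care is the measure-theoretic step identifying when the non-negative integral $\int(1-\cos\langle z,2\pi n\rangle)a(z)\,dz$ vanishes, and observing that the "bad set" where $1-\cos\langle z,2\pi n\rangle=0$ is a countable union of hyperplanes and hence negligible — this is where hypothesis \eqref{e1.1} that $a$ does not vanish a.e.\ is used in an essential way. No moment conditions are needed here; continuity of $\widehat a$ (noted after \eqref{e1.1}) is not even required, since we only evaluate at the lattice points $2\pi n$.
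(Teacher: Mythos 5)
Your proof is correct and follows essentially the same route as the paper: reduction to $\mu_0\equiv 1$ via the two-sided form estimate \eqref{e1.17}, diagonalization by the discrete Fourier transform \eqref{e1.18}, and strict positivity of $\hat A(2\pi n)$ for $n\neq 0$ from the positive-measure support of $a$. Your measure-theoretic justification of that last positivity (the zero set of $1-\cos\langle z,2\pi n\rangle$ being a null union of hyperplanes) merely spells out what the paper leaves implicit.
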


\begin{proof}
Since the set of $z\in\mathbb R^d$ such that $a(z)>0$ has a positive Lebesgue  measure, the quantities
$\hat A(2\pi n)=\int_{\R^d}2\sin^{2}( \langle z, \pi n \rangle)a(z)\,dz$ are not equal to zero for
$n\in\Z^d\setminus\{0\}$; on the other hand, $\hat A(0)=0$. Therefore, $\operatorname{Ker}\A_{0}(0,a)=\mathcal{L}\{\mathbf{1}_{\Omega}\}$, and, by \eqref{e1.17},
$\operatorname{Ker}\A(0,a,\mu)=\mathcal{L}\{\mathbf{1}_{\Omega}\}$.
\end{proof}

In order to estimate  the quadratic form of the operator $\A_{0}(\xi,a)$ for all $\xi\in\widetilde\Omega$,
we carry out the detailed analysis of the symbol of the operator $\A_{0}(\xi,a)$, $\xi\in\widetilde\Omega$.
Under condition \eqref{e1.1} the function
\begin{equation}
\label{e1.17a}
\hat A(y)=\int_{\R^d} \bigl(1-\cos(\langle z, y\rangle) \bigr)a(z)\,dz=2\int_{\R^d}\sin^{2}\left(\frac{\langle z, y\rangle}{2}\right)a(z)\,dz,\ \ y\in\R^d,
\end{equation}
depends continuously on
$y\in\R^d$ and,  by the Riemann--Lebesgue lemma, converges to
\hbox{$\|a\|_{L_1}>0$} as $|y|\to\infty$.
Furthermore, it is easy to check that  $\hat A(y)\not=0$ if $y\not=0$.
Consequently, the estimate
\begin{equation}
\label{e1.19}
\min_{|y| \ge r}\hat A(y)=:\mathcal{A}_{r}(a)>0,\ \ r>0,
\end{equation}
holds. Under condition \eqref{e1.4} the function
\begin{equation}
\label{e1.18a}
\int_{\R^d}a(z) \langle z, y \rangle^{2}\, dz=:M_{a}(y)
\end{equation}
is continuous in  $y\in\R^d$ and not equal to zero if $y\not=0$. This yields the following inequality:
\begin{equation}
\label{e1.20}
\min_{|\theta|=1}M_{a}(\theta)=:\mathcal{M}(a)>0.
\end{equation}
Below, in Section  \ref{Append2}, under conditions \eqref{e1.1}--\eqref{e1.4} and an additional assumption $a\in L_{2}(\R^d)$,
we will obtain explicit expressions for $\mathcal{A}_{r}(a)$, $r>0$, and $\mathcal{M}(a)$ in terms of
$\|a\|_{L_1}$, $\|a\|_{L_2}$ and $M_{k}(a)$ with $k=1,2,3$.

\begin{lemma}
\label{lemma1.3}
Let conditions \eqref{e1.1}--\eqref{e1.4} be fulfilled. Then
\begin{equation}
\label{e1.21}
\hat A(\xi+2\pi n)\ge C(a)|\xi|^{2},\ \ \xi\in\widetilde\Omega,\ \ n\in\Z^d;
\end{equation}
\begin{equation}
\label{e1.22}
\hat A(\xi+2\pi n)\ge \mathcal{A}_{\pi}(a),\ \ \xi\in\widetilde\Omega,\ \ n\in\Z^d\setminus\{0\},
\end{equation}
where
\begin{equation}
\label{C(a)}
C(a):=\min \Bigl\{\frac{1}{4}\mathcal{M}(a),\mathcal{A}_{r(a)}(a)\pi^{-2}d^{-1},\mathcal{A}_{\pi}(a)\pi^{-2}d^{-1}
\Bigr\}>0,
\end{equation}
the quantities  $\mathcal{A}_{r}(a)$, $r>0$, and $\mathcal{M}(a)$ are defined in \eqref{e1.19}
and \eqref{e1.20}, respectively, and $r(a):=\frac{3}{2}\mathcal{M}(a)M_{3}^{-1}(a)$.
\end{lemma}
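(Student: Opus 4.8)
The plan is to reduce both inequalities to lower bounds for the function $\hat A(y)=2\int_{\R^d}\sin^2(\langle z,y\rangle/2)a(z)\,dz$ that are uniform with respect to the lattice shift $2\pi n$, using a dichotomy in the size of $\xi$. Throughout, write $y=\xi+2\pi n$ with $\xi\in\widetilde\Omega=[-\pi,\pi)^d$ and $n\in\Z^d$.

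\textbf{Inequality \eqref{e1.22}.} For $n\neq 0$ we have $|y|=|\xi+2\pi n|\ge \operatorname{dist}(2\pi n,\widetilde\Omega)$; since $|\xi_j|<\pi$ and the components of $2\pi n$ are integer multiples of $2\pi$, at least one coordinate of $y$ has absolute value $\ge\pi$, whence $|y|\ge\pi$. By the definition \eqref{e1.19} of $\mathcal A_\pi(a)=\min_{|y|\ge\pi}\hat A(y)>0$ we immediately get $\hat A(\xi+2\pi n)\ge\mathcal A_\pi(a)$. This disposes of \eqref{e1.22}, and in particular it also settles \eqref{e1.21} for $n\neq 0$, since $|\xi|\le\pi\sqrt d$ on $\widetilde\Omega$ gives $|\xi|^2\le\pi^2 d$ and hence $\mathcal A_\pi(a)\ge \mathcal A_\pi(a)\pi^{-2}d^{-1}|\xi|^2\ge C(a)|\xi|^2$.

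\textbf{Inequality \eqref{e1.21} for $n=0$.} Here $y=\xi$, and I would split according to whether $|\xi|$ is small or not, with threshold $r(a)=\tfrac32\mathcal M(a)M_3^{-1}(a)$. For $|\xi|\ge r(a)$: if moreover $|\xi|\ge\pi$ (this does occur for some $\xi\in\widetilde\Omega$) then $\hat A(\xi)\ge\mathcal A_\pi(a)\ge \mathcal A_\pi(a)\pi^{-2}d^{-1}|\xi|^2$ as above; if $r(a)\le|\xi|<\pi$, then $\hat A(\xi)\ge\mathcal A_{r(a)}(a)=\mathcal A_{r(a)}(a)\pi^{-2}\pi^2\ge \mathcal A_{r(a)}(a)\pi^{-2}d^{-1}|\xi|^2$ since $|\xi|^2<\pi^2\le\pi^2 d$. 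In both sub-cases the bound $C(a)|\xi|^2$ follows from the definition \eqref{C(a)} of $C(a)$. For $|\xi|<r(a)$: use the elementary inequality $\sin^2(t/2)\ge t^2/4 - t^4/48$ (or $\sin^2(t/2)=t^2/4-\theta$ with $0\le\theta\le t^4/48$), valid for all real $t$, applied with $t=\langle z,\xi\rangle$. Integrating against $a$ and recalling the notation \eqref{e1.18a}, this gives
\begin{equation*}
\hat A(\xi)=2\int_{\R^d}\sin^2\!\Bigl(\tfrac{\langle z,\xi\rangle}{2}\Bigr)a(z)\,dz
\ge \frac12 M_a(\xi)-\frac{1}{24}\int_{\R^d}\langle z,\xi\rangle^4 a(z)\,dz
\ge \frac12\mathcal M(a)|\xi|^2-\frac{1}{24}M_3(a)|\xi|^3,
\end{equation*}
where the last step uses $M_a(\xi)\ge\mathcal M(a)|\xi|^2$ from \eqref{e1.20} and $\langle z,\xi\rangle^4\le|z|^4|\xi|^4=|z|\cdot|z|^3|\xi|\cdot|\xi|^3\le$ — more carefully, I would bound $\int\langle z,\xi\rangle^4 a\,dz\le|\xi|^4\int|z|^4 a\,dz$; but $M_4$ is not assumed finite, so instead one should use $\langle z,\xi\rangle^4\le|\xi|^3|z|^3\cdot|\xi||z|$ is not uniform either. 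The correct route is to keep $\sin^2(t/2)\ge t^2/4-|t|^3/48$ via $|\sin(t/2)|\le|t|/2$ applied to one factor of the remainder, giving $\int\langle z,\xi\rangle^4 a\,dz$ replaced by $|\xi|^3\int|z|^3 a\,dz=M_3(a)|\xi|^3$ after one factor $|\langle z,\xi\rangle|\le|\xi||z|$. Then for $|\xi|<r(a)=\tfrac32\mathcal M(a)M_3^{-1}(a)$ the cubic term is at most $\tfrac14\mathcal M(a)|\xi|^2$, so $\hat A(\xi)\ge\tfrac14\mathcal M(a)|\xi|^2\ge C(a)|\xi|^2$.

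\textbf{Main obstacle.} The only delicate point is getting the right elementary inequality for $\sin^2(t/2)$ so that the remainder is controlled by the \emph{third} moment $M_3(a)$ (which is finite by \eqref{e1.4}) rather than the fourth. The clean statement is $\sin^2(t/2)\ge t^2/4 - |t|^3/48$ for all $t\in\R$: it is trivial when $|t|\ge 2$ because the left side is nonnegative while the right side is $\le 0$ for $|t|\ge\tfrac{4}{3}\cdot 3=$ — one checks $t^2/4\le|t|^3/48\iff|t|\ge 12$, so this simple form only works for $|t|\ge 12$; for $|t|\le 12$ one verifies it by calculus. (Alternatively, and more cheaply, restrict attention to $|\xi|<r(a)$ small, use $\sin^2(t/2)\ge t^2/4-t^4/48$ for $|t|\le$ bounded, and absorb $\int|z|^4 a$ over the region $|z|\le R$ plus a tail estimate — but this reintroduces $M_4$.) I would state the one-variable inequality as a small lemma and reference it. Everything else is bookkeeping over the finitely many regimes in $|\xi|$ and the case $n=0$ versus $n\ne 0$, matched against the three terms in the minimum defining $C(a)$ in \eqref{C(a)}.
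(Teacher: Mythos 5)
Your case analysis (the three regimes $n\ne 0$; $n=0$, $|\xi|\ge r(a)$; $n=0$, $|\xi|<r(a)$, matched against the three terms in the minimum defining $C(a)$) is exactly the paper's, and the first two regimes are handled correctly. The gap is in the third regime: the elementary inequality you rest it on, $\sin^2(t/2)\ge t^2/4-|t|^3/48$ for all $t\in\R$, is false. At $t=\pi$ the left side equals $1$ while the right side equals $\pi^2/4-\pi^3/48\approx 2.467-0.646=1.821$. Equivalently, the claim $1-\cos t\ge t^2/2-|t|^3/24$ fails on a whole range of $t$, so the step ``for $|t|\le 12$ one verifies it by calculus'' cannot be carried out. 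The correct universal constant comes from the third-order Taylor remainder of the cosine: $1-\cos t\ge t^2/2-|t|^3/6$, i.e.\ $\sin^2(t/2)\ge t^2/4-|t|^3/12$, for all $t\in\R$. This is precisely what the paper extracts, in integrated form, from the Hadamard lemma: $\hat A(y)=\frac12\langle (H\hat A)(0)y,y\rangle+R(y)$ with $|R(y)|\le\frac16 M_3(a)|y|^3$, the third derivatives of $\hat A$ being bounded by $M_3(a)$ thanks to \eqref{e1.4}.

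The repair is immediate and shows why $r(a)$ has the value it does. With the constant $\frac16$ one gets
\begin{equation*}
\hat A(\xi)\ \ge\ \tfrac12 M_a(\xi)-\tfrac16 M_3(a)|\xi|^3\ \ge\ \tfrac12\mathcal M(a)|\xi|^2-\tfrac16 M_3(a)|\xi|^3,
\end{equation*}
and the cubic term is at most $\tfrac14\mathcal M(a)|\xi|^2$ exactly when $|\xi|\le\frac32\mathcal M(a)M_3^{-1}(a)=r(a)$; this yields $\hat A(\xi)\ge\frac14\mathcal M(a)|\xi|^2$ on $|\xi|\le r(a)$, which is the first entry in \eqref{C(a)}. (With your constant $\frac1{24}$ the threshold would come out as $6\mathcal M(a)M_3^{-1}(a)$, which already signals a mismatch with the stated $r(a)$.) Your detour through $t^4/48$ and the fourth moment is rightly abandoned — $M_4(a)$ is not assumed finite and $\langle z,\xi\rangle^4\le|z|^3|\xi|^3\cdot|\langle z,\xi\rangle|$ cannot be closed without it — but the replacement inequality must carry the constant $\frac1{12}$, not $\frac1{48}$. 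Once that is fixed, your argument coincides with the paper's proof.
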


\begin{proof}
If conditions  \eqref{e1.1}, \eqref{e1.4} hold, then the function
$\hat A(y)$, $y\in\R^d$, is three times continuously differentiable, and
\begin{equation}
\label{e1.23}
\hat A(0)=0,\ \ \nabla\hat A(0)=0,\ \ \langle (H\hat A)(0) y,y\rangle :=\sum_{i,j=1}^{d}\partial_{i}\partial_{j}\hat
A(0)y_{i}y_{j}=\int_{\R^d} \langle z, y \rangle^{2} a(z)\,dz,\ \ y\in\R^d;
\end{equation}
\begin{equation}
\label{e1.24}
\partial_{i}\partial_{j}\partial_{k}\hat
A(y)=-\int_{\R^d}z_{i}z_{j}z_{k}\sin( \langle z, y \rangle) a(z)\,dz,\ \ y\in\R^d.
\end{equation}
Combining condition  (\ref{e1.4}) with relations (\ref{e1.20}) and
(\ref{e1.23}) we obtain the estimates
\begin{equation}\label{e1.25}
\mathcal{M}(a)|y|^{2}\le\langle (H\hat A)(0) y,y\rangle\le
M_{2}(a)|y|^{2},\ \ y\in\R^d.
\end{equation}
The following inequality is a consequence of  \eqref{e1.4} and \eqref{e1.24}:
\begin{equation}\label{e1.26}
\left|\sum_{i=1}^{d}\sum_{j=1}^{d}\sum_{k=1}^{d}y_{i}y_{j}y_{k}\partial_{i}\partial_{j}\partial_{k}\hat
A(y_{0})\right|\le M_{3}(a)|y|^{3},\ \ y,y_{0}\in\R^d.
\end{equation}

By the Hadamard lemma (see Lemma  \ref{Adamar} below)  and due to relations  \eqref{e1.23} and \eqref{e1.26},  we have
\begin{equation}
\label{e1.27}
\hat A(y)=\frac{1}{2} \langle (H\hat A)(0)y,y \rangle +R(y),\ \
|R(y)|\le\frac{1}{6}M_{3}(a)|y|^{3},\ \ y\in\R^d.
\end{equation}
The second relation in \eqref{e1.27} implies that $|R(y)|\le\frac{1}{4}\mathcal{M}(a)|y|^{2}$ for
$|y|\le r(a):=\frac{3}{2}\mathcal{M}(a)M_{3}^{-1}(a)$.
Therefore, considering \eqref{e1.25} and \eqref{e1.27} we obtain the following lower bound:
\begin{equation}\label{e1.28}
\hat A(y)\ge\frac{1}{4}\mathcal{M}(a)|y|^{2},\ \ |y|\le r(a).
\end{equation}
Observe that
\begin{equation}\label{e1.29}
\begin{gathered}
\xi\in\widetilde\Omega\Longrightarrow |\xi|\le\pi\sqrt{d};
\\
\xi\in\widetilde\Omega,\ \ n\in\Z^d \Longrightarrow |\xi+2\pi n|\ge |\xi|;
\\
\xi\in\widetilde\Omega,\ \ n\in\Z^d\setminus\{0\}\Longrightarrow |\xi+2\pi n|\ge\pi.
\end{gathered}
\end{equation}
Combining these relations with  \eqref{e1.19} and  \eqref{e1.28} yields
\begin{equation*}
\begin{array}{l}
\hat A(\xi+2\pi n)\ge \frac{1}{4}\mathcal{M}(a)|\xi|^{2},\ \ n=0,\ \ \xi\in\widetilde\Omega,\ \ |\xi|\le r(a);
\\
\hat A(\xi+2\pi n)\ge\mathcal{A}_{r(a)}(a)\ge\mathcal{A}_{r(a)}(a)\pi^{-2}d^{-1}|\xi|^{2},\
\ n=0,\ \ \xi\in\widetilde\Omega,\ \ |\xi|\ge r(a);
\\
\hat A(\xi+2\pi n)\ge\mathcal{A}_{\pi}(a)\ge\mathcal{A}_{\pi}(a)\pi^{-2}d^{-1}|\xi|^{2},\
\ n\in\Z^d\setminus\{0\},\ \ \xi\in\widetilde\Omega.
\end{array}
\end{equation*}
Finally, we conclude that inequalities \eqref{e1.21} and  \eqref{e1.22} hold true.
\end{proof}

The following statement is a consequence of relations \eqref{e1.17}, \eqref{e1.18}
and Lemmata \ref{lemma1.2},  \ref{lemma1.3}:

\begin{proposition}
\label{prop1.4}
Assume that conditions  \eqref{e1.1}--\eqref{e1.4} hold.  Then we have
\begin{equation}\label{e1.30}
(\A(\xi,a,\mu)u,u)\ge \mu_{-}C(a)|\xi|^{2}\|u\|_{L_2(\Omega)}^{2},\ \ u\in
L_{2}(\Omega),\ \ \xi\in\widetilde\Omega;
\end{equation}
\begin{equation}\label{e1.31}
(\A(\xi,a,\mu)u,u)\ge \mu_{-}\mathcal{A}_{\pi}(a)\|u\|^{2}_{L_2(\Omega)},\ \
u\in L_{2}(\Omega)\ominus\mathcal{L}\{\mathbf{1}_{\Omega}\},\ \
\xi\in\widetilde\Omega.
\end{equation}
\end{proposition}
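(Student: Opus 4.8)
The plan is to derive both inequalities by combining the two-sided comparison \eqref{e1.17} with the spectral diagonalization \eqref{e1.18} of the model operator $\A_0(\xi,a)$ and the pointwise symbol estimates \eqref{e1.21}--\eqref{e1.22} from Lemma \ref{lemma1.3}. First I would invoke \eqref{e1.17} to reduce matters to $\A_0(\xi,a)$: for every $u\in L_2(\Omega)$ and every $\xi\in\widetilde\Omega$ one has $(\A(\xi,a,\mu)u,u)\ge\mu_-(\A_0(\xi,a)u,u)$, so it suffices to bound the right-hand side from below. Passing to the discrete Fourier image $v=Fu=\{v_n\}_{n\in\Z^d}\in\ell_{2}(\Z^d)$ and using \eqref{e1.18} together with the unitarity of $F$ (Parseval's identity), I would rewrite
\[
(\A_0(\xi,a)u,u)=\sum_{n\in\Z^d}\bigl(\hat a(0)-\hat a(\xi+2\pi n)\bigr)|v_n|^2=\sum_{n\in\Z^d}\hat A(\xi+2\pi n)\,|v_n|^2,\qquad \sum_{n\in\Z^d}|v_n|^2=\|u\|_{L_2(\Omega)}^2 .
\]

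For \eqref{e1.30} I would simply insert the bound \eqref{e1.21}, which gives $\hat A(\xi+2\pi n)\ge C(a)|\xi|^2$ uniformly in $n\in\Z^d$ and $\xi\in\widetilde\Omega$; summing term by term and applying Parseval yields $(\A_0(\xi,a)u,u)\ge C(a)|\xi|^2\|u\|_{L_2(\Omega)}^2$, and multiplication by $\mu_-$ completes this part. For \eqref{e1.31} the key observation, which is where Lemma \ref{lemma1.2} enters, is that $u\in L_2(\Omega)\ominus\mathcal{L}\{\mathbf{1}_{\Omega}\}$ is equivalent to $v_0=Fu(0)=\int_\Omega u(x)\,dx=0$. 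Hence only the indices $n\in\Z^d\setminus\{0\}$ contribute to the series, and on these I would use the second estimate \eqref{e1.22}, $\hat A(\xi+2\pi n)\ge\mathcal{A}_{\pi}(a)$, to get $(\A_0(\xi,a)u,u)\ge\mathcal{A}_{\pi}(a)\sum_{n\ne 0}|v_n|^2=\mathcal{A}_{\pi}(a)\|u\|_{L_2(\Omega)}^2$; multiplying by $\mu_-$ again finishes the argument.

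There is no genuine obstacle left at this stage: the proposition is essentially a bookkeeping consequence of the material already assembled, all the analytic content having been absorbed into Lemma \ref{lemma1.3}. The only point deserving a word of care is the identification of $\mathcal{L}\{\mathbf{1}_{\Omega}\}^{\perp}$ with $\{v\in\ell_{2}(\Z^d):v_0=0\}$ under $F$; this also makes the two bounds mutually consistent, since on that subspace $|\xi|^2\le\pi^2 d$ while, by the very definition \eqref{C(a)}, $C(a)\le\mathcal{A}_{\pi}(a)\pi^{-2}d^{-1}$, so \eqref{e1.31} is indeed the sharper estimate there.
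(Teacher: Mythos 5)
Your argument is correct and is exactly the route the paper intends: the proposition is stated there as an immediate consequence of \eqref{e1.17}, the diagonalization \eqref{e1.18}, and Lemmata \ref{lemma1.2}--\ref{lemma1.3}, and you have simply written out the Parseval bookkeeping that the authors leave implicit. The only cosmetic remark is that the identification of $\mathcal{L}\{\mathbf{1}_{\Omega}\}^{\perp}$ with $\{v\in\ell_2(\Z^d):v_0=0\}$ follows directly from $v_0=(u,\mathbf{1}_{\Omega})$ rather than from Lemma \ref{lemma1.2}, but this does not affect the validity of the proof.
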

We will also need
\begin{proposition}
\label{prop1.5}
Let conditions  \eqref{e1.1}--\eqref{e1.4} be fulfilled. Then the following estimate holds:
\begin{equation}
\label{e1.32}
\|\A(\xi,a,\mu)-\A(\eta,a,\mu)\|\le\mu_{+}2^{d}M_{1}(a)|\xi-\eta|,\ \ \xi,\eta\in\widetilde\Omega.
\end{equation}
\end{proposition}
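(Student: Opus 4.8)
The plan is to reduce \eqref{e1.32} to a Schur-type bound on the difference of the operators $\B(\cdot,a,\mu)$, and then to a pointwise estimate of the kernel. Since the multiplier $p(x;a,\mu)$ does not depend on $\xi$, one has $\A(\xi,a,\mu)-\A(\eta,a,\mu)=\B(\eta,a,\mu)-\B(\xi,a,\mu)$, so it suffices to bound $\|\B(\xi,a,\mu)-\B(\eta,a,\mu)\|$. This operator is the integral operator in $L_{2}(\Omega)$ with kernel $\bigl(\widetilde a(\xi,x-y)-\widetilde a(\eta,x-y)\bigr)\mu(x,y)$, and its ``difference part'' $z\mapsto\widetilde a(\xi,z)-\widetilde a(\eta,z)$ is $\Z^{d}$-periodic, being the difference of two $\Z^{d}$-periodic functions. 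Hence, arguing exactly as in the estimate for $\|\B(\xi,a,\mu)\|$ obtained above, Corollary \ref{corollary4.2} applies and gives
\[
\|\B(\xi,a,\mu)-\B(\eta,a,\mu)\|\le\mu_{+}\int_{[-1,1]^{d}}\bigl|\widetilde a(\xi,z)-\widetilde a(\eta,z)\bigr|\,dz .
\]

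Next I would estimate the integrand pointwise. From $\widetilde a(\xi,z)=\sum_{n\in\Z^{d}}a(z+n)e^{-i\langle\xi,z+n\rangle}$ and the elementary inequality $|e^{-it}-1|\le|t|$ (applied with $t=\langle\xi-\eta,z+n\rangle$ and then Cauchy--Schwarz) one gets, for a.e. $z\in\R^{d}$,
\[
\bigl|\widetilde a(\xi,z)-\widetilde a(\eta,z)\bigr|\le|\xi-\eta|\sum_{n\in\Z^{d}}a(z+n)\,|z+n| .
\]
It remains to integrate this over $[-1,1]^{d}$. The function $z\mapsto\sum_{n\in\Z^{d}}a(z+n)|z+n|$ is $\Z^{d}$-periodic, and by Tonelli's theorem its integral over the cell $\Omega$ equals $\int_{\R^{d}}a(w)|w|\,dw=M_{1}(a)$; since $[-1,1]^{d}$ is, up to a null set, the union of the $2^{d}$ translates $\Omega+m$ with $m\in\{-1,0\}^{d}$, the integral of this periodic function over $[-1,1]^{d}$ equals $2^{d}M_{1}(a)$. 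Combining the two displays yields $\|\B(\xi,a,\mu)-\B(\eta,a,\mu)\|\le\mu_{+}2^{d}M_{1}(a)|\xi-\eta|$, which is \eqref{e1.32}.

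This argument is routine once Corollary \ref{corollary4.2} is available; the only points I would be careful about are using the correct integration domain $[-1,1]^{d}$ in the Schur bound (for $x,y\in\Omega$ the difference $x-y$ runs over $(-1,1)^{d}$, so values of $\widetilde a$ outside $[-1,1]^{d}$ play no role), the periodicity bookkeeping in the unfolding step, and checking that all quantities are finite, which is guaranteed by the assumption $M_{1}(a)<\infty$ in \eqref{e1.4}. I do not expect a genuine obstacle here.
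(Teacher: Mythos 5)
Your proposal is correct and follows essentially the same route as the paper: identify the difference as an integral operator with kernel $\bigl(\widetilde a(\xi,x-y)-\widetilde a(\eta,x-y)\bigr)\mu(x,y)$, apply the Schur test over $[-1,1]^{d}$, and use $|e^{it}-1|\le|t|$ together with the unfolding $\int_{\Omega}\sum_{n}a(z+n)|z+n|\,dz=M_{1}(a)$ and the factor $2^{d}$. The only (immaterial) difference is the order of operations: the paper first reduces $\int_{[-1,1]^{d}}$ to $2^{d}\int_{\Omega}$ and unfolds to $\int_{\R^{d}}a(z)\bigl|e^{i\langle\xi-\eta,z\rangle}-1\bigr|\,dz$ before applying the elementary inequality, whereas you estimate the kernel pointwise first and integrate afterwards.
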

\begin{proof}
Since $\A(\xi,a,\mu)-\A(\eta,a,\mu)$ is an integral operator in $L_{2}(\Omega)$ with the kernel
\hbox{$\bigl(\widetilde a(\eta,x-y)-\widetilde a(\xi,x-y)\bigr)\mu(x,y)$}, $x,y\in\Omega$, then, by the Schur test
(see Lemma \ref{lemma4.1}), we obtain
\begin{multline*}
\|\A(\xi,a,\mu)-\A(\eta,a,\mu)\|\le\mu_{+}\intop_{[-1,1]^{d}} \bigl|\widetilde a(\eta,z)-\widetilde a(\xi,z)\bigr|\,dz
=\mu_{+}2^{d}\int_{\Omega} \bigl|\widetilde a(\eta,z)-\widetilde a(\xi,z)\bigr| \, dz\le
\\
\le\mu_{+}2^{d}\int_{\R^d}a(z) \bigl|e^{-i \langle \eta, z\rangle}-e^{-i \langle \xi, z\rangle} \bigr|\,dz
=\mu_{+}2^{d} \int_{\R^d}a(z) \bigl| e^{i \langle \xi-\eta, z \rangle}-1 \bigr|\,dz\le
\\
\le\mu_{+}2^{d} |\xi-\eta| \int_{\R^d}a(z) |z| \,dz=\mu_{+}2^{d}M_{1}(a)|\xi-\eta|,\
\ \xi,\eta\in\widetilde\Omega.
\end{multline*}
\vskip-3mm \end{proof}

\section{Threshold characteristics of nonlocal Schr\"odinger operator \\  in the vicinity of the spectrum bottom}

\subsection{The spectral edge of the operator $\A(\xi,a,\mu)$\label{subsec_21}}
According to Lemma \ref{lemma1.2}, under conditions  (\ref{e1.1})--(\ref{e1.3}) the lower edge of the spectrum of operator
$\A(0,a,\mu)$ which is defined as $\min\,\{\lambda\in\mathbb R\,:\,\lambda\in \sigma(\A(0,a,\mu))\}$,  is a simple eigenvalue  $\lambda_{0}=0$.
Letting $d_{0}(a,\mu)=\mathrm{dist}(\lambda_0\,,\, \sigma(\A(0,a,\mu))\setminus\{\lambda_0\})$ and assuming that
conditions  \eqref{e1.1}--\eqref{e1.4} are fulfilled,    we derive from estimate
 (\ref{e1.31}) the following inequality:
\begin{equation}
\label{d0}
d_{0}(a,\mu)\ge\mu_{-}\mathcal{A}_{\pi}(a).
\end{equation}
For brevity we use the short notation $d_0:=d_{0}(a,\mu)$. By the perturbation theory arguments,  we deduce from estimate
\eqref{e1.32} that  for all $\xi\in\widetilde\Omega$ such that
$|\xi|\le3^{-1}2^{-d}M_{1}(a)^{-1}\mu_{+}^{-1}\mu_{-}\mathcal{A}_{\pi}(a)$
the relations
\begin{equation*}
\operatorname{rank}E_{\A(\xi,a,\mu)}[0,d_{0}/3]=1,\ \
\sigma(\A(\xi,a,\mu))\cap(d_{0}/3;2d_{0}/3)=\varnothing
\end{equation*}
hold.
Denote $\delta_{0}(a,\mu)=\min\{1,3^{-1}2^{-d}M_{1}(a)^{-1}\mu_{+}^{-1}\mu_{-}\mathcal{A}_{\pi}(a)\}$. Observe that
$\delta_{0}(a,\mu)>0$.
Combining the above statements we arrive at
\begin{proposition}
\label{prop2.1}
Let conditions  \eqref{e1.1}--\eqref{e1.4} be satisfied. Then for all $|\xi|$ with
$|\xi|\le\delta_{0}(a,\mu)$ the spectrum of the operator  $\A(\xi) = \A(\xi,a,\mu)$ in the
interval $[0,d_{0}/3]$ consists of just one simple eigenvalue,  while the interval  $(d_{0}/3,2d_{0}/3)$
belongs to the resolvent set  of $\A(\xi,a,\mu)$.
\end{proposition}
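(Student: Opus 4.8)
The plan is to obtain Proposition \ref{prop2.1} as a straightforward application of analytic (here, merely Lipschitz) perturbation theory, using as inputs three facts already established in the excerpt: (i) by Lemma \ref{lemma1.2} the point $\lambda_0 = 0$ is a simple eigenvalue of $\A(0) = \A(0,a,\mu)$; (ii) by estimate \eqref{e1.31} (equivalently \eqref{d0}) the rest of the spectrum of $\A(0)$ lies in $[\mu_-\mathcal{A}_\pi(a),+\infty)$, so that the spectral gap satisfies $d_0 = d_0(a,\mu)\ge \mu_-\mathcal{A}_\pi(a)>0$; and (iii) by Proposition \ref{prop1.5} the family $\xi\mapsto\A(\xi,a,\mu)$ is Lipschitz in operator norm with constant $\mu_+2^d M_1(a)$. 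The idea is that moving $\xi$ away from $0$ perturbs $\A(\xi)$ by at most $\mu_+2^d M_1(a)|\xi|$ in norm, and as long as this perturbation is small compared to $d_0$, neither the one eigenvalue near $0$ nor the spectral gap above it can be destroyed.

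First I would fix the two reference points $d_0/3$ and $2d_0/3$ in the gap $(0,d_0)$ of $\A(0)$. The distance from the spectrum $\sigma(\A(0))$ to each of the circles (in $\C$) of radius $d_0/3$ around $\lambda_0=0$ is at least $d_0/3$; more precisely, $\operatorname{dist}\big(\{\operatorname{Re}\zeta = d_0/3\},\sigma(\A(0))\big)\ge d_0/3$ and likewise for the vertical line $\operatorname{Re}\zeta = 2d_0/3$, and similarly the contour $|\zeta| = d_0/3$ stays at distance $\ge 2d_0/3$ from $\sigma(\A(0))\setminus\{0\}$ and distance $d_0/3$ from $0$ only on that circle itself. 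Then for $|\xi|\le \delta_0(a,\mu)$ with $\delta_0(a,\mu) = \min\{1, 3^{-1}2^{-d}M_1(a)^{-1}\mu_+^{-1}\mu_-\mathcal{A}_\pi(a)\}$ one has $\|\A(\xi)-\A(0)\|\le \mu_+2^d M_1(a)|\xi|\le \tfrac13\mu_-\mathcal{A}_\pi(a)\le d_0/3$, using \eqref{d0}. Consequently the resolvent $(\A(0)-\zeta I)^{-1}$, whose norm on the relevant contours is at most $3/d_0$, satisfies $\|\A(\xi)-\A(0)\|\,\|(\A(0)-\zeta I)^{-1}\| < 1$ there (in fact $\le 1$, so one takes $|\xi|$ strictly below the threshold, or inserts a harmless factor; the cleanest route is to note the perturbation is $\le d_0/3$ while the resolvent norm is exactly $3/d_0$ on the circle $|\zeta-\lambda| = d_0/3$ centered at a point $\lambda$ of the spectrum, but on our contours the resolvent is strictly smaller — I would just shrink $d_0/3$ to, say, $d_0/3$ versus distance $d_0/3$ and use a strict inequality by an arbitrarily small margin, which the definition of $\delta_0$ with the "$\min$" already accommodates).

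Given that, the Neumann series shows $\zeta\in\rho(\A(\xi))$ for every $\zeta$ on the contour $\Gamma := \{|\zeta| = d_0/3\}$ and for every $\zeta$ with $\operatorname{Re}\zeta\in(d_0/3,2d_0/3)$, which already proves the second assertion: $(d_0/3,2d_0/3)\subset\rho(\A(\xi,a,\mu))$. For the first assertion I would introduce the Riesz spectral projection $E_{\A(\xi)} := -\frac{1}{2\pi i}\oint_{\Gamma}(\A(\xi)-\zeta I)^{-1}\,d\zeta$, observe that $E_{\A(\xi)}$ depends norm-continuously on $\xi$ for $|\xi|\le\delta_0(a,\mu)$ (again from the Lipschitz bound and the resolvent identity, since $\Gamma$ stays in the common resolvent set), hence $\operatorname{rank}E_{\A(\xi)} = \operatorname{rank}E_{\A(0)} = \dim\operatorname{Ker}\A(0) = 1$ is constant — a connected $\mathbb{Z}$-valued continuous function. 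Because $E_{\A(\xi)}$ is the spectral projection of the self-adjoint operator $\A(\xi)$ for the part of its spectrum enclosed by $\Gamma$, i.e.\ the part lying in $(-\infty,d_0/3)$, and since $\A(\xi)\ge 0$ this part lies in $[0,d_0/3]$, we conclude that $\sigma(\A(\xi,a,\mu))\cap[0,d_0/3]$ consists of exactly one simple eigenvalue. Together with $(d_0/3,2d_0/3)\subset\rho(\A(\xi))$ this is precisely the claim; note $E_{\A(\xi)} = E_{\A(\xi,a,\mu)}[0,d_0/3]$ in the notation used above.

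The main obstacle is essentially bookkeeping rather than mathematics: one must make sure the numerical threshold in the definition of $\delta_0(a,\mu)$ is chosen so that the perturbation is \emph{strictly} dominated on the chosen contours (so the Neumann series converges and the projection rank is genuinely locally constant), and one must be slightly careful that the constant $C(a)$, resp.\ $\mathcal{A}_\pi(a)$, entering $d_0$ via \eqref{d0} is indeed a \emph{lower} bound for the gap, not for the first eigenvalue — but that is exactly what \eqref{e1.31} gives, since on $L_2(\Omega)\ominus\mathcal{L}\{\mathbf 1_\Omega\}$ the quadratic form of $\A(0,a,\mu)$ is bounded below by $\mu_-\mathcal{A}_\pi(a)$ while $\operatorname{Ker}\A(0,a,\mu) = \mathcal{L}\{\mathbf 1_\Omega\}$ by Lemma \ref{lemma1.2}, whence $\sigma(\A(0,a,\mu))\setminus\{0\}\subset[\mu_-\mathcal{A}_\pi(a),\infty)$ and $d_0\ge\mu_-\mathcal{A}_\pi(a)$. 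No further input is needed; the proof is short.
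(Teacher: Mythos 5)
Your argument is correct and is essentially what the paper compresses into the phrase ``by the perturbation theory arguments'' in the paragraph preceding Proposition~\ref{prop2.1}: the gap bound \eqref{d0} coming from \eqref{e1.31} and Lemma~\ref{lemma1.2}, the Lipschitz bound \eqref{e1.32}, and the norm-continuity (hence constancy of rank) of the Riesz projection. The only adjustment worth making is to run the Riesz contour through the midpoint $d_0/2$ of the guaranteed resolvent interval $(d_0/3,2d_0/3)$ --- exactly the contour the paper uses in \eqref{e2.6} --- rather than through $d_0/3$, which is the point where the perturbed eigenvalue could in principle sit; this removes the strictness issue you flag at the end without any change to the rest of the argument.
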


Under conditions \eqref{e1.1}--\eqref{e1.4} the operator-function $\A(\xi,a,\mu)$ is three times continuously differentiable
in the variable $\xi$ with respect to the operator norm  of $\mathcal{B}(L_{2}(\Omega))$, moreover, the following relations hold
\begin{multline}
\label{e2.1}
\partial^{\alpha}\A(\xi,a,\mu)u(x)=\int_{\Omega}\widetilde
a_{\alpha}(\xi,x-y)\mu(x,y)u(y)\, dy,\ \ x\in\Omega,\ \ u\in L_{2}(\Omega);
\\
\widetilde a_{\alpha}(\xi,z)=(-1)(-i)^{|\alpha|}\sum_{n\in\Z^d}(z+n)^{\alpha}a(z+n)e^{-i \langle \xi, z+n \rangle},\
\ \alpha\in\Z^d_{+},\ \ |\alpha|\le 3.
\end{multline}
By the Hadamard lemma (see Lemma  \ref{Adamar} below), the operator-function  $\A(\cdot,a,\mu)$ admits an expansion
\begin{multline}\label{e2.2}
\A(\xi,a,\mu)=\A(0,a,\mu)+\sum_{i=1}^{d}\partial_{i}\A(0,a,\mu)\xi_{i}+\\+
\frac{1}{2}\sum_{i,j=1}^{d}\partial_{i}\partial_{j}\A(0,a,\mu)\xi_{i}\xi_{j}+\mathbb{K}(\xi,a,\mu),\
\ |\xi|\le\delta_{0}(a,\mu),
\end{multline}
with
$$
\mathbb{K}(\xi,a,\mu) =
\sum_{i=1}^{d}\sum_{j=1}^{d}\sum_{k=1}^{d} \xi_{i} \xi_{j} \xi_{k}\intop_{0}^{1}ds_{1}s_{1}^{2}\intop_{0}^{1}ds_{2}\,s_{2}\intop_{0}^{1}ds_{3} \,\partial_{i}\partial_{j}\partial_{k} {\mathbb A}(s_{1}s_{2}s_{3}\xi).
$$
Applying the Schur test one can easily show that
\begin{equation}\label{e2.3}
\Big\|\sum_{i=1}^{d}\partial_{i}\A(0,a,\mu)\xi_{i}\Big\|\le|\xi|\mu_{+} M_{1}(a),\
\ |\xi|\le\delta_{0}(a,\mu);
\end{equation}
\begin{equation}\label{e2.4}
\Big\|\frac{1}{2}\sum_{i,j=1}^{d}\partial_{i}\partial_{j}\A(0,a,\mu)\xi_{i}\xi_{j}\Big\|\le|\xi|^{2}\frac{1}{2}\mu_{+} M_{2}(a),\
\ |\xi|\le\delta_{0}(a,\mu);
\end{equation}
\begin{equation}\label{e2.5}
\|\mathbb{K}(\xi,a,\mu)\|\le|\xi|^{3}\frac{1}{6}\mu_{+} M_{3}(a),\
\ |\xi|\le\delta_{0}(a,\mu).
\end{equation}
We introduce some more notation: $F(\xi)$ denotes the spectral projector of the operator $\A(\xi,a,\mu)$ that corresponds
to the interval $[0,d_{0}/3]$, the symbol $\mathfrak{N}$ stands for the kernel $\operatorname{Ker}\A(0,a,\mu)=\mathcal{L}\{\mathbf{1}_{\Omega}\}$.  We then denote by  $P$ the orthogonal projector on $\mathfrak{N}$ and set   $P^{\perp}:=I-P$.
Let $\Gamma$ be a contour on the complex plane that is equidistant to the interval $[0,d_{0}/3]$ and passes through the middle
point of the interval $(d_{0}/3,2d_{0}/3)$.
By the Riesz formula that reads
\begin{equation}
\label{e2.6}
F(\xi)=\frac{-1}{2\pi i}\ointop_{\Gamma}(\A(\xi,a,\mu)-\zeta I)^{-1}d\zeta,\ \ |\xi|\le\delta_{0}(a,\mu),
\end{equation}
the operator-functions $F(\xi)$ and  $\A(\xi,a,\mu)F(\xi)$, $|\xi|\le\delta_{0}(a,\mu)$, are three times continuously differentiable. Therefore, we have
\begin{equation}
\label{e2.7}
F(\xi)=P+\sum_{i=1}^{d}F_{i}\xi_{i}+\frac{1}{2}\sum_{i,j=1}^{d}F_{ij}\xi_{i}\xi_{j}+O(|\xi|^{3}),\
\ |\xi|\to 0;
\end{equation}
\begin{equation}\label{e2.8}
\A(\xi,a,\mu)F(\xi)=G_{0}+\sum_{i=1}^{d}G_{i}\xi_{i}+\frac{1}{2}\sum_{i,j=1}^{d}G_{ij}\xi_{i}\xi_{j}+O(|\xi|^{3}),\
\ |\xi|\to 0.
\end{equation}
The main goal of this section is to determine the coefficients of expansions  \eqref{e2.7}, \eqref{e2.8} and to estimate the remainders.

\subsection{Computing the coefficients of the expansions of $F(\xi)$ and $\A(\xi)F(\xi)$}
\label{subsec_22}
From \eqref{e2.2}, \eqref{e2.5} and  \eqref{e2.7} we derive the relations
\begin{equation}\label{e2.9}
G_{0}=\A(0,a,\mu)P=0;
\end{equation}
\begin{equation}\label{e2.10}
G_{i}=\partial_{i}\A(0,a,\mu)P+\A(0,a,\mu)F_{i},\ \ i=1,\dots,d;
\end{equation}
\begin{equation}\label{e2.11}
G_{ij}=\partial_{i}\partial_{j}\A(0,a,\mu)P+\partial_{i}\A(0,a,\mu)F_{j}+\partial_{j}\A(0,a,\mu)F_{i}+\A(0,a,\mu)F_{ij},\
\ i,j=1,\dots,d.
\end{equation}
Since the operators $\A(\xi,a,\mu)F(\xi)$ are self-adjoint, the second relation implies that the operators $G_{i}$, $i=1,\dots,d$, are
also self-adjoint. Consequently, taking into account the inequality  $(\A(\xi,a,\mu)F(\xi)u,u)\ge 0$,
$u\in L_{2}(\Omega)$, $|\xi|\le\delta_{0}(a,\mu)$,  the equality $\A(0,a,\mu)F(0)=0$
and the definition of $G_i$ in \eqref{e2.10}, we conclude that
\begin{equation}\label{e2.12}
G_{i}=\partial_{i}\A(0,a,\mu)P+\A(0,a,\mu)F_{i}=0,\ \ i=1,\dots,d.
\end{equation}
Combining \eqref{e2.8}, \eqref{e2.9} and  \eqref{e2.12} yields the following asymptotic representation:
\begin{equation}
\label{e2.13}
\A(\xi,a,\mu)F(\xi)=\frac{1}{2}\sum_{i,j=1}^{d}G_{ij}\xi_{i}\xi_{j}+O(|\xi|^{3}),\
\ |\xi|\to 0.
\end{equation}
Since $F(\xi)$ is a spectral projector, we have $F^{2}(\xi)=F(\xi)$ and thus,  recalling that $P$ is the orthogonal projection on $\mathcal{L}\{\mathbf{1}_{\Omega}\}$, we obtain
\begin{equation}
\label{e2.14}
F_{i}P+PF_{i}=F_{i},\ \ i=1,\dots,d.
\end{equation}
The latter relation implies that  $P^{\perp}F_{i}P^{\perp}=0$ and  $PF_{i}P=2PF_{i}P$.
Therefore,  $PF_{i}P=0$ and, by \eqref{e2.14},
\begin{equation}
\label{e2.15}
F_{i}=P^{\perp}F_{i}P+PF_{i}P^{\perp},\ \ i=1,\dots,d.
\end{equation}
From the latter inequality and  \eqref{e2.12} we deduce that
\begin{equation*}
\partial_{i}\A(0,a,\mu)P=-\A(0,a,\mu)F_{i}=-P^{\perp}\A(0,a,\mu)P^{\perp}F_{i}=-P^{\perp}\A(0,a,\mu)P^{\perp}F_{i}P,\
\ i=1,\dots,d.
\end{equation*}
Finally, we arrive at the following relations:
\begin{equation}\label{e2.16}
\partial_{i}\A(0,a,\mu)P=P^{\perp}\partial_{i}\A(0,a,\mu)P,
\end{equation}
\begin{equation}\label{e2.17}
P^{\perp}F_{i}P=-P^{\perp}\A(0,a,\mu)^{-1}P^{\perp}\partial_{i}\A(0,a,\mu)P,\
\ i=1,\dots,d;
\end{equation}
here the notation $\A(0,a,\mu)^{-1}$ is used for  the operator that is inverse to the operator
$\A(0,a,\mu)\vert_{{\mathfrak N}^\perp}\,:\,{\mathfrak N}^\perp \mapsto {\mathfrak N}^\perp$. Notice
that this operator is well defined.

Next, considering \eqref{e2.13}, the asymptotic formula $F(\xi)=P+O(|\xi|)$ and the properties of the operator
$\A(\xi)F(\xi)=(\A(\xi)F(\xi))F(\xi)=F(\xi)(\A(\xi)F(\xi))$ one has $G_{ij}=PG_{ij}P$, $i,j=1,\dots,d$.
Consequently, in view of \eqref{e2.15}, relation \eqref{e2.11} can be rearranged as follows:
\begin{equation}\label{e2.18}
G_{ij}=P\partial_{i}\partial_{j}\A(0,a,\mu)P+P\partial_{i}\A(0,a,\mu)P^{\perp}F_{j}P+P\partial_{j}\A(0,a,\mu)P^{\perp}F_{i}P,\
\ i,j=1,\dots,d.
\end{equation}
Substituting the right-hand side of  \eqref{e2.17} for  $P^{\perp}F_{j}P$ in \eqref{e2.18} yields
\begin{multline}\label{e2.19}
G_{ij}=P\partial_{i}\partial_{j}\A(0,a,\mu)P-P\partial_{i}\A(0,a,\mu)P^{\perp}\A(0,a,\mu)^{-1}P^{\perp}\partial_{j}\A(0,a,\mu)P-\\
-P\partial_{j}\A(0,a,\mu)P^{\perp}\A(0,a,\mu)^{-1}P^{\perp}\partial_{i}\A(0,a,\mu)P,\
\ i,j=1,\dots,d.
\end{multline}
Since $P=(\cdot,\mathbf{1}_{\Omega})\mathbf{1}_{\Omega}$ is a first rank operator,  $G_{ij}$ admits a representation
$G_{ij}=g_{ij}P$ with the coefficients
$g_{ij}\in\mathbb{C}$, $i,j=1,\dots,d$, that can be computed with the help of formula \eqref{e2.19}.
Indeed, from the relation $P=(\cdot,\mathbf{1}_{\Omega})\mathbf{1}_{\Omega}$  we obtain
\begin{equation}\label{e2.20}
\partial_{j}\A(0,a,\mu)P=\boldsymbol{\mathtt{i}}(\cdot,\mathbf{1}_{\Omega})w_{j}\ \ \
\text{with}\ \ {\boldsymbol{\mathtt{i}}}w_{j}=\partial_{j}\A(0,a,\mu)\mathbf{1}_{\Omega},\
\ j=1,\dots,d.
\end{equation}
Due to \eqref{e2.1} the following equality holds:
\begin{multline}
\label{e2.21}
w_{j}(x)=\overline{w_{j}(x)}=\intop_{\Omega}\sum_{n\in\Z^d}(x_{j}-y_{j}+n_{j})a(x-y+n)\mu(x,y)\,dy=
\\
= \intop_{\R^d}(x_{j}-y_{j})a(x-y)\mu(x,y)\,dy,\ \ x\in\Omega,\ \ j=1,\dots,d.
\end{multline}
From  \eqref{e2.16} and \eqref{e2.20} it follows that, for all $j=1,\dots,d$,
\begin{equation}
\label{e2.22}
P^{\perp}\A(0,a,\mu)^{-1}P^{\perp}\partial_{j}\A(0,a,\mu)P={\boldsymbol{\mathtt{i}}}(\cdot,\mathbf{1}_{\Omega})v_{j},
\ \ v_{j}=P^{\perp}\A(0,a,\mu)^{-1}P^{\perp}w_{j}.
\end{equation}
Notice that the functions $v_{j}=\overline{v_{j}}\in L_{2}(\Omega)$, $j=1,\dots,d$, satisfy  cell problems
on $\Omega$ that read
\begin{equation*}
\left\{
\begin{array}{l}
\int\limits_{\Omega}\widetilde
a(0,x-y)\mu(x,y)(v_{j}(x)-v_{j}(y))\,dy=w_{j}(x),\ \ x\in\Omega,\\[4mm]
\int\limits_{\Omega}v_{j}(x)\,dx=0,
\end{array}
\right. j=1,\dots,d.
\end{equation*}
Identifying the functions $v_{j}\in L_{2}(\Omega)$, $j=1,\dots,d$, with their periodic extensions one can rewrite
these cell problems as follows:
\begin{multline}\label{e2.23}
\left\{
\begin{array}{l}
\intop\limits_{\R^d}
a(x-y)\mu(x,y)(v_{j}(x)-v_{j}(y))\,dy=\intop\limits_{\R^d}
a(x-y)\mu(x,y)(x_{j}-y_{j}) \,dy,\ \ x\in\Omega,
\\[4mm]
\intop\limits_{\Omega}v_{j}(x)\,dx=0.
\end{array}
\right.
\end{multline}
Making the rearrangements similar to those in   \eqref{e1.6} we conclude that the problems in \eqref{e2.23} have a unique solution.
Then, by \eqref{e2.20} and \eqref{e2.22}, we have
\begin{equation}\label{e2.24}
P\partial_{i}\A(0,a,\mu)P^{\perp}\A(0,a,\mu)^{-1}P^{\perp}\partial_{j}\A(0,a,\mu)P=(v_{j},w_{i})P,\
\ i,j=1,\dots,d,
\end{equation}
where the functions $w_{j}\in L_{2}(\Omega)$, $j=1,\dots,d$ are defined in \eqref{e2.21}, and the functions
 $v_{j}\in L_{2}(\Omega)$, $j=1,\dots,d$, satisfy auxiliary problems \eqref{e2.23} on the periodicity cell $\Omega$.
In the same way one can derive the relations
\begin{equation}
\label{e2.25}
P\partial_{i}\partial_{j}\A(0,a,\mu)P=(w_{ij},\mathbf{1}_{\Omega})P,\
\ i,j=1,\dots,d.
\end{equation}
Here
$w_{ij}=\overline{w_{ij}}=\partial_{i}\partial_{j}\A(0,a,\mu)\mathbf{1}_{\Omega}\in L_{2}(\Omega)$, that is
\begin{multline}\label{e2.26}
w_{ij}(x)=\intop_{\Omega}\sum_{n\in\Z^d}(x_{i}-y_{i}+n_{i})(x_{j}-y_{j}+n_{j})a(x-y+n)\mu(x,y)\,dy=
\\
= \intop_{\R^d}(x_{i}-y_{i})(x_{j}-y_{j})a(x-y)\mu(x,y)\, dy,\ \
x\in\Omega,\ \ i,j=1,\dots,d.
\end{multline}
Finally, combining the above relations   \eqref{e2.19}, \eqref{e2.21} and  \eqref{e2.24}--\eqref{e2.26}
we conclude that
\begin{multline}\label{e2.27}
g_{ij}=(w_{ij},\mathbf{1}_{\Omega})-(v_{j},w_{i})-(v_{i},w_{j})=
\\
= \intop_{\Omega}dx\intop_{\R^d} dy((x_{i}-y_{i})(x_{j}-y_{j})-v_{j}(x)(x_{i}-y_{i})-v_{i}(x)(x_{j}-y_{j}))a(x-y)\mu(x,y),
\\
i,j=1,\dots,d.
\end{multline}
Here the functions $v_{j}(x)$, $x\in\Omega$, $j=1,\dots,d$, are defined as solutions of problems \eqref{e2.23}.

It should be noted that there are other methods of computing the coefficients of Taylor expansions for the functions
   $F(\xi)$ and $\A(\xi)F(\xi)$.  Two of them are described in Section \ref{subsec4_2_other}.

\subsection{Approximation of the operators $F(\xi)$ and $\A(\xi)F(\xi)$}
The precision of the above constructed approximations of the operator-functions $F(\xi)$ and $\A(\xi,a,\mu)F(\xi)$
for $|\xi|\le\delta_{0}(a,\mu)$ is estimated in the next statement.

\begin{theorem}
\label{teor2.1}
Let conditions \eqref{e1.1}--\eqref{e1.4} be fulfilled.
Then the following estimates hold:
\begin{equation}\label{e2.28}
\|F(\xi)-P\|\le C_{1}(a,\mu)|\xi|,\ \ |\xi|\le\delta_{0}(a,\mu);
\end{equation}
\begin{equation}\label{e2.29}
\Big\|\A(\xi,a,\mu)F(\xi)-\frac{1}{2}\sum_{i,j=1}^{d}g_{ij}\xi_{i}\xi_{j}P\Big\|\le
C_{2}(a,\mu)|\xi|^{3},\ \ |\xi|\le\delta_{0}(a,\mu).
\end{equation}
The constants $C_{1}(a,\mu)$ and $C_{2}(a,\mu)$ defined in \eqref{C1} and \eqref{C2}, respectively,
depend only on $d$, $\mu_-$, $\mu_+$, $M_1(a)$, $M_2(a)$, $M_3(a)$, ${\mathcal A}_\pi(a)$.
\end{theorem}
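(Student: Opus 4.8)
The plan is to derive both estimates from the Taylor expansions of $F(\xi)$ and $\A(\xi,a,\mu)F(\xi)$ established in \eqref{e2.7}, \eqref{e2.8} together with the explicit third-order remainder bounds coming from the Hadamard-lemma representations. The key point is that all the quantities entering the asymptotics --- $P$, the functions $w_j$, $w_{ij}$, $v_j$, and the operator $\A(0,a,\mu)^{-1}P^\perp$ --- are controlled quantitatively by the data $d$, $\mu_\pm$, $M_k(a)$, $\mathcal{A}_\pi(a)$, so the remainders in \eqref{e2.7}, \eqref{e2.8} can be made effective.

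First I would prove \eqref{e2.28}. Starting from the Riesz formula \eqref{e2.6}, write
\[
F(\xi)-P=\frac{-1}{2\pi i}\ointop_{\Gamma}\bigl((\A(\xi,a,\mu)-\zeta I)^{-1}-(\A(0,a,\mu)-\zeta I)^{-1}\bigr)\,d\zeta
=\frac{-1}{2\pi i}\ointop_{\Gamma}(\A(\xi,a,\mu)-\zeta I)^{-1}\bigl(\A(0,a,\mu)-\A(\xi,a,\mu)\bigr)(\A(0,a,\mu)-\zeta I)^{-1}\,d\zeta.
\]
On the contour $\Gamma$, which is at a fixed positive distance $d_0/6$ from both $[0,d_0/3]$ and the rest of the spectrum (uniformly in $|\xi|\le\delta_0(a,\mu)$ by Proposition \ref{prop2.1}), the resolvents are bounded by $6/d_0\le 6/(\mu_-\mathcal{A}_\pi(a))$ using \eqref{d0}. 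The middle factor is bounded by $\mu_+2^dM_1(a)|\xi|$ by Proposition \ref{prop1.5}. Multiplying by the length of $\Gamma$ (a fixed multiple of $d_0$) gives \eqref{e2.28} with an explicit $C_1(a,\mu)$ depending only on the listed parameters. This is essentially routine.

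The substantive step is \eqref{e2.29}. Here I would combine \eqref{e2.2} (the second-order Hadamard expansion of $\A(\xi,a,\mu)$ with remainder $\mathbb{K}(\xi,a,\mu)$ bounded by \eqref{e2.5}) with a second-order expansion of $F(\xi)$ obtained by differentiating the Riesz integral \eqref{e2.6} three times and estimating the third derivative on $\Gamma$ via the Schur bounds \eqref{e2.3}--\eqref{e2.5} on $\partial^\alpha\A$; this yields $F(\xi)=P+\sum_i F_i\xi_i+\tfrac12\sum_{i,j}F_{ij}\xi_i\xi_j+r_F(\xi)$ with $\|r_F(\xi)\|\le c|\xi|^3$ effectively. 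Multiplying the two expansions and collecting terms by order, the zeroth-order term is $G_0=\A(0,a,\mu)P=0$ by \eqref{e2.9}, the first-order term vanishes by \eqref{e2.12} (a consequence of non-negativity of $\A(\xi)F(\xi)$ and self-adjointness, already derived in the text), and the second-order term is exactly $\tfrac12\sum_{i,j}G_{ij}\xi_i\xi_j=\tfrac12\sum_{i,j}g_{ij}\xi_i\xi_j P$ with $g_{ij}$ given by \eqref{e2.27}. What remains is a finite sum of products of the form (a coefficient operator of order $\le 2$) $\times$ (a remainder of order $\ge 1$) plus $\mathbb{K}(\xi,a,\mu)F(\xi)$, each of which is $O(|\xi|^3)$ with an explicit constant built from $\mu_+M_2(a)$, $\mu_+M_3(a)$, $\|F_i\|$, $\|F_{ij}\|$ and the resolvent bound $d_0^{-1}$. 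To make this fully explicit one needs bounds on $\|F_i\|$ and $\|F_{ij}\|$: differentiating \eqref{e2.6} once and twice and estimating on $\Gamma$ gives $\|F_i\|\le C\mu_+M_1(a)d_0^{-1}$ and $\|F_{ij}\|\le C(\mu_+M_2(a)d_0^{-1}+(\mu_+M_1(a))^2d_0^{-2})$, and likewise one bounds $\|w_j\|$, $\|w_{ij}\|$ by $\mu_+M_1(a)$, $\mu_+M_2(a)$ and $\|v_j\|\le d_0^{-1}\|w_j\|$ from the cell problem \eqref{e2.23}. Assembling all these gives $C_2(a,\mu)$ depending only on $d$, $\mu_-$, $\mu_+$, $M_1(a)$, $M_2(a)$, $M_3(a)$, $\mathcal{A}_\pi(a)$.

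The main obstacle is purely bookkeeping: organizing the product of the two truncated expansions so that every error term is visibly $O(|\xi|^3)$ and tracking the constants through the resolvent estimates on $\Gamma$; there is no conceptual difficulty, since analyticity is not needed --- only the $C^3$-regularity guaranteed by $M_3(a)<\infty$ via the Hadamard lemma. I would present the argument by first recording the auxiliary norm bounds on $F_i$, $F_{ij}$, $w_j$, $w_{ij}$, $v_j$, then writing $\A(\xi,a,\mu)F(\xi)-\tfrac12\sum_{i,j}g_{ij}\xi_i\xi_j P$ as an explicit finite sum of error terms, and finally bounding each term, collecting the constants into the definitions \eqref{C1}, \eqref{C2}.
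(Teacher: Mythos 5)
Your proof of \eqref{e2.28} coincides with the paper's: Riesz formula, resolvent identity, the bound $\|(\A(\xi,a,\mu)-\zeta I)^{-1}\|\le 6d_0^{-1}$ on $\Gamma$, Proposition \ref{prop1.5}, and the length of $\Gamma$. For \eqref{e2.29} the paper works entirely at the resolvent level: it iterates the resolvent identity three times (formula \eqref{e2.33}), inserts the Hadamard expansion \eqref{e2.2}, and reads off both the coefficients and the remainder from the single representation $\A(\xi,a,\mu)F(\xi)=\frac{-1}{2\pi i}\oint_\Gamma \zeta\,R(\xi,\zeta)\,d\zeta$ (see \eqref{e2.34+}); every remainder term there is a product of resolvents $R_0(\zeta)$ or $R(\xi,\zeta)$ (norm $\le 6d_0^{-1}$), pieces of $\Delta\A(\xi)$ bounded via \eqref{e2.3}--\eqref{e2.5}, and the factor $\zeta$ with $|\zeta|\le d_0/2$, so $C_2$ depends only on the listed parameters. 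Your variant --- multiplying the expansion \eqref{e2.2} of $\A(\xi,a,\mu)$ by a separately obtained second-order expansion of $F(\xi)$ with controlled third-order remainder --- does yield the $O(|\xi|^3)$ bound, but it has one loose end you should address: the cross term $\A(0,a,\mu)\,r_F(\xi)$, a bare $\A(0,a,\mu)$ against the remainder of $F$. Estimating it by $\|\A(0,a,\mu)\|\,\|r_F(\xi)\|$ brings in $\|\A(0,a,\mu)\|\le(1+2^d)\mu_+\|a\|_{L_1}$, and $\|a\|_{L_1}$ is \emph{not} controlled by $d$, $\mu_\pm$, $M_k(a)$, ${\mathcal A}_\pi(a)$ (only a lower bound follows from ${\mathcal A}_\pi(a)\le 2\|a\|_{L_1}$), so the stated parameter dependence of $C_2$ would be lost. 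The fix is to keep $r_F(\xi)$ in its contour-integral form, where it always carries $R_0(\zeta)$ or $R(\xi,\zeta)$ as leftmost factor, and use $\A(0,a,\mu)R_0(\zeta)=I+\zeta R_0(\zeta)$ with $\|\A(0,a,\mu)R_0(\zeta)\|\le 4$ on $\Gamma$ --- which is exactly what the paper's bookkeeping with the extra factor $\zeta$ accomplishes automatically. With that adjustment, and the effective bounds on $\|F_i\|$, $\|F_{ij}\|$, $\|r_F(\xi)\|$ you indicate, your argument is complete and equivalent to the paper's.
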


\begin{proof}
By the Riesz formula \eqref{e2.6}, the difference  $F(\xi)-P$ admits the represen\-tation
\begin{equation}\label{e2.30}
F(\xi)-P=\frac{-1}{2\pi i}\oint_{\Gamma} \left((\A(\xi,a,\mu)-\zeta
I)^{-1}-(\A(0,a,\mu)-\zeta I)^{-1}\right) \, d\zeta,\ \
|\xi|\le\delta_{0}(a,\mu).
\end{equation}
The term  $\left((\A(\xi,a,\mu)-\zeta I)^{-1}-(\A(0,a,\mu)-\zeta I)^{-1}\right)$ satisfies the resolvent identity
that reads
\begin{multline}\label{e2.31}
(\A(\xi,a,\mu)-\zeta I)^{-1}-(\A(0,a,\mu)-\zeta I)^{-1}=
\\
=(\A(\xi,a,\mu)-\zeta I)^{-1} \left(\A(0,a,\mu)-\A(\xi,a,\mu)\right)(\A(0,a,\mu)-\zeta I)^{-1},
\\
|\xi|\le\delta_{0}(a,\mu),\ \ \zeta\in\Gamma.
\end{multline}
It remains to notice that the length of the contour $\Gamma$ is equal to $\frac{\pi+2}{3}d_{0}$ and that
for all $\zeta\in\Gamma$ both resolvents satisfy the estimates
\begin{equation}\label{e2.32}
\|(\A(\xi,a,\mu)-\zeta I)^{-1}\|\le 6d_{0}^{-1},\ \
\|(\A(0,a,\mu)-\zeta I)^{-1}\|\le 6d_{0}^{-1}.
\end{equation}
Now, as a consequence of  \eqref{e1.32} and \eqref{e2.30}--\eqref{e2.32}, we obtain \eqref{e2.28}
with
\begin{equation}
\label{C1}
C_{1}(a,\mu):=6(\pi+2)\pi^{-1}2^{d}d_{0}^{-1}\mu_{+}M_{1}(a).
\end{equation}

Denote
\begin{align*}
R(\xi,\zeta)&:=(\A(\xi,a,\mu)-\zeta I)^{-1},\ \
|\xi|\le\delta_{0}(a,\mu),\ \ \zeta\in\Gamma;
\\
R_{0}(\zeta)&:=R(0,\zeta),\ \ \zeta\in\Gamma;
\\
\Delta\A(\xi)&:=\A(\xi,a,\mu)-\A(0,a,\mu),\ \
|\xi|\le\delta_{0}(a,\mu).
\end{align*}
Iterating \eqref{e2.31} we derive the relation
\begin{multline}\label{e2.33}
R(\xi,\zeta)=R_{0}(\zeta)-R_{0}(\zeta)\Delta\A(\xi)R_{0}(\zeta)+\\+
R_{0}(\zeta)\Delta\A(\xi)R_{0}(\zeta)\Delta\A(\xi)R_{0}(\zeta)-\\-
R(\xi,\zeta)\Delta\A(\xi)R_{0}(\zeta)\Delta\A(\xi)R_{0}(\zeta)\Delta\A(\xi)R_{0}(\zeta),\
\ |\xi|\le\delta_{0}(a,\mu),\ \ \zeta\in\Gamma.
\end{multline}
From \eqref{e2.2} and \eqref{e2.33} it follows that
\begin{multline}\label{e2.33+}
R(\xi,\zeta)=R_{0}(\zeta)-R_{0}(\zeta)\sum_{i=1}^{d}\partial_{i}\A(0,a,\mu)\xi_{i}R_{0}(\zeta)-
R_{0}(\zeta)\frac{1}{2}\sum_{i,j =1}^{d}\partial_{i}\partial_{j}\A(0,a,\mu)\xi_{i}\xi_{j}R_{0}(\zeta)+
\\
+R_{0}(\zeta)\sum_{i=1}^{d}\partial_{i}\A(0,a,\mu)\xi_{i}R_{0}(\zeta)\sum_{j=1}^{d}\partial_{j}\A(0,a,\mu)\xi_{j}
R_{0}(\zeta)- R_{0}(\zeta)\mathbb{K}(\xi,a,\mu)R_{0}(\zeta)+
\\
+ R_{0}(\zeta)\sum_{i=1}^{d}\partial_{i}\A(0,a,\mu)\xi_{i}R_{0}(\zeta)\Delta_{2}\A(\xi)R_{0}(\zeta)+
R_{0}(\zeta)\Delta_{2}\A(\xi) R_{0}(\zeta)\Delta\A(\xi)R_{0}(\zeta)-
\\
-R(\xi,\zeta)\Delta\A(\xi)R_{0}(\zeta)\Delta\A(\xi)R_{0}(\zeta)\Delta\A(\xi)R_{0}(\zeta),\
\ |\xi|\le\delta_{0}(a,\mu),\ \ \zeta\in\Gamma.
\end{multline}
Here
$$
\Delta_{2}\A(\xi) := \frac{1}{2}\sum_{i,j =1}^{d}\partial_{i}\partial_{j}\A(0,a,\mu)\xi_{i}\xi_{j} + \mathbb{K}(\xi,a,\mu).
$$
Taking into account \eqref{e2.33+} and the relation
\begin{equation}\label{e2.34}
\A(\xi,a,\mu)F(\xi)=\frac{-1}{2\pi
i}\oint_{\Gamma}(\A(\xi,a,\mu)-\zeta I)^{-1}\zeta \,d\zeta,\ \
|\xi|\le\delta_{0}(a,\mu),
\end{equation}
we obtain the following representations for the coefficients of the expansion in \eqref{e2.8}:
\begin{equation}\label{dop2.1}
G_{0}=\frac{-1}{2\pi i}\oint_{\Gamma}R_{0}(\zeta)\zeta\, d\zeta;
\end{equation}
\begin{equation}\label{dop2.2}
G_{i}=\frac{1}{2\pi
i}\oint_{\Gamma}R_{0}(\zeta)\partial_{i}\A(0,a,\mu)R_{0}(\zeta)\zeta\, d\zeta,\ \ i=1,\dots,d;
\end{equation}
\begin{multline}\label{dop2.3}
G_{ij}=\frac{1}{2\pi i}\oint_{\Gamma}R_{0}(\zeta)\partial_{i}
\partial_{j}\A(0,a,\mu)R_{0}(\zeta)\zeta\,d\zeta-
\\
-\frac{1}{2\pi i}\oint_{\Gamma}R_{0}(\zeta)\Big(\partial_{i}\A(0,a,\mu)R_{0}(\zeta)\partial_{j}\A(0,a,\mu)+\partial_{j}\A(0,a,\mu)R_{0}(\zeta)\partial_{i}\A(0,a,\mu)\Big) R_{0}(\zeta)\zeta \,d\zeta,\\ i,j=1,\dots,d.
\end{multline}
Due to \eqref{e2.9} and \eqref{e2.12} we have
$G_{0}=0$, $G_{i}=0$, $i=1,\dots,d$; then the representation
\eqref{e2.8} takes the form \eqref{e2.13} with $G_{ij}=g_{ij}P$
and $g_{ij}\in\mathbb{C}$  introduced in \eqref{e2.27}.
In view of  \eqref{e2.33+} and  \eqref{e2.34} the discrepancy in the expansion
\eqref{e2.13} admits the following representation:
\begin{multline}\label{e2.34+}
\A(\xi,a,\mu)F(\xi)-\frac{1}{2}\sum_{i,j=1}^{d}G_{ij}\xi_{i}\xi_{j}=\frac{1}{2\pi i}
\oint_{\Gamma}R_{0}(\zeta)\mathbb{K}(\xi,a,\mu)R_{0}(\zeta)\zeta \, d\zeta-
\\
-\frac{1}{2\pi i}\oint_{\Gamma}R_{0}(\zeta)\sum_{i=1}^{d}\partial_{i}\A(0,a,\mu)\xi_{i}
R_{0}(\zeta)\Delta_{2}\A(\xi)R_{0}(\zeta)\zeta \, d\zeta-
\\
- \frac{1}{2\pi i}\oint_{\Gamma}R_{0}(\zeta)\Delta_{2}\A(\xi) R_{0}(\zeta)\Delta\A(\xi)R_{0}(\zeta)\zeta\, d\zeta+
\\
+ \frac{1}{2\pi i}\oint_{\Gamma} R(\xi,\zeta)\Delta\A(\xi)R_{0}(\zeta)\Delta\A(\xi)R_{0}(\zeta)\Delta\A(\xi)R_{0}(\zeta)
\zeta \,d\zeta,\ \ |\xi|\le\delta_{0}(a,\mu).
\end{multline}
Since $|\zeta|\le d_{0}/2$ for  $\zeta\in\Gamma$, the relations in \eqref{e2.34+} and estimates \eqref{e2.3}--\eqref{e2.5},  \eqref{e2.32}
imply  \eqref{e2.29} with $C_{2}(a,\mu)$ given by the formula
\begin{multline}
\label{C2}
C_{2}(a,\mu):= \frac{\pi+2}{2\pi} \left(\mu_{+} M_{3}(a)
+  \frac{\mu_{+}^2}{d_0} \left( 3 M_{2}(a)+ M_{3}(a)\Big) \Big( 12 M_1(a) + 3 M_{2}(a)+ M_{3}(a)\right)+\right.
\\
+\left. \frac{\mu_{+}^3}{d_0^2} \left( 6 M_{1}(a) + 3 M_2(a) + M_3(a)\right)^3 \right).
\end{multline}
\end{proof}

\subsection{Approximation of the resolvent of operator $\A(\xi)$}
\label{subsec214}
In this section our goal is to approximate the resolvent of the operator
$\A(\xi,a,\mu)$  in the vicinity of the lower edge of its spectrum.
By \eqref{e1.30} we have
\begin{equation}\label{e2.35}
(\A(\xi,a,\mu)F(\xi)u,u)\ge\mu_{-}C(a)|\xi|^{2}(F(\xi)u,u),\ \
u\in L_{2}(\Omega),\ \ |\xi|\le\delta_{0}(a,\mu).
\end{equation}
Substituting $P+O(|\xi|)$ for $F(\xi)$ in \eqref{e2.35} and considering \eqref{e2.13} we derive the relation
\begin{equation*}
\frac{1}{2}\sum_{i,j=1}^{d}g_{ij}\xi_{i}\xi_{j}\|Pu\|^{2}\ge\mu_{-}C(a)|\xi|^{2}\|Pu\|^{2}+O(|\xi|^{3}),\
\ |\xi|\to0.
\end{equation*}
We then divide this relation by $|\xi|^{2}$, set $u=\mathbf{1}_{\Omega}$ and send $\xi$ to zero in order to obtain
the following inequality:
\begin{equation*}
\frac{1}{2}\sum_{i,j=1}^{d}g_{ij}\theta_{i}\theta_{j}\ge\mu_{-}C(a),\
\ \theta\in\mathbb{S}^{d-1},
\end{equation*}
which finally yields
\begin{equation}\label{e2.36}
\frac{1}{2}\sum_{i,j=1}^{d}g_{ij}\xi_{i}\xi_{j}\ge\mu_{-}C(a)|\xi|^{2},\
\ \xi\in\R^d.
\end{equation}

\begin{theorem}\label{teor2.2}
Assume that conditions \eqref{e1.1}--\eqref{e1.4} hold. Then the following estimate is valid:
\begin{equation}\label{e2.37}
\Big\|(\A(\xi,a,\mu)+\varepsilon^{2}I)^{-1}-\Big(\frac{1}{2}\sum_{i,j=1}^{d}g_{ij}\xi_{i}\xi_{j}+\varepsilon^{2}\Big)^{-1}P\Big\|\le
S(a,\mu)\varepsilon^{-1},\ \ \varepsilon>0,\ \
|\xi|\le\delta_{0}(a,\mu).
\end{equation}
Here the constant $S(a,\mu)$ is given by the formula
\begin{equation*}
S(a,\mu):=\frac{2C_{1}(a,\mu)}{(\mu_{-}C(a))^{1/2}}+\frac{C_{2}(a,\mu)}{(\mu_{-}C(a))^{3/2}}+(2d_{0}/3)^{-1/2}
\end{equation*}
and, in view of \eqref{C(a)}, \eqref{d0}, \eqref{C1} and \eqref{C2}, this constant depends only on
$d$, $\mu_-$, $\mu_+$, $M_1(a)$, $M_2(a)$, $M_3(a)$, ${\mathcal M}(a)$, ${\mathcal A}_\pi(a)$ and
${\mathcal A}_{r(a)}(a)$.
\end{theorem}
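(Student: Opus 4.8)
The plan is to combine the operator approximation of $\A(\xi)F(\xi)$ from Theorem \ref{teor2.1} with the spectral gap information from Proposition \ref{prop2.1} and the lower bound \eqref{e2.36}, splitting the resolvent $(\A(\xi)+\eps^2 I)^{-1}$ according to the orthogonal decomposition $L_2(\Omega)=\Ran F(\xi)\oplus\Ran(I-F(\xi))$. First I would write
$$
(\A(\xi)+\eps^2 I)^{-1} = (\A(\xi)+\eps^2 I)^{-1}F(\xi) + (\A(\xi)+\eps^2 I)^{-1}(I-F(\xi)).
$$
Since $\A(\xi)$ commutes with its spectral projector $F(\xi)$, the second term equals $(\A(\xi)(I-F(\xi))+\eps^2 I)^{-1}(I-F(\xi))$, and on $\Ran(I-F(\xi))$ the operator $\A(\xi)$ is bounded below by $2d_0/3$ (by Proposition \ref{prop2.1}); hence this term is bounded in norm by $(2d_0/3+\eps^2)^{-1}\le (2d_0/3)^{-1}$, which is absorbed into the last summand of $S(a,\mu)$. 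Likewise $\bigl(\tfrac12\sum g_{ij}\xi_i\xi_j+\eps^2\bigr)^{-1}P$ already lives in $\Ran P$, so the whole discrepancy in \eqref{e2.37} up to the harmless $(2d_0/3)^{-1}$ term is the norm of
$$
(\A(\xi)+\eps^2 I)^{-1}F(\xi)-\Bigl(\tfrac12\sum_{i,j=1}^{d}g_{ij}\xi_i\xi_j+\eps^2\Bigr)^{-1}P.
$$

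Next I would insert $P$ in place of $F(\xi)$ at the cost of $\|F(\xi)-P\|$: using the bound $\|(\A(\xi)+\eps^2 I)^{-1}\|\le\eps^{-2}$ (trivially, since $\A(\xi)\ge0$) together with the refinement that on $\Ran F(\xi)$ one has $\A(\xi)F(\xi)\ge\mu_-C(a)|\xi|^2 F(\xi)$ by \eqref{e2.35}, one controls $(\A(\xi)+\eps^2 I)^{-1}F(\xi)$ by $(\mu_-C(a)|\xi|^2+\eps^2)^{-1}$. Write the target difference as
$$
\bigl[(\A(\xi)F(\xi)+\eps^2 F(\xi))^{-1}-(\tfrac12\!\sum g_{ij}\xi_i\xi_j+\eps^2)^{-1}P\bigr]
$$
restricted suitably; applying the second resolvent identity inside $\Ran P$ turns this into
$$
\Bigl(\tfrac12\textstyle\sum g_{ij}\xi_i\xi_j+\eps^2\Bigr)^{-1}\Bigl(\tfrac12\textstyle\sum g_{ij}\xi_i\xi_j\,P-\A(\xi)F(\xi)\Bigr)(\A(\xi)F(\xi)+\eps^2 F(\xi))^{-1},
$$
plus a term carrying the factor $\|F(\xi)-P\|\le C_1(a,\mu)|\xi|$ from Theorem \ref{teor2.1}. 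The middle factor is exactly the quantity estimated by $C_2(a,\mu)|\xi|^3$ in \eqref{e2.29}.

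Finally I would carry out the elementary scalar bookkeeping: with $t:=|\xi|$, $\kappa:=\mu_-C(a)$, one uses $\bigl(\tfrac12\sum g_{ij}\xi_i\xi_j+\eps^2\bigr)^{-1}\le(\kappa t^2+\eps^2)^{-1}$ (by \eqref{e2.36}) and the same for the $F(\xi)$-compressed resolvent (by \eqref{e2.35}), so the main term is bounded by $C_2(a,\mu)\,t^3(\kappa t^2+\eps^2)^{-2}$ and the $F(\xi)-P$ term by $2C_1(a,\mu)\,t(\kappa t^2+\eps^2)^{-1}$. Then the standard inequalities
$$
\frac{t}{\kappa t^2+\eps^2}\le\frac{1}{2\sqrt{\kappa}\,\eps},\qquad
\frac{t^3}{(\kappa t^2+\eps^2)^{2}}\le\frac{1}{\kappa^{3/2}\eps}
$$
(the first from AM--GM, the second by maximizing in $t$) give exactly $\eps^{-1}$ with the stated constant $S(a,\mu)=2C_1(a,\mu)\kappa^{-1/2}+C_2(a,\mu)\kappa^{-3/2}+(2d_0/3)^{-1/2}$. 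The only mildly delicate point is making the $\Ran F(\xi)$ versus $\Ran P$ reduction rigorous, i.e. justifying that the pieces genuinely decouple and that the compressed resolvent $(\A(\xi)F(\xi)+\eps^2 F(\xi))^{-1}$ (understood on $\Ran F(\xi)$) obeys the $|\xi|$-uniform bound coming from \eqref{e2.35}; once that is set up, everything else is the scalar estimate above.
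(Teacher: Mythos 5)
Your proposal is correct and follows essentially the same route as the paper's proof: the splitting along $F(\xi)$ and $I-F(\xi)$, the gap bound \eqref{e2.38}, the resolvent-identity rearrangement (the paper's identity \eqref{e2.41}), the inputs \eqref{e2.28}, \eqref{e2.29}, \eqref{e2.39}, \eqref{e2.40}, and the final scalar maximizations are all the same. The only cosmetic point is that for the $(I-F(\xi))$ piece you should keep the bound in the form $(2d_{0}/3+\varepsilon^{2})^{-1}\le(2d_{0}/3)^{-1/2}\varepsilon^{-1}$ (the constant bound $(2d_{0}/3)^{-1}$ alone does not yield the summand $(2d_{0}/3)^{-1/2}\varepsilon^{-1}$ for all $\varepsilon>0$), which is immediate from the AM--GM step you already use elsewhere.
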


\begin{proof}
Combining an evident inequality
\begin{equation}\label{e2.38}
\|(\A(\xi,a,\mu)+\varepsilon^{2}I)^{-1}(I-F(\xi))\|\le
(2d_{0}/3)^{-1},\ \ |\xi|\le\delta_{0}(a,\mu),
\end{equation}
with  \eqref{e1.30} and \eqref{e2.36} results in the following estimates:
\begin{equation}\label{e2.39}
\|(\A(\xi,a,\mu)F(\xi)+\varepsilon^{2}I)^{-1}F(\xi)\|\le(\mu_{-}C(a)|\xi|^{2}+\varepsilon^{2})^{-1},\
\ \varepsilon>0,\ \ |\xi|\le\delta_{0}(a,\mu);
\end{equation}
\begin{equation}\label{e2.40}
\Big\|\Big(\frac{1}{2}\sum_{i,j=1}^{d}g_{ij}\xi_{i}\xi_{j}+\varepsilon^{2}\Big)^{-1}P\Big\|\le(\mu_{-}C(a)|\xi|^{2}+\varepsilon^{2})^{-1},\
\ \varepsilon>0,\ \ |\xi|\le\delta_{0}(a,\mu).
\end{equation}
Since $(\A(\xi,a,\mu)+\varepsilon^{2}I)^{-1}F(\xi)=F(\xi)(\A(\xi,a,\mu)+\varepsilon^{2}I)^{-1}F(\xi)$, after straightforward rearrangements we obtain
\begin{multline}\label{e2.41}
(\A(\xi,a,\mu)+\varepsilon^{2}I)^{-1}F(\xi)-\Big(\frac{1}{2}\sum_{i,j=1}^{d}g_{ij}\xi_{i}\xi_{j}+\varepsilon^{2}\Big)^{-1}P=\\=
F(\xi)(\A(\xi,a,\mu)+\varepsilon^{2}I)^{-1}(F(\xi)-P)+(F(\xi)-P)\Big(\frac{1}{2}\sum_{i,j=1}^{d}g_{ij}\xi_{i}\xi_{j}+\varepsilon^{2}\Big)^{-1}P+\\+
F(\xi)(\A(\xi,a,\mu)+\varepsilon^{2}I)^{-1}\left(\frac{1}{2}\sum_{i,j=1}^{d}g_{ij}\xi_{i}\xi_{j}P-\A(\xi,a,\mu)
F(\xi)\right)\Big(\frac{1}{2}\sum_{i,j=1}^{d}g_{ij}\xi_{i}\xi_{j}+\varepsilon^{2}\Big)^{-1}P,\\
\varepsilon>0,\ \ |\xi|\le\delta_{0}(a,\mu).
\end{multline}
Now \eqref{e2.37} follows from \eqref{e2.28}, \eqref{e2.29} and \eqref{e2.38}--\eqref{e2.41}.
\end{proof}

\section{Homogenization of convolution type operators}
\label{sec3}

We assume that conditions \eqref{e1.1}--\eqref{e1.4} hold and consider a family of nonlocal operators in
$L_{2}(\R^d)$ which are defined as follows:
\begin{equation*}
\A_{\varepsilon}u(x):=\varepsilon^{-d-2}\int_{\R^d}a((x-y)/\varepsilon)\mu(x/\varepsilon,y/\varepsilon)(u(x)-u(y))dy,\
\ x\in\R^d,\ \ u\in L_{2}(\R^d),\ \ \varepsilon>0.
\end{equation*}
Recalling the definition of the coefficients $g_{ij}$ in \eqref{e2.27} we introduce the so called {\it effective operator} which is a symmetric
second order elliptic operator with constant coefficients in $\mathbb R^d$ that reads
\begin{equation*}
\A^{0}:=\frac{1}{2}\sum_{i,j=1}^{d}g_{ij}D_{i}D_{j} = - \operatorname{div} g^0 \nabla.
\end{equation*}
A $(d\times d)$-matrix $g^0$ with the entries $\frac{1}{2}g_{ij}$, $i,j=1,\dots,d$, is called the {\it effective matrix}.
Due to estimate \eqref{e2.36} this matrix is positive definite.
Our main result is given by

\begin{theorem}
\label{teor3.1}
Let conditions \eqref{e1.1}--\eqref{e1.4} be satisfied. Then the following estimate holds:
\begin{equation}\label{e3.1}
\|(\A_{\varepsilon}+I)^{-1}-(\A^{0}+I)^{-1}\|\le
{\mathcal C}(a,\mu)\varepsilon,\ \ \varepsilon>0,
\end{equation}
with the constant ${\mathcal C}(a,\mu)$ defined by the formula
\begin{equation*}
{\mathcal C}(a,\mu):=\frac{2}{\sqrt{\mu_{-}C(a)} \,\delta_{0}(a,\mu)}+S(a,\mu),
\end{equation*}
where $C(a)$ is introduced in \eqref{C(a)}, while $\delta_{0}(a,\mu)$ and $S(a,\mu)$
are defined in Section {\rm\ref{subsec_21}} and in Theorem {\rm\ref{teor2.2}}, respectively.
The constant ${\mathcal C}(a,\mu)$  depends only on $\mu_-$, $\mu_+$,
$d$, $M_1(a)$, $M_2(a)$, $M_3(a)$, ${\mathcal M}(a)$, ${\mathcal A}_\pi(a)$ and ${\mathcal A}_{r(a)}(a)$.
\end{theorem}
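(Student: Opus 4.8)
The plan is to reduce the claim, by a dilation, to a uniform resolvent estimate for the operator $\A=\A(a,\mu)$ near the bottom of its spectrum, and then to insert the fibre approximation established in Theorem~\ref{teor2.2}. First I would introduce the unitary dilation $(T_\varepsilon u)(x):=\varepsilon^{-d/2}u(x/\varepsilon)$ in $L_2(\R^d)$. A change of variables in \eqref{Sus1} gives $T_\varepsilon^{-1}\A_\varepsilon T_\varepsilon=\varepsilon^{-2}\A$, and, as $\A^0=-\div g^0\nabla$ is homogeneous of degree two with constant coefficients, $T_\varepsilon^{-1}\A^0 T_\varepsilon=\varepsilon^{-2}\A^0$. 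Since $\varepsilon^{-2}\A+I=\varepsilon^{-2}(\A+\varepsilon^2 I)$ and likewise for $\A^0$, one gets
\begin{equation*}
(\A_\varepsilon+I)^{-1}-(\A^0+I)^{-1}=\varepsilon^2\,T_\varepsilon\bigl((\A+\varepsilon^2 I)^{-1}-(\A^0+\varepsilon^2 I)^{-1}\bigr)T_\varepsilon^{-1},
\end{equation*}
so, $T_\varepsilon$ being unitary, it is enough to prove $\|(\A+\varepsilon^2 I)^{-1}-(\A^0+\varepsilon^2 I)^{-1}\|\le{\mathcal C}(a,\mu)\,\varepsilon^{-1}$ for all $\varepsilon>0$.

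Next I would apply the Gelfand transform $\mathcal G$ of Section~\ref{subsec_13} to both operators. It turns $\A$ into the direct integral of the family $\A(\xi)=\A(\xi,a,\mu)$, $\xi\in\widetilde\Omega$, and the $\Z^d$-periodic operator $\A^0$ into the direct integral of $\A^0(\xi)=\tfrac12\sum_{i,j=1}^{d}g_{ij}(D_i+\xi_i)(D_j+\xi_j)$ in $L_2(\Omega)$ with periodic boundary conditions. The key point is $(D_j+\xi_j)\mathbf{1}_\Omega=\xi_j\mathbf{1}_\Omega$: thus $\mathbf{1}_\Omega$ is an eigenfunction of $\A^0(\xi)$, the projector $P=(\cdot,\mathbf{1}_\Omega)\mathbf{1}_\Omega$ commutes with $\A^0(\xi)$, and $\A^0(\xi)P=\bigl(\tfrac12\sum_{i,j}g_{ij}\xi_i\xi_j\bigr)P$ --- exactly the operator that Theorem~\ref{teor2.2} compares with $(\A(\xi)+\varepsilon^2 I)^{-1}$. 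On the invariant subspace $P^\perp L_2(\Omega)$, spanned by the modes $e^{2\pi i\langle n,x\rangle}$ with $n\ne0$, the operator $\A^0(\xi)$ acts diagonally with eigenvalues $\tfrac12\sum_{i,j}g_{ij}(\xi+2\pi n)_i(\xi+2\pi n)_j$, so \eqref{e2.36} together with $|\xi+2\pi n|\ge\pi$ (see \eqref{e1.29}) yields $\A^0(\xi)P^\perp\ge\mu_-C(a)\pi^2 P^\perp$ for all $\xi\in\widetilde\Omega$. The problem is then reduced to estimating $\|(\A(\xi)+\varepsilon^2 I)^{-1}-(\A^0(\xi)+\varepsilon^2 I)^{-1}\|$ uniformly in $\xi\in\widetilde\Omega$.

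In the last step I would split $\widetilde\Omega$ into $\{|\xi|\le\delta_0(a,\mu)\}$ and its complement, using throughout the elementary bound $(\beta+\varepsilon^2)^{-1}\le\tfrac12\beta^{-1/2}\varepsilon^{-1}$, valid for all $\beta,\varepsilon>0$ --- this is what makes the resulting estimates hold for every $\varepsilon>0$ and not merely for small $\varepsilon$. For $|\xi|\le\delta_0(a,\mu)$, since $\A^0(\xi)$ commutes with $P$ I would insert $\bigl(\tfrac12\sum g_{ij}\xi_i\xi_j+\varepsilon^2\bigr)^{-1}P=(\A^0(\xi)+\varepsilon^2 I)^{-1}P$: the resulting $P$-term is controlled by Theorem~\ref{teor2.2} (contribution $\le S(a,\mu)\varepsilon^{-1}$), while the $P^\perp$-term equals $(\A^0(\xi)+\varepsilon^2 I)^{-1}P^\perp$ and is $\le(\mu_-C(a)\pi^2+\varepsilon^2)^{-1}\le\tfrac12(\mu_-C(a))^{-1/2}\pi^{-1}\varepsilon^{-1}$. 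For $\delta_0(a,\mu)<|\xi|\le\pi\sqrt d$ I would bound the two resolvents separately: $\A(\xi)\ge\mu_-C(a)|\xi|^2 I\ge\mu_-C(a)\delta_0(a,\mu)^2 I$ by \eqref{e1.30}, and on both $P$ and $P^\perp$ one has $\A^0(\xi)\ge\mu_-C(a)\delta_0(a,\mu)^2$ by \eqref{e2.36}, \eqref{e1.29} and $\delta_0(a,\mu)\le1\le\pi$, so each resolvent has norm $\le\tfrac12(\mu_-C(a))^{-1/2}\delta_0(a,\mu)^{-1}\varepsilon^{-1}$. Collecting the contributions and using $\delta_0(a,\mu)\le1\le\pi$ once more, the supremum over $\xi\in\widetilde\Omega$ is at most $\bigl(S(a,\mu)+2(\mu_-C(a))^{-1/2}\delta_0(a,\mu)^{-1}\bigr)\varepsilon^{-1}={\mathcal C}(a,\mu)\varepsilon^{-1}$, which with the first step completes the proof.

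The genuine analytic difficulty --- the $C^3$ perturbation analysis of the non-analytic family $\A(\xi)$ and the threshold approximation of $(\A(\xi)+\varepsilon^2 I)^{-1}$ --- has already been resolved in Theorems~\ref{teor2.1} and \ref{teor2.2}, so here the only delicate points are of a bookkeeping nature: verifying the two dilation identities and, above all, checking that in the paper's normalization of $\mathcal G$ the fibre of $\A^0$ restricted to the constants is precisely $\tfrac12\sum g_{ij}\xi_i\xi_j\,P$, and that the lower bounds \eqref{e1.30} and \eqref{e2.36} that control $\A(\xi)$ outside the threshold region also control $\A^0(\xi)$ there.
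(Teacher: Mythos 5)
Your argument is correct and follows the paper's strategy in all essential respects: the unitary dilation reducing the claim to $\|(\A+\eps^2I)^{-1}-(\A^0+\eps^2I)^{-1}\|\le \mathcal C(a,\mu)\eps^{-1}$, the Gelfand decomposition of $\A$, the splitting of $\widetilde\Omega$ at $|\xi|=\delta_0(a,\mu)$, and the use of Theorem~\ref{teor2.2} on the threshold region are exactly the steps in the paper, and your constants assemble to the stated $\mathcal C(a,\mu)$. The one point where you genuinely deviate is the treatment of the effective operator: the paper never fibers $\A^0$ at all, but instead keeps it on the Fourier side and invokes the identity $\mathcal G^{*}[H(\xi)P]\mathcal G=\Phi^{*}[\mathbf 1_{\widetilde\Omega}(\xi)H(\xi)]\Phi$, so that the comparison with $(\A^0+\eps^2I)^{-1}$ reduces to estimating the scalar symbol $\eps^2\bigl(\tfrac12\sum g_{ij}\xi_i\xi_j+\eps^2\bigr)^{-1}$ on $\{|\xi|>\delta_0(a,\mu)\}\subset\R^d$ via \eqref{e2.36}. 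You instead apply $\mathcal G$ to $\A^0$ as a $\Z^d$-periodic operator and compare fibrewise, which obliges you to verify the decomposition $\A^0(\xi)=\tfrac12\sum g_{ij}(D_i+\xi_i)(D_j+\xi_j)$, the commutation of $\A^0(\xi)$ with $P$, and the lower bound $\A^0(\xi)P^{\perp}\ge\mu_-C(a)\pi^2P^{\perp}$ coming from $|\xi+2\pi n|\ge\pi$ for $n\ne0$. Both routes are sound; the paper's multiplier identity is slightly more economical (no fibre analysis of $\A^0$ and no separate $P^{\perp}$ estimate on the threshold region), while your version has the merit of keeping every comparison inside a single fibre $L_2(\Omega)$, which makes the bookkeeping of the three error contributions completely explicit.
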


\begin{proof}
Consider a family of scaling transformations of the form
\begin{equation*}
T_{\varepsilon}u(x):=\varepsilon^{d/2}u(\varepsilon x),\ \
x\in\R^d,\ \ u\in L_{2}(\R^d), \ \ \eps >0.
\end{equation*}
Notice that $T_\eps$ are unitary operators and that
$T_{\varepsilon}\A_{\varepsilon}T_{\varepsilon}^{*}=\varepsilon^{-2}\A(a,\mu)$.
Consequently, by \eqref{e1.8} we have
\begin{multline}\label{e3.2}
(\A_{\varepsilon}+I)^{-1}=T_{\varepsilon}^{*}\varepsilon^{2}(\A+\varepsilon^{2}I)^{-1}T_{\varepsilon}=
T_{\varepsilon}^{*}\mathcal{G}^{*}[\varepsilon^{2}(\A(\xi,a,\mu)+\varepsilon^{2}I)^{-1}]\mathcal{G}T_{\varepsilon}=\\=
T_{\varepsilon}^{*}\mathcal{G}^{*}[\varepsilon^{2}\mathbf{1}_{|\xi|\le\delta_{0}(a,\mu)}(\A(\xi,a,\mu)+\varepsilon^{2}I)^{-1}]\mathcal{G}T_{\varepsilon}+\\+
T_{\varepsilon}^{*}\mathcal{G}^{*}[\varepsilon^{2}\mathbf{1}_{|\xi|>\delta_{0}(a,\mu)}(\A(\xi,a,\mu)+\varepsilon^{2}I)^{-1}]\mathcal{G}T_{\varepsilon},\ \ \varepsilon>0.
\end{multline}
Estimate \eqref{e1.30} and the relations in \eqref{e3.2}  imply the inequality
\begin{equation}\label{e3.3}
\Big\|(\A_{\varepsilon}+I)^{-1}-
T_{\varepsilon}^{*}\mathcal{G}^{*}[\varepsilon^{2}\mathbf{1}_{|\xi|\le\delta_{0}(a,\mu)}(\A(\xi,a,\mu)+
\varepsilon^{2}I)^{-1}]\mathcal{G}T_{\varepsilon}\Big\|\le\frac{\varepsilon}{\sqrt{\mu_{-}C(a)}\,\delta_{0}(a,\mu)},\
\ \varepsilon>0.
\end{equation}
It also follows from estimates \eqref{e2.37} and \eqref{e3.3} that
\begin{multline}\label{e3.4}
\Big\|(\A_{\varepsilon}+I)^{-1}-T_{\varepsilon}^{*}\mathcal{G}^{*}[\varepsilon^{2}\mathbf{1}_{|\xi|\le\delta_{0}(a,\mu)}
\Big(\frac{1}{2}\sum_{i,j=1}^{d}g_{ij}\xi_{i}\xi_{j}+\varepsilon^{2}\Big)^{-1}P]\mathcal{G}T_{\varepsilon}\Big\|\le\\\le
\frac{\varepsilon}{\sqrt{\mu_{-}C(a)}\,\delta_{0}(a,\mu)} + S(a,\mu) \varepsilon,\ \ \varepsilon>0.
\end{multline}
As was shown for example in \cite{Sk} or \cite{BSu1}, for any $H\in L_{\infty}(\widetilde\Omega)$
the relation $\mathcal{G}^{*}[H(\xi)P]\mathcal{G}=\Phi^{*}[\mathbf{1}_{\widetilde\Omega}(\xi)H(\xi)]\Phi$
holds, where the symbol $\Phi$ stands for the Fourier transform.
It remains to note that estimate \eqref{e3.1} is a direct  consequence of \eqref{e2.36} and \eqref{e3.4}.
\end{proof}

We conclude this section with  two almost obvious remarks.

\begin{remark}
Theorem {\rm \ref{teor3.1}} remains valid if, in condition  \eqref{e1.3}, the cubic lattice ${\mathbb Z}^d$
is replaced with an arbitrary periodic lattice $\mathbf{\Gamma}$ in $\mathbb R^d$. Under such replacement,
the constant $\mathcal C$ in estimate \eqref{e3.1}  depends not only on the functions $a$ and $\mu$, but also on
the lattice parameters. The coefficients of the homogenized operator $\mathbb{A}^{0}$ are still determined by
relations \eqref{e2.23} and \eqref{e2.27}, where $\Omega$ is the cell of the lattice $\mathbf{\Gamma}$.
\end{remark}

\begin{remark}
Suppose that, in Theorem   {\rm \ref{teor3.1}}, condition \eqref{e1.4} 
is modified as follows:
\begin{equation*}
M_{k}(a):=\int_{\mathbb{R}^d}|x|^{k}a(x)\,dx< \infty,\ \ \hbox{for } k=1,2, k_0
\end{equation*}
with $k_0$ that satisfies the inequalities $2<k_0<3$.  Then instead of \eqref{e3.1} we have
\begin{equation*}
\|(\mathbb{A}_{\eps}+I)^{-1}-(\mathbb{A}^{0}+I)^{-1}\|\le C\eps^{k_0-2},\ \ \eps>0.
\end{equation*}
Here the constant $C$ depends on $a$, $\mu$ and $k_0$.
\end{remark}


\section{Appendix}
\label{sec_app}

\subsection{Schur test and Hadamard lemma}
\label{subsec_schur}

In this section for the reader convenience we recall the statements of
two classical results.
We begin with a simple version of the Schur test.

\begin{lemma}[\bf Schur test]
\label{lemma4.1}
Let $({\mathcal X},d\rho)$ and  $({\mathcal Y},d\tau)$ be separable measurable spaces with $\sigma$-finite measures, and assume that
 $\B:L_{2}({\mathcal Y},d\tau)\to L_{2}({\mathcal X},d\rho)$ is a linear integral operator with a kernel
$b(x,y)$, $x\in {\mathcal X}$, $y\in \mathcal{Y}$. Assume furthermore that the following conditions hold:
\begin{align*}
\alpha:=\rho{\text -}\!\esssup_{x\in\mathcal{X}}\intop_{\mathcal{Y}}|b(x,y)| \, d\tau(y)<+\infty,
\\
\beta:=\tau{\text -}\!\esssup_{y\in \mathcal{Y}}\intop_{\mathcal{X}}| b(x,y)| \, d\rho(x)<+\infty.
\end{align*}
Then the operator $\B$ is bounded and satisfies the estimate $\|\B\| \le(\alpha\beta)^{1/2}$.
\end{lemma}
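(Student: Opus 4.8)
The plan is to deduce the bound directly from the Cauchy--Schwarz inequality applied to the factorization $|b(x,y)| = |b(x,y)|^{1/2}\cdot|b(x,y)|^{1/2}$, followed by an interchange of the order of integration justified by Tonelli's theorem; here the $\sigma$-finiteness of the two measures and the separability of the spaces are what make Tonelli applicable. This is the classical argument behind the Schur test; the only care needed is measure-theoretic bookkeeping.

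First I would fix $u\in L_{2}(\mathcal{Y},d\tau)$ and estimate, for $\rho$-almost every $x\in\mathcal{X}$,
\begin{equation*}
|\B u(x)|\le\intop_{\mathcal{Y}}|b(x,y)|^{1/2}\cdot|b(x,y)|^{1/2}|u(y)|\,d\tau(y)
\le\Bigl(\intop_{\mathcal{Y}}|b(x,y)|\,d\tau(y)\Bigr)^{1/2}\Bigl(\intop_{\mathcal{Y}}|b(x,y)|\,|u(y)|^{2}\,d\tau(y)\Bigr)^{1/2},
\end{equation*}
where the second step is the Cauchy--Schwarz inequality in $L_{2}(\mathcal{Y},d\tau)$. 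Bounding the first factor by $\alpha^{1/2}$, squaring, integrating in $x$ against $d\rho$, and swapping the integrals by Tonelli's theorem, one obtains
\begin{equation*}
\|\B u\|_{L_{2}(\mathcal{X})}^{2}\le\alpha\intop_{\mathcal{X}}d\rho(x)\intop_{\mathcal{Y}}|b(x,y)|\,|u(y)|^{2}\,d\tau(y)
=\alpha\intop_{\mathcal{Y}}|u(y)|^{2}\Bigl(\intop_{\mathcal{X}}|b(x,y)|\,d\rho(x)\Bigr)d\tau(y)\le\alpha\beta\,\|u\|_{L_{2}(\mathcal{Y})}^{2},
\end{equation*}
the inner integral being bounded by $\beta$ in the last step. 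Hence $\|\B u\|\le(\alpha\beta)^{1/2}\|u\|$ for every $u\in L_{2}(\mathcal{Y},d\tau)$, which yields both the boundedness of $\B$ and the claimed norm estimate.

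I expect the only mildly delicate point to be the measure-theoretic bookkeeping rather than the inequality itself, so I would be explicit about it. Concretely, I would first run the argument with $b$ replaced by $|b|$ and $u$ by a non-negative measurable function, so that all integrands are non-negative and Tonelli applies without reservation; this simultaneously shows that $y\mapsto|b(x,y)|\,|u(y)|$ lies in $L_{1}(\mathcal{Y},d\tau)$ for $\rho$-a.e.\ $x$ (so that $\B u(x)$ is well defined almost everywhere), that $x\mapsto\B u(x)$ is $\rho$-measurable, and that it belongs to $L_{2}(\mathcal{X},d\rho)$. The estimate for a general $u$ then follows at once from the pointwise bound $|\B u(x)|\le\intop_{\mathcal{Y}}|b(x,y)|\,|u(y)|\,d\tau(y)$ and the computation above.
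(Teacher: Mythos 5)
Your proof is correct: it is the standard Schur-test argument via the factorization $|b|=|b|^{1/2}\cdot|b|^{1/2}$, Cauchy--Schwarz, and Tonelli, with the measure-theoretic points (well-definedness and measurability of $\B u$) properly handled. The paper does not actually write out a proof of Lemma~\ref{lemma4.1} but refers to \cite[Theorem 5.2]{Halmos78}, whose argument is exactly the one you give, so there is nothing to criticize.
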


Recalling the notation $\Omega:=[0,1)^{d}$ and letting $\Delta:=[-1,1]^{d}$
we derive from the latter theorem  the following result:

\begin{corollary}
\label{corollary4.2}
For any functions
$\varphi\in L_{1}(\Delta)$ and  $\psi\in L_{\infty}(\Omega\times\Omega)$ the operator
$\B\,:\,L_{2}(\Omega)\,\to\,L_{2}(\Omega)$ given by
\begin{equation*}
\B u(x):=\int_{\Omega}\varphi(x-y)\psi(x,y)u(y)dy,\ \ x\in\Omega,\
\ u\in L_{2}(\Omega),
\end{equation*}
is compact and satisfies the estimate
\begin{equation}\label{e4.1}
\|\B\|\le\|\varphi\|_{L_{1}(\Delta)}\|\psi\|_{L_{\infty}(\Omega\times\Omega)}.
\end{equation}
\end{corollary}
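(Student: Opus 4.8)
The plan is to obtain the norm bound \eqref{e4.1} directly from the Schur test, Lemma~\ref{lemma4.1}, and then to deduce compactness by approximating $\varphi$ with bounded functions.

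First I would apply Lemma~\ref{lemma4.1} with $\mathcal{X}=\mathcal{Y}=\Omega$ equipped with Lebesgue measure and with kernel $b(x,y)=\varphi(x-y)\psi(x,y)$. The elementary point is that $x,y\in\Omega=[0,1)^{d}$ forces $x-y\in[-1,1]^{d}=\Delta$, so that for almost all $x\in\Omega$
\begin{equation*}
\int_{\Omega}|\varphi(x-y)\psi(x,y)|\,dy\le\|\psi\|_{L_{\infty}(\Omega\times\Omega)}\int_{\Delta}|\varphi(z)|\,dz=\|\psi\|_{L_{\infty}(\Omega\times\Omega)}\|\varphi\|_{L_{1}(\Delta)},
\end{equation*}
and, for fixed $y\in\Omega$, the variable $x-y$ likewise ranges within $\Delta$ as $x$ runs over $\Omega$, so the integral of $|b(x,y)|$ in $x$ over $\Omega$ obeys the same bound. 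Hence both constants $\alpha$ and $\beta$ of Lemma~\ref{lemma4.1} are at most $\|\varphi\|_{L_{1}(\Delta)}\|\psi\|_{L_{\infty}(\Omega\times\Omega)}$, and the lemma gives $\|\B\|\le(\alpha\beta)^{1/2}\le\|\varphi\|_{L_{1}(\Delta)}\|\psi\|_{L_{\infty}(\Omega\times\Omega)}$, which is \eqref{e4.1}.

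For compactness, the step requiring care is that $b$ itself need not belong to $L_{2}(\Omega\times\Omega)$, since $\varphi$ is only assumed integrable, so $\B$ is not immediately a Hilbert--Schmidt operator. I would therefore truncate: set $\varphi_{n}:=\varphi\,\mathbf{1}_{\{|\varphi|\le n\}}$, so that $\varphi_{n}\in L_{\infty}(\Delta)$ and $\|\varphi-\varphi_{n}\|_{L_{1}(\Delta)}\to0$ by dominated convergence. The operator $\B_{n}$ with kernel $\varphi_{n}(x-y)\psi(x,y)$ has a bounded kernel on the finite-measure set $\Omega\times\Omega$, hence is Hilbert--Schmidt and in particular compact. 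Applying the bound just proved to $\varphi-\varphi_{n}$ in place of $\varphi$ yields $\|\B-\B_{n}\|\le\|\varphi-\varphi_{n}\|_{L_{1}(\Delta)}\|\psi\|_{L_{\infty}(\Omega\times\Omega)}\to0$, so $\B$ is the operator-norm limit of compact operators and is therefore compact. I do not expect a genuine obstacle here; the only thing that needs a moment's thought is this truncation, which bypasses the failure of square-integrability of the full kernel.
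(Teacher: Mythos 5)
Your proof is correct and follows essentially the same route as the paper: the Schur test gives the norm bound, and compactness follows by approximating $\varphi$ in $L_{1}(\Delta)$ by bounded functions whose associated kernels are Hilbert--Schmidt, then passing to the operator-norm limit. The only (inessential) difference is that you truncate $\varphi$ directly, whereas the paper approximates by $C_{0}^{\infty}(\Delta)$ functions; both yield bounded kernels on the finite-measure set $\Omega\times\Omega$ and the same conclusion.
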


\begin{proof}
By Lemma \ref{lemma4.1} the operator $\B$ is bounded, and estimate
 \eqref{e4.1} holds.  To show that $\B$ is compact we choose a sequence
 of functions $\{\varphi_{n}\}_{n=1}^{\infty}\subset C_{0}^{\infty}(\Delta)$
 such that $\|\varphi_{n}-\varphi\|_{L_1(\Delta)}\to 0$ as $n\to\infty$.
 Denote by $\B_{n}$ the operators with the kernels
$\varphi_{n}(x-y)\psi(x,y)$, $x,y\in\Omega$. By  \eqref{e4.1}
we have
$$
\|\B_{n}-\B\|\le\|\varphi_{n}-\varphi\|_{L_1(\Delta)}\|\psi\|_{L_\infty(\Omega \times \Omega)}\to 0, \quad n\to\infty.
$$
Since for each $n\in\mathbb N$ the kernel
$\varphi_{n}(x-y)\psi(x,y)$, $x,y\in\Omega$, is bounded and  $\Omega$ is a compact set,
then $\B_n$ are Hilbert-Schmidt operators. Therefore, $\B$ is compact.
\end{proof}
Next, we formulate a version of the Hadamard lemma which is used
in this work.
\begin{lemma}[Hadamard lemma]
\label{Adamar}
Let a function $f(x)$, $x\in\R^d$, be three times continuously differentiable.
Then it admits the following representation:
\begin{multline*}
f(y)=f(0)+\sum_{i=1}^{d}y_{i}\partial_{i}f(0)+\frac{1}{2}\sum_{i=1}^{d}\sum_{j=1}^{d}y_{i}y_{j}\partial_{i}\partial_{j}f(0)+
\\
+ \sum_{i=1}^{d}\sum_{j=1}^{d}\sum_{k=1}^{d}y_{i}y_{j}y_{k}\intop_{0}^{1}ds_{1}s_{1}^{2}\intop_{0}^{1}ds_{2}\,s_{2}\intop_{0}^{1}ds_{3} \,\partial_{i}\partial_{j}\partial_{k}f(s_{1}s_{2}s_{3}y),\ \
y\in\R^d.
\end{multline*}
\end{lemma}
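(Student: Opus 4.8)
The plan is to prove the Hadamard lemma by iterated application of the fundamental theorem of calculus to the one-variable function $\varphi(t) := f(ty)$, combined with a careful bookkeeping of the repeated integrations so that the final remainder appears in the precise form stated (the triple integral with the weights $s_1^2 s_2$). Since $f$ is $C^3$ on $\R^d$, the function $\varphi$ is $C^3$ on a neighborhood of $[0,1]$, with $\varphi'(t) = \sum_i y_i \partial_i f(ty)$, $\varphi''(t) = \sum_{i,j} y_i y_j \partial_i \partial_j f(ty)$, and $\varphi'''(t) = \sum_{i,j,k} y_i y_j y_k \partial_i\partial_j\partial_k f(ty)$.

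First I would write the exact Taylor formula with integral remainder in one variable: $\varphi(1) = \varphi(0) + \varphi'(0) + \tfrac12 \varphi''(0) + \tfrac12 \int_0^1 (1-t)^2 \varphi'''(t)\, dt$, which follows from integrating by parts twice (or by direct induction on the order of the remainder). Substituting the expressions for $\varphi(0), \varphi'(0), \varphi''(0), \varphi'''(t)$ in terms of $f$ and its derivatives at $0$ and at $ty$ gives $f(y) = f(0) + \sum_i y_i \partial_i f(0) + \tfrac12 \sum_{i,j} y_i y_j \partial_i\partial_j f(0) + \tfrac12 \sum_{i,j,k} y_i y_j y_k \int_0^1 (1-t)^2 \partial_i\partial_j\partial_k f(ty)\, dt$. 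This already matches the stated formula except for the shape of the remainder integral.

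The remaining step, which is the only genuinely technical point, is to verify the identity $\tfrac12 \int_0^1 (1-t)^2 g(t)\, dt = \int_0^1 ds_1\, s_1^2 \int_0^1 ds_2\, s_2 \int_0^1 ds_3\, g(s_1 s_2 s_3)$ for any continuous $g$, applied with $g(t) = \partial_i\partial_j\partial_k f(ty)$. To see this I would start from the right-hand side: the inner integral in $s_3$ is $\int_0^1 g(s_1 s_2 s_3)\, ds_3$; writing $\int_0^1 s_2 \big(\int_0^1 g(s_1 s_2 s_3) ds_3\big) ds_2$ and using that $\int_0^1 s\, h(s\tau)\, ds = \int_0^1 (1-r) h(r\tau)\, dr$ (an elementary substitution/integration by parts identity valid for continuous $h$) one reduces the double integral in $s_2, s_3$ to $\int_0^1 (1-r)\, g(s_1 r)\, dr$; then the outer integral $\int_0^1 s_1^2 \int_0^1 (1-r) g(s_1 r)\, dr\, ds_1$ is handled by the substitution $u = s_1 r$ (Fubini plus change of variables), which after the elementary computation collapses to $\tfrac12 \int_0^1 (1-u)^2 g(u)\, du$. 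Alternatively, and perhaps more cleanly, I would simply expand the nested integral on the right directly by Fubini's theorem: the set $\{(s_1,s_2,s_3)\in[0,1]^3\}$ maps to the value $s_1 s_2 s_3 = u$, and computing the pushforward of the measure $s_1^2 s_2\, ds_1 ds_2 ds_3$ under $(s_1,s_2,s_3)\mapsto s_1 s_2 s_3$ yields density $\tfrac12 (1-u)^2$ on $[0,1]$; this is a short calculation using $\int\int_{\{s_2 s_3 \ge u/s_1\}\cap[0,1]^2} s_2\, ds_2 ds_3$ type integrals.

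I expect the main obstacle — really the only nontrivial bookkeeping — to be the verification of the weight identity relating $\tfrac12(1-t)^2$ to the iterated-integral kernel $s_1^2 s_2$; everything else is the standard one-variable Taylor formula with integral remainder and the chain rule. I would present the proof by first stating the one-dimensional Taylor identity with remainder $\tfrac12\int_0^1(1-t)^2\varphi'''(t)dt$, then doing the substitution $\varphi(t)=f(ty)$, and finally inserting the lemma on rewriting the remainder integral as the triple integral, with the weight computation carried out via Fubini as above.
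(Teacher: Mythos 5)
Your argument is correct, but note that the paper does not actually prove Lemma \ref{Adamar}: it is stated in the Appendix as a classical auxiliary fact, and the proof implicitly intended by the exact shape of the remainder is the threefold iteration of the elementary Hadamard identity $F(y)=F(0)+\sum_i y_i\int_0^1\partial_i F(sy)\,ds$, applied first to $f$, then to $\partial_i f$ at the point $s_1y$, then to $\partial_i\partial_j f$ at $s_1s_2y$; this produces the weights $s_1^2 s_2$ and the nested integrals automatically (the coefficient $\tfrac12$ on the quadratic term arising as $\int_0^1 s_1\,ds_1\int_0^1 ds_2$), with no separate kernel computation needed. Your route --- the one-variable Taylor formula with integral remainder $\tfrac12\int_0^1(1-t)^2\varphi'''(t)\,dt$ for $\varphi(t)=f(ty)$, followed by the identity $\tfrac12\int_0^1(1-t)^2 g(t)\,dt=\int_0^1 ds_1\,s_1^2\int_0^1 ds_2\,s_2\int_0^1 ds_3\,g(s_1s_2s_3)$ --- is a genuinely different but equally valid decomposition, and your Fubini/pushforward verification of the kernel identity is sound: successive substitutions $u=s_1s_2s_3$, $v=s_1s_2$ turn the triple integral into $\int_0^1\int_0^{s_1}\int_0^v g(u)\,du\,dv\,ds_1=\tfrac12\int_0^1(1-u)^2g(u)\,du$. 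One small caveat: the auxiliary identity you quote in passing, $\int_0^1 s\,h(s\tau)\,ds=\int_0^1(1-r)h(r\tau)\,dr$, is false for a general continuous $h$ (test $h(x)=x$); what is true, and what you actually need, is that the double integral $\int_0^1\int_0^1 s_2\,g(s_1s_2s_3)\,ds_3\,ds_2$ equals $\int_0^1(1-r)g(s_1r)\,dr$, which follows from the same substitution argument. Since your alternative Fubini computation already establishes the kernel identity rigorously, this slip does not affect the proof; it only means the first of your two justifications should be restated in the nested form. In terms of trade-offs, the iterated-Hadamard route is shorter and yields the stated operator-valued expansion \eqref{e2.2} verbatim, while your route isolates the classical scalar Taylor remainder and confines all bookkeeping to a single measure-theoretic identity, which some readers may find more transparent.
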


\subsection{Other methods of determining the coefficients in \eqref{e2.8}.}
\label{subsec4_2_other}
We call the method of computing the coefficients in \eqref{e2.8} that was used in Section \ref{subsec_22} {\sl the first method}.
In this section we describe two other methods.

\noindent
{\bf Second methos}.  For sufficiently small $\xi\in\mathbb{R}^{d}$ we define
\begin{equation*}
\varphi(\xi):=F(\xi)\mathbf{1}_{\Omega},\  \ e(\xi):=\varphi(\xi)/\|\varphi(\xi)\|,\ \ \lambda(\xi):=(\mathbb{A}(\xi,a,\mu)e(\xi),e(\xi)).
\end{equation*}
Then $\lambda(\xi)$ is the lowest eigenvalue of the operator $\mathbb{A}(\xi,a,\mu)$,  and $e(\xi)$ is the corresponding normalized
eigenfunction. The eigenvalue $\lambda(\xi)$ is simple. Therefore, by the Riesz formula, see  \eqref{e2.6}, the functions
$\varphi(\xi)$, $e(\xi)$ and $\lambda(\xi)$ are  $C^{3}$ smooth functions of $\xi$ in some neighbourhood of zero,
and the Taylor formula yields
\begin{equation}\label{dop4.1}
\lambda(\xi)=\lambda_{0}+\sum_{i=1}^{d}\lambda_{i}\xi_{i}+\frac{1}{2}\sum_{i,j=1}^{d}\lambda_{ij}\xi_{i}\xi_{j}+O(|\xi|^{3}),\ \ \xi\to 0;
\end{equation}
\begin{equation}\label{dop4.2}
e(\xi)=e_{0}+\sum_{i=1}^{d}e_{i}\xi_{i}+\frac{1}{2}\sum_{i,j=1}^{d}e_{ij}\xi_{i}\xi_{j}+O(|\xi|^{3}),\ \ \xi\to 0.
\end{equation}
Since $\lambda(0)=0$ and $\lambda(\xi)\ge 0$ in the vicinity of zero, then $\lambda_{i}=0$, $i=1,\dots,d$.
We also have $e_{0}=e(0)=\mathbf{1}_{\Omega}$. Combining \eqref{dop4.1}, \eqref{dop4.2} and the equality
\begin{equation*}
\mathbb{A}(\xi,a,\mu)F(\xi)=\lambda(\xi)(\cdot,e(\xi))e(\xi),
\end{equation*}
we obtain the following formulas for the coefficients in \eqref{e2.8}:
\begin{equation}\label{dop4.3}
G_{0}=0,\ \ G_{i}=0,\ \ i=1,\dots,d,\ \ G_{ij}=\lambda_{ij}(\cdot,\mathbf{1}_{\Omega})\mathbf{1}_{\Omega}=\lambda_{ij}P,\ \ i,j=1\dots,d.
\end{equation}
It remains to compute $\lambda_{ij}$.
Inserting the expansions \eqref{e2.2}, \eqref{dop4.1} and \eqref{dop4.2} in the relation
\begin{equation*}
\mathbb{A}(\xi,a,\mu)e(\xi)=\lambda(\xi)e(\xi),
\end{equation*}
and comparing the coefficients of the first and the second order terms in the resulting relation we conclude that
\begin{equation}\label{dop4.4}
\mathbb{A}(0,a,\mu)e_{i}+\partial_{i}\mathbb{A}(0,a,\mu)\mathbf{1}_{\Omega}=0,\ \ i=1,\dots,d;
\end{equation}
\begin{equation}\label{dop4.5}
\mathbb{A}(0,a,\mu)e_{ij}+\partial_{i}\mathbb{A}(0,a,\mu)e_{j}+\partial_{j}\mathbb{A}(0,a,\mu)e_{i}+\partial_{i}\partial_{j}\mathbb{A}(0,a,\mu)\mathbf{1}_{\Omega}=\lambda_{ij}\mathbf{1}_{\Omega},\ \ i,j=1,\dots,d.
\end{equation}
From \eqref{dop4.4} we deduce that
\begin{equation}\label{dop4.6}
\partial_{i}\mathbb{A}(0,a,\mu)\mathbf{1}_{\Omega}=P^{\bot}\partial_{i}\mathbb{A}(0,a,\mu)\mathbf{1}_{\Omega},\ \ i=1,\dots,d;
\end{equation}
\begin{equation}\label{dop4.7}
P^{\bot}e_{i}=-P^{\bot}\mathbb{A}(0,a,\mu)^{-1}P^{\bot}\partial_{i}\mathbb{A}(0,a,\mu)\mathbf{1}_{\Omega},
\end{equation}
and \eqref{dop4.5} yields the relation
\begin{multline}\label{dop4.8}
\lambda_{ij}=(\lambda_{ij}\mathbf{1}_{\Omega},\mathbf{1}_{\Omega})=(\mathbb{A}(0,a,\mu)e_{ij},\mathbf{1}_{\Omega})+(\partial_{i}\mathbb{A}(0,a,\mu)e_{j},\mathbf{1}_{\Omega})+(\partial_{j}\mathbb{A}(0,a,\mu)e_{i},\mathbf{1}_{\Omega})+\\+(\partial_{i}\partial_{j}\mathbb{A}(0,a,\mu)\mathbf{1}_{\Omega},\mathbf{1}_{\Omega}),\ \ i,j=1,\dots,d.
\end{multline}
Considering \eqref{dop4.6} and the equality
$(\mathbb{A}(0,a,\mu)e_{ij},\mathbf{1}_{\Omega})=(e_{ij},\mathbb{A}(0,a,\mu)\mathbf{1}_{\Omega})=0$
we obtain
\begin{multline*}
(\partial_{i}\mathbb{A}(0,a,\mu)e_{j},\mathbf{1}_{\Omega})=(e_{j},\partial_{i}\mathbb{A}(0,a,\mu)\mathbf{1}_{\Omega})=(P^{\bot}e_{j},\partial_{i}\mathbb{A}(0,a,\mu)\mathbf{1}_{\Omega})=\\=(\partial_{i}\mathbb{A}(0,a,\mu)P^{\bot}e_{j},\mathbf{1}_{\Omega}),\ \ i,j=1,\dots,d.
\end{multline*}
Finally, this relation combined with \eqref{dop4.7} and \eqref{dop4.8} implies that
\begin{multline}\label{dop4.9}
\lambda_{ij}=((\partial_{i}\partial_{j}\A(0,a,\mu)-\partial_{i}\A(0,a,\mu)P^{\perp}\A(0,a,\mu)^{-1}P^{\perp}\partial_{j}\A(0,a,\mu))\mathbf{1}_{\Omega},\mathbf{1}_{\Omega})-\\-(\partial_{j}\A(0,a,\mu)P^{\perp}\A(0,a,\mu)^{-1}P^{\perp}\partial_{i}\A(0,a,\mu))\mathbf{1}_{\Omega},\mathbf{1}_{\Omega}),\ \ i,j=1,\dots,d,
\end{multline}
and \eqref{e2.19} follows from \eqref{dop4.9} and \eqref{e2.19}.

\noindent
{\bf Third method}.
The first method described above is not efficient for obtaining the higher order terms of the asymptotics of the operator
  $\mathbb{A}(\xi,a,\mu)F(\xi)$ and thus for obtaining a more precise approximation of the resolvent $({\mathbb A}_\eps +I)^{-1}$.
  The second method fails to work in the case of a system of equations if the lowest eigenvalue of the operator $\mathbb{A}(0,a,\mu)$
  is not simple.

  Here we consider the third method of computing the coefficients in \eqref{e2.8} that does not have the mentioned demerits.
The coefficients of expansion \eqref{e2.8} are given by the contour integrals \eqref{dop2.1}--\eqref{dop2.3}.
The computation is based on the following representation of the resolvent of operator $\mathbb{A}(0,a,\mu)$:
\begin{equation}\label{dop4.10}
R_{0}(\zeta)=R_{0}(\zeta)P+R_{0}(\zeta)P^{\bot}=-\frac{1}{\zeta}P+R_{0}(\zeta)P^{\bot},\ \ \zeta\in\Gamma.
\end{equation}
We substitute the right-hand side of \eqref{dop4.10} for $R_{0}(\zeta)$ in the contour integrals in \eqref{dop2.1} and \eqref{dop2.2}
and use the fact that the operator-function $R_{0}^{\bot}(\zeta):=R_{0}(\zeta)P^{\bot}$ is holomorphic inside the contour $\Gamma$.
This yields the relations $G_{0}=0$ and
\begin{multline*}
G_{i}=\frac{-1}{2\pi i}\oint_{\Gamma} \Bigl(-\frac{1}{\zeta}P+R_{0}^{\bot}(\zeta)\Bigr)\partial_{i}\mathbb{A}(0,a,\mu)
\Bigl(-\frac{1}{\zeta}P+R_{0}^{\bot}(\zeta)\Bigr)\zeta \,d\zeta=\\=
\frac{-1}{2\pi i}\oint_{\Gamma} \Bigl(-\frac{1}{\zeta}P\Bigr)\partial_{i}\mathbb{A}(0,a,\mu)\Bigl(-\frac{1}{\zeta}P\Bigr)\zeta \, d\zeta=-P\partial_{i}\mathbb{A}(0,a,\mu) P,\ \ i=1,\dots,d.
\end{multline*}
As an immediate  consequence of \eqref{e2.12} we have
\begin{equation}\label{dop4.11}
P\partial_{i}\mathbb{A}(0,a,\mu) P=0,\ \ i=1,\dots,d.
\end{equation}
We insert expansion \eqref{dop4.10} in \eqref{dop2.3}, take into account the fact that the operator-function $R_{0}^{\bot}(\zeta)$
is holomorphic inside the contour $\Gamma$, and use the equality
$$
\frac{1}{2\pi i}\oint_{\Gamma}  \frac{1}{\zeta}R_{0}^{\bot}(\zeta) \,d\zeta = P^\perp
{\mathbb A}(0,a,\mu)^{-1}P^{\perp}
$$
and \eqref{dop4.11}. This leads to the desired representation \eqref{e2.19}.

\subsection{Estimates of the constants $\mathcal{A}_{r}(a)$ and $\mathcal{M}(a)$.}
\label{Append2}
This section focuses on estimates of the constants $\mathcal{A}_{r}(a)$, \hbox{$r>0$}, and  $\mathcal{M}(a)$ defined in
 \eqref{e1.17a}--\eqref{e1.20} in terms of $\|a\|_{L_1}$, $\|a\|_{L_2}$ and $M_{k}(a)$, $k=1,2,3$.
Here we assume that conditions \eqref{e1.1}--\eqref{e1.4} are fulfilled and that in addition $a\in L_{2}(\R^d)$.
The following upper bounds are trivial:
\begin{align}
\label{e1.33}
\mathcal{A}_{r}(a) &\le 2\|a\|_{L_1},\ \ r>0;
\\
\label{e1.34}
\mathcal{M}(a) &\le M_{2}(a).
\end{align}
We turn to the lower bounds.

\begin{lemma}\label{lemma1.6}
Let conditions \eqref{e1.1} and \eqref{e1.4} hold. Then the following lower bound is valid:
\begin{equation}\label{e1.35}
\int_{|z|\le \rho}a(z)\, dz\ge\frac{7}{8}\|a\|_{L_1},\ \ \rho \ge \rho_{0}(a):=2M_{3}^{1/3}(a)\|a\|_{L_1}^{-1/3}.
\end{equation}
\end{lemma}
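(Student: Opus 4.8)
The plan is to read the statement as a Chebyshev-type tail bound for the finite measure $a(z)\,dz$ on $\R^d$, and to control the mass lying outside a large ball by the third moment $M_3(a)$.

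First I would fix $\rho>0$ and split $\|a\|_{L_1}=\int_{|z|\le\rho}a(z)\,dz+\int_{|z|>\rho}a(z)\,dz$; this is legitimate because $a\ge0$ and $a\in L_1(\R^d)$ by \eqref{e1.1}. On the outer region one has $1\le|z|^3/\rho^3$, hence
\begin{equation*}
\int_{|z|>\rho}a(z)\,dz\le\rho^{-3}\int_{|z|>\rho}|z|^3a(z)\,dz\le\rho^{-3}M_3(a),
\end{equation*}
where the finiteness of $M_3(a)$ is exactly hypothesis \eqref{e1.4}. Subtracting from $\|a\|_{L_1}$ yields the elementary lower bound $\int_{|z|\le\rho}a(z)\,dz\ge\|a\|_{L_1}-\rho^{-3}M_3(a)$.

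It then remains to identify for which $\rho$ the right-hand side is at least $\tfrac{7}{8}\|a\|_{L_1}$. This inequality is equivalent to $\rho^{-3}M_3(a)\le\tfrac{1}{8}\|a\|_{L_1}$, i.e.\ to $\rho^3\ge 8M_3(a)\|a\|_{L_1}^{-1}$, i.e.\ to $\rho\ge 2M_3^{1/3}(a)\|a\|_{L_1}^{-1/3}=\rho_0(a)$; here $\|a\|_{L_1}>0$ by \eqref{e1.1}, so $\rho_0(a)$ is a finite positive number and the threshold is well posed. This is precisely the range claimed in the lemma, so the proof is complete. I do not expect any genuine obstacle: the only points worth recording are that both standing hypotheses are actually used (\eqref{e1.1} for $a\ge 0$, $a\in L_1$ and $\|a\|_{L_1}>0$; \eqref{e1.4} for $M_3(a)<\infty$), and that the numerical constant $2$ in $\rho_0(a)$ is simply $8^{1/3}$ coming from the target fraction $7/8$.
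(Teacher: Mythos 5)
Your proof is correct and is essentially identical to the paper's: both use the Chebyshev-type bound $\int_{|z|>\rho}a(z)\,dz\le\rho^{-3}M_{3}(a)$ and then observe that $\rho\ge\rho_{0}(a)$ makes this tail at most $\tfrac{1}{8}\|a\|_{L_1}$. The only difference is that you spell out the elementary algebra identifying the threshold $\rho_{0}(a)$, which the paper leaves implicit.
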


\begin{proof}
The desired estimate \eqref{e1.35}  follows from the inequalities
\begin{equation*}
\int_{|z|>\rho}a(z)\,dz\le \rho^{-3}\int_{|z|>\rho}|z|^{3}a(z) \,dz\le
\rho^{-3}M_{3}(a),\ \ \rho >0.
\end{equation*}
\end{proof}

Define a set
$\Pi_{r}^{\theta}(a):=\{z\in\R^d\,:\,|z|\le \rho_{0}(a),\ \ |\langle
z,\theta\rangle|\le r\},\ \ r>0,\ \ \theta\in\mathbb{S}^{d-1}.$
It is straightforward to check that
\begin{equation}\label{e1.36}
\operatorname{mes}\Pi_{r}^{\theta}(a)\le
2r\varkappa_{d-1} \rho_{0}^{d-1}(a),
\end{equation}
where $\varkappa_{d-1}$ is the volume of a unit ball in $\R^{d-1}$.

\begin{lemma}\label{lemma1.7}
Let conditions \eqref{e1.1} and \eqref{e1.4} be satisfied, and assume that $a\in L_{2}(\R^d)$.
Then the following lower bound holds:
\begin{multline}\label{e1.37}
\int_{B_{\rho_{0}(a)}(0)\setminus\Pi_{r}^{\theta}(a)}a(z)dz\ge\frac{1}{2}\|a\|_{L_1},\
\ |\theta|=1,\ \ r\in(0,r_{0}(a)],\\
r_{0}(a):=\Big(\frac{3}{8}\|a\|_{L_1}\Big)^{2}\Big(2\varkappa_{d-1}\rho_{0}^{d-1}(a)\|a\|_{L_2}^{2}\Big)^{-1}.
\end{multline}
Here the constant $\rho_{0}(a)$ has been defined in \eqref{e1.35}.
\end{lemma}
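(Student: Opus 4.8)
The plan is to combine the concentration estimate of Lemma \ref{lemma1.6} with the elementary volume bound \eqref{e1.36} for the truncated slab $\Pi_{r}^{\theta}(a)$ and the Cauchy--Schwarz inequality. First I would split
\[
\int_{B_{\rho_{0}(a)}(0)\setminus\Pi_{r}^{\theta}(a)}a(z)\,dz=\int_{B_{\rho_{0}(a)}(0)}a(z)\,dz-\int_{B_{\rho_{0}(a)}(0)\cap\Pi_{r}^{\theta}(a)}a(z)\,dz,
\]
and invoke Lemma \ref{lemma1.6} (applicable with $\rho=\rho_{0}(a)$) to bound the first term on the right from below by $\frac{7}{8}\|a\|_{L_1}$. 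Hence it remains to show that the second term does not exceed $\frac{3}{8}\|a\|_{L_1}$ whenever $r\le r_{0}(a)$.

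Next I would estimate $\int_{B_{\rho_{0}(a)}(0)\cap\Pi_{r}^{\theta}(a)}a(z)\,dz\le\int_{\Pi_{r}^{\theta}(a)}a(z)\,dz$ and apply Cauchy--Schwarz, using the assumption $a\in L_{2}(\R^d)$:
\[
\int_{\Pi_{r}^{\theta}(a)}a(z)\,dz\le\|a\|_{L_2(\R^d)}\bigl(\operatorname{mes}\Pi_{r}^{\theta}(a)\bigr)^{1/2}\le\|a\|_{L_2(\R^d)}\bigl(2r\varkappa_{d-1}\rho_{0}^{d-1}(a)\bigr)^{1/2},
\]
where the last step is \eqref{e1.36}. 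The right-hand side is $\le\frac{3}{8}\|a\|_{L_1}$ precisely when $2r\varkappa_{d-1}\rho_{0}^{d-1}(a)\|a\|_{L_2}^{2}\le\bigl(\frac{3}{8}\|a\|_{L_1}\bigr)^{2}$, that is, when $r\le r_{0}(a)$ with $r_{0}(a)$ as in the statement. Subtracting, for such $r$ one obtains
\[
\int_{B_{\rho_{0}(a)}(0)\setminus\Pi_{r}^{\theta}(a)}a(z)\,dz\ge\frac{7}{8}\|a\|_{L_1}-\frac{3}{8}\|a\|_{L_1}=\frac{1}{2}\|a\|_{L_1},
\]
uniformly in the direction $\theta\in\mathbb{S}^{d-1}$, which is the asserted bound.

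There is no genuine obstacle here: the argument is a direct assembly of the already-established estimate of Lemma \ref{lemma1.6}, the volume bound \eqref{e1.36}, and Cauchy--Schwarz. The only point worth flagging is that the hypothesis $a\in L_{2}(\R^d)$ enters exactly once --- to control the mass of $a$ carried by the thin slab $\Pi_{r}^{\theta}(a)$ through its Lebesgue measure --- and the particular value of $r_{0}(a)$ is nothing but the threshold that makes this slab contribution at most $\frac{3}{8}\|a\|_{L_1}$, so that the residual mass stays above $\frac{1}{2}\|a\|_{L_1}$.
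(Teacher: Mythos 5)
Your argument is correct and coincides with the paper's own proof: both rest on the bound $\int_{\Pi_{r}^{\theta}(a)}a(z)\,dz\le\|a\|_{L_2}\bigl(2r\varkappa_{d-1}\rho_{0}^{d-1}(a)\bigr)^{1/2}$ obtained from Cauchy--Schwarz and \eqref{e1.36}, combined with the $\tfrac{7}{8}\|a\|_{L_1}$ concentration estimate \eqref{e1.35}. Your write-up merely makes explicit the subtraction and the verification that $r\le r_{0}(a)$ is exactly the threshold, which the paper leaves implicit.
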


\begin{proof}
By \eqref{e1.36} and the Cauchy-Schwartz inequality, we obtain the estimates
\begin{equation}\label{e1.38}
\int_{\Pi_{r}^{\theta}(a)}a(z)dz\le\|a\|_{L_2}\Big(\operatorname{meas}\Pi_{r}^{\theta}(a)\Big)^{1/2}\le\Big(2r\varkappa_{d-1} \rho_{0}^{d-1}(a)\Big)^{1/2}\|a\|_{L_2},\
\ r>0,\ |\theta|=1.
\end{equation}
Now \eqref{e1.37} is a consequence of \eqref{e1.35} and \eqref{e1.38}.
\end{proof}

The next statement gives a lower bound for the constant $\mathcal{M}(a)$.

\begin{lemma}\label{lemma1.8}
Under the conditions of Lemma {\rm\ref{lemma1.7}} we have
\begin{equation}\label{e1.39}
\mathcal{M}(a)\ge\frac{1}{2}\|a\|_{L_1}r_{0}(a)^{2},
\end{equation}
where $r_{0}(a)$ has been defined in  {\rm(\ref{e1.37})}.
\end{lemma}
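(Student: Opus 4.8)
The plan is to derive a lower bound for the quadratic form $M_a(\theta)=\int_{\R^d}a(z)\langle z,\theta\rangle^2\,dz$ that is uniform over the unit sphere $\theta\in\mathbb{S}^{d-1}$; taking the infimum over $\theta$ then yields the claimed bound for $\mathcal{M}(a)$ by the very definition \eqref{e1.20}.

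First I would fix $\theta$ with $|\theta|=1$ and an arbitrary $r\in(0,r_0(a)]$, and discard from the integral everything outside the set $B_{\rho_0(a)}(0)\setminus\Pi_r^\theta(a)$ that appears in Lemma \ref{lemma1.7}. Since $a\ge0$ and $\langle z,\theta\rangle^2\ge0$, this only decreases the integral, so
\begin{equation*}
M_a(\theta)\ge\intop_{B_{\rho_0(a)}(0)\setminus\Pi_r^\theta(a)}a(z)\langle z,\theta\rangle^2\,dz.
\end{equation*}
The key point is that on this domain one automatically has $|z|\le\rho_0(a)$, so membership in the complement of $\Pi_r^\theta(a)=\{z:|z|\le\rho_0(a),\ |\langle z,\theta\rangle|\le r\}$ is equivalent to $|\langle z,\theta\rangle|>r$; hence $\langle z,\theta\rangle^2\ge r^2$ pointwise there. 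Pulling this constant out of the integral and applying the lower bound \eqref{e1.37} of Lemma \ref{lemma1.7} gives $M_a(\theta)\ge\frac{r^2}{2}\|a\|_{L_1}$. Since $r\in(0,r_0(a)]$ was arbitrary, letting $r\to r_0(a)$ yields $M_a(\theta)\ge\frac12\|a\|_{L_1}r_0(a)^2$; as this holds for every $\theta\in\mathbb{S}^{d-1}$, the definition \eqref{e1.20} gives \eqref{e1.39}.

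The argument is elementary and there is essentially no obstacle here: the only thing one needs to notice is the pointwise inequality $\langle z,\theta\rangle^2\ge r^2$ valid on $B_{\rho_0(a)}(0)\setminus\Pi_r^\theta(a)$, after which Lemma \ref{lemma1.7} does all the work. The whole content of the lemma has in effect been packaged into the measure estimate \eqref{e1.36} and the two preceding lemmata.
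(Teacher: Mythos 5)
Your proof is correct and follows essentially the same route as the paper: restrict the integral to $B_{\rho_{0}(a)}(0)\setminus\Pi_{r}^{\theta}(a)$, use the pointwise bound $\langle z,\theta\rangle^{2}\ge r^{2}$ there, and invoke Lemma \ref{lemma1.7}. The only cosmetic difference is that the paper takes $r=r_{0}(a)$ directly instead of passing to the limit over $r\in(0,r_{0}(a)]$.
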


\begin{proof}
By \eqref{e1.37} for any $\theta\in\mathbb{S}^{d-1}$ the following estimates hold:
\begin{equation*}
M_{a}(\theta)=\int_{\R^d}a(z)\langle z,\theta\rangle^{2}dz\ge
\int_{B_{\rho_{0}(a)}(0)\setminus\Pi_{r_{0}(a)}^{\theta}(a)}a(z)\langle
z,\theta\rangle^{2}dz\ge \frac{1}{2}\|a\|_{L_1} r_{0}^{2}(a).
\end{equation*}
Combining these estimates with \eqref{e1.20} yields \eqref{e1.39}.
\end{proof}

\begin{lemma}\label{lemma1.9}
Let the conditions of Lemma {\rm\ref{lemma1.7}} hold. Then the function  $\hat A(y)$ defined in \eqref{e1.17a} admits
the estimate
\begin{equation*}
\hat A(y)\ge\frac{1}{8}\|a\|_{L_1}r_{0}^{2}(a)|y|^{2},\ \
|y|\le(2\rho_{0}(a))^{-1};
\end{equation*}
here $\rho_{0}(a)$ and  $r_{0}(a)$ have been introduced in \eqref{e1.35} and
\eqref{e1.37}, respectively.
\end{lemma}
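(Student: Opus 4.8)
The plan is to read off the bound directly from the representation $\hat A(y)=2\int_{\R^d}\sin^{2}(\langle z,y\rangle/2)\,a(z)\,dz$ in \eqref{e1.17a}, keeping only the part of the integral over the set where Lemma \ref{lemma1.7} already controls the mass of $a$. For $y\neq 0$ put $\theta=y/|y|\in\mathbb{S}^{d-1}$ (the case $y=0$ being trivial) and write, for brevity, $G_{\theta}:=B_{\rho_{0}(a)}(0)\setminus\Pi_{r_{0}(a)}^{\theta}(a)$. Since the integrand in the formula for $\hat A$ is non-negative, discarding the complement of $G_{\theta}$ can only decrease $\hat A(y)$.

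On $G_{\theta}$ two elementary observations are at our disposal. First, $|z|<\rho_{0}(a)$, so for $|y|\le(2\rho_{0}(a))^{-1}$ we get $|\langle z,y\rangle|\le\rho_{0}(a)|y|\le 1/2$. Second, $|\langle z,\theta\rangle|\ge r_{0}(a)$, hence $\langle z,y\rangle^{2}=|y|^{2}\langle z,\theta\rangle^{2}\ge r_{0}(a)^{2}|y|^{2}$. I would combine these with the elementary inequality $\sin^{2}(t/2)\ge t^{2}/8$, valid for $|t|\le 1/2$ (it follows, e.g., from $\sin x\ge x-x^{3}/6\ge x(1-1/96)$ on $[0,1/4]$ together with $(1-1/96)^{2}>1/2$, or from the monotonicity of $x\mapsto(\sin x)/x$ on $(0,\pi)$), to obtain
$$
\sin^{2}\Bigl(\frac{\langle z,y\rangle}{2}\Bigr)\ge\frac{\langle z,y\rangle^{2}}{8}\ge\frac{r_{0}(a)^{2}|y|^{2}}{8},\qquad z\in G_{\theta},\quad |y|\le(2\rho_{0}(a))^{-1}.
$$
Integrating this lower bound over $G_{\theta}$ and then applying Lemma \ref{lemma1.7} with $r=r_{0}(a)$, which gives $\int_{G_{\theta}}a(z)\,dz\ge\tfrac12\|a\|_{L_1}$, yields
$$
\hat A(y)\ge 2\int_{G_{\theta}}\sin^{2}\Bigl(\frac{\langle z,y\rangle}{2}\Bigr)a(z)\,dz\ge\frac{r_{0}(a)^{2}|y|^{2}}{4}\int_{G_{\theta}}a(z)\,dz\ge\frac{1}{8}\|a\|_{L_1}\,r_{0}(a)^{2}\,|y|^{2}
$$
for $|y|\le(2\rho_{0}(a))^{-1}$, which is the assertion.

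I do not expect a serious obstacle: the argument is a routine combination of Lemma \ref{lemma1.7} with a trigonometric estimate. The only point that needs care is the precise constant. The crude bound $\sin^{2}(t/2)\ge t^{2}/\pi^{2}$ (Jordan's inequality) would only give the factor $\pi^{-2}<\tfrac18$; one has to exploit that on the admissible range $|y|\le(2\rho_{0}(a))^{-1}$ the quantity $\langle z,y\rangle$ stays below $1/2$, so that the sharper inequality $\sin^{2}(t/2)\ge t^{2}/8$ is available and the constant $\tfrac18$ is attained exactly.
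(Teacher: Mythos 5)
Your proof is correct and coincides with the paper's argument: the paper restricts the integral to $B_{\rho_0(a)}(0)\setminus\Pi_{r_0(a)}^{\theta}(a)$, applies the equivalent elementary bound $1-\cos t\ge\tfrac14 t^2$ for $|t|\le\tfrac12$ (which is the same as your $\sin^2(t/2)\ge t^2/8$), and invokes Lemma~\ref{lemma1.7} exactly as you do. No issues.
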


\begin{proof}
Consider the set of $y\in\R^d$ such that $\displaystyle 0<|y|\le(2\rho_{0}(a))^{-1}$, and denote
by  $\theta$ the vector $y/|y|$.
Taking into account an elementary estimate $1-\cos t\ge\frac{1}{4}t^{2}$,
$|t|\le 1/2$, by Lemma  \ref{lemma1.7}, we obtain
\begin{multline*}
\hat A(y)=\int_{\R^d}(1-\cos\langle z,y\rangle)a(z)\,dz\ge
\int_{B_{\rho_{0}(a)}}(1-\cos\langle z,y\rangle)a(z)\,dz\ge
\frac{1}{4}\int_{B_{\rho_{0}(a)}}\langle z,y\rangle^{2}a(z)\,dz\ge
\\
\ge\frac{1}{4}|y|^{2}\int_{B_{\rho_{0}(a)}\setminus\Pi_{r_{0}(a)}^{\theta}(a)}\langle
z,\theta\rangle^{2}a(z)\, dz\ge\frac{1}{8}|y|^{2}\|a\|_{L_1}r_{0}^{2}(a).
\end{multline*}
\end{proof}

Introducing the notation
\begin{equation*}
Q_{\tau}^{y}(a):=\{z\in B_{\rho_{0}(a)}(0)\,:\,1-\cos\langle
z,y\rangle\le\tau\},\ \ |y|\ge(2\rho_{0}(a))^{-1},\ \ \tau\in(0,1/2],
\end{equation*}
one can easily check that the following relation is valid:
\begin{multline}\label{e1.41}
Q_{\tau}^{y}(a)=\bigcup_{n\in\mathbb{Z}}Q_{\tau,n}^{y}(a),\\
Q_{\tau,n}^{y}(a):=\{z\in B_{\rho_{0}(a)}(0)\,:\,2\pi
n-\operatorname{arccos}(1-\tau)\le\langle z,y\rangle\le 2\pi
n+\operatorname{arccos}(1-\tau)\},\\ |y|\ge(2 \rho_{0}(a))^{-1},\ \
\tau\in(0,1/2].
\end{multline}

\begin{lemma}\label{lemma1.10-}
Assume that conditions \eqref{e1.1} and \eqref{e1.4} are fulfilled.
Then the following estimates take place:
\begin{equation}\label{e1.42}
\#\{n\in\mathbb{Z}\,:\,Q_{\tau,n}^{y}(a)\not=\varnothing\}\le\pi^{-1}|y| \rho_{0}(a)+2,\
\ |y|\ge(2 \rho_{0}(a))^{-1},\ \ \tau\in(0,1/2];
\end{equation}
\begin{equation}\label{e1.43}
\operatorname{mes}Q_{\tau,n}^{y}(a)\le2|y|^{-1}\operatorname{arccos}(1-\tau)\varkappa_{d-1} \rho_{0}^{d-1}(a),\
\ |y|\ge(2 \rho_{0}(a))^{-1},\ \ \tau\in(0,1/2].
\end{equation}
\end{lemma}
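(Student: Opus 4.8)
The plan is to reduce both inequalities to one-dimensional geometry by slicing $\R^d$ in the direction $\theta:=y/|y|$. Writing $z=t\theta+z'$ with $t=\langle z,\theta\rangle\in\R$ and $z'$ in the orthogonal complement $\theta^{\perp}\cong\R^{d-1}$, one has $\langle z,y\rangle=t|y|$ and $|z|^{2}=t^{2}+|z'|^{2}$, so that the two defining conditions of $Q_{\tau,n}^{y}(a)$ become $t^{2}+|z'|^{2}\le\rho_{0}^{2}(a)$ and $2\pi n-\operatorname{arccos}(1-\tau)\le t|y|\le 2\pi n+\operatorname{arccos}(1-\tau)$. Throughout I would use the crude bound $\operatorname{arccos}(1-\tau)\le\operatorname{arccos}(1/2)=\pi/3$, which holds precisely because $\tau\le 1/2$.

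To prove \eqref{e1.42}, note that if $Q_{\tau,n}^{y}(a)\not=\varnothing$ then there is a point $z$ with $|z|\le\rho_{0}(a)$ whose projection $t|y|=\langle z,y\rangle$ lies in $[2\pi n-\operatorname{arccos}(1-\tau),\,2\pi n+\operatorname{arccos}(1-\tau)]$; since $|\langle z,y\rangle|\le|z|\,|y|\le\rho_{0}(a)|y|$, this forces $2\pi n\in[-\rho_{0}(a)|y|-\pi/3,\ \rho_{0}(a)|y|+\pi/3]$. Hence the admissible indices $n$ range over an interval of length $\pi^{-1}\rho_{0}(a)|y|+1/3$, and since the number of integers in an interval does not exceed its length plus $1$, we get at most $\pi^{-1}|y|\rho_{0}(a)+4/3\le\pi^{-1}|y|\rho_{0}(a)+2$, which is \eqref{e1.42}.

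For \eqref{e1.43} I would integrate in the $(t,z')$ coordinates, first over $z'$ and then over $t$. For each fixed admissible $t$ the fibre $\{z':t^{2}+|z'|^{2}\le\rho_{0}^{2}(a)\}$ is a $(d-1)$-dimensional ball of radius $\sqrt{\rho_{0}^{2}(a)-t^{2}}\le\rho_{0}(a)$, hence of $(d-1)$-dimensional volume at most $\varkappa_{d-1}\rho_{0}^{d-1}(a)$; and the set of admissible $t$ is contained in an interval of length $2|y|^{-1}\operatorname{arccos}(1-\tau)$, as read off from the reformulated constraint. By Fubini's theorem $\operatorname{mes}Q_{\tau,n}^{y}(a)$ is therefore at most $2|y|^{-1}\operatorname{arccos}(1-\tau)\,\varkappa_{d-1}\rho_{0}^{d-1}(a)$, which is \eqref{e1.43}.

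I do not expect any genuine obstacle here: both bounds are elementary consequences of the slab decomposition above. The only points requiring attention are the elementary inequality $\operatorname{arccos}(1-\tau)\le\pi/3$ for $\tau\in(0,1/2]$ and the counting estimate ``the number of integers in $[A,B]$ is at most $B-A+1$'' used for \eqref{e1.42}. One may note, finally, that the hypothesis $|y|\ge(2\rho_{0}(a))^{-1}$ is not actually needed for these two inequalities; it is kept here only for consistency with the surrounding statements.
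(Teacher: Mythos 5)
Your proof is correct and follows essentially the same route as the paper: projecting onto the direction $y/|y|$ to count the admissible indices $n$ for \eqref{e1.42}, and slicing the slab against the ball of radius $\rho_{0}(a)$ via Fubini for \eqref{e1.43}. Your counting step is even marginally sharper (you get $+4/3$ where the paper settles for $+2$), and your observation that the hypothesis $|y|\ge(2\rho_{0}(a))^{-1}$ is not needed here is accurate.
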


\begin{proof}
By the definition of $Q_{\tau,n}^{y}(a)$, for any vector $z\in Q_{\tau,n}^{y}(a)$ we have
\begin{equation*}
\left\{
\begin{matrix}
(2n-1)\pi\le\langle z,y\rangle\le(2n+1)\pi,\\[2mm]
-|y| \rho_{0}(a)\le\langle z,y\rangle\le |y| \rho_{0}(a).
\end{matrix}
\right.
\end{equation*}
Consequently, if the set $Q_{\tau,n}^{y}(a)$ is not empty, then the following inequalities hold:
\begin{equation*}
\left\{
\begin{matrix}
-|y| \rho_{0}(a)\le (2n+1)\pi,\\[2mm]
(2n-1)\pi\le |y| \rho_{0}(a).
\end{matrix}
\right.
\end{equation*}
Thus for such sets $Q_{\tau,n}^{y}(a)$ we obtain the estimate $|n|\le(2\pi)^{-1}|y|\rho_{0}(a)+1/2$,
 which in turn implies \eqref{e1.42}.

In order to estimate the measure of $Q_{\tau,n}^{y}(a)$, one can choose an orthonormal basis
in $\mathbb R^d$  in such a way that the first coordinate vector is directed along the vector $y$.
Under this choice, for any $z\in Q_{\tau,n}^{y}(a)$, we obtain the inequalities
\begin{equation*}
2\pi n-\operatorname{arccos}\,(1-\tau)\le z_{1}|y|\le 2\pi
n+\operatorname{arccos}\,(1-\tau),\ \ z_{2}^{2}+\dots+z_{d}^{2}\le \rho^{2}_{0}(a),
\end{equation*}
This yields \eqref{e1.43}.
\end{proof}

From \eqref{e1.41}--\eqref{e1.43} we deduce
\begin{lemma}\label{lemma1.10}
Under conditions \eqref{e1.1} and \eqref{e1.4} the following estimate is satisfied:
\begin{multline}\label{e1.44}
\operatorname{mes}Q_{\tau}^{y}(a)\le\operatorname{arccos}(1-\tau)N(a),\\
\hbox{\rm with}\ \ N(a):=\Big(\frac{2}{\pi}+8\Big) \varkappa_{d-1} \rho_{0}^{d}(a),\ \
|y|\ge(2\rho_{0}(a))^{-1},\ \ \tau\in(0,1/2];
\end{multline}
here the quantity $\rho_{0}(a)$ is defined in  \eqref{e1.35}.
\end{lemma}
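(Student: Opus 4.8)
The plan is to deduce the estimate directly from the decomposition \eqref{e1.41} together with the two bounds \eqref{e1.42} and \eqref{e1.43} proved in Lemma \ref{lemma1.10-}. First I would fix $y\in\R^d$ with $|y|\ge(2\rho_{0}(a))^{-1}$ and $\tau\in(0,1/2]$, and apply countable subadditivity of the Lebesgue measure to the union in \eqref{e1.41}, so that
$$
\operatorname{mes}Q_{\tau}^{y}(a)\le\sum_{n\,:\,Q_{\tau,n}^{y}(a)\ne\varnothing}\operatorname{mes}Q_{\tau,n}^{y}(a).
$$
By \eqref{e1.43} each summand is at most $2|y|^{-1}\operatorname{arccos}(1-\tau)\varkappa_{d-1}\rho_{0}^{d-1}(a)$, and by \eqref{e1.42} the number of nonzero summands does not exceed $\pi^{-1}|y|\rho_{0}(a)+2$. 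Combining these two facts yields
$$
\operatorname{mes}Q_{\tau}^{y}(a)\le\bigl(\pi^{-1}|y|\rho_{0}(a)+2\bigr)\,2|y|^{-1}\operatorname{arccos}(1-\tau)\,\varkappa_{d-1}\rho_{0}^{d-1}(a).
$$

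Next I would simplify the prefactor. Expanding the product gives $\bigl(\pi^{-1}|y|\rho_{0}(a)+2\bigr)\,2|y|^{-1}=2\pi^{-1}\rho_{0}(a)+4|y|^{-1}$, and the assumption $|y|\ge(2\rho_{0}(a))^{-1}$, i.e. $|y|^{-1}\le2\rho_{0}(a)$, gives $2\pi^{-1}\rho_{0}(a)+4|y|^{-1}\le\bigl(2\pi^{-1}+8\bigr)\rho_{0}(a)$. Substituting this back, one obtains
$$
\operatorname{mes}Q_{\tau}^{y}(a)\le\Bigl(\frac{2}{\pi}+8\Bigr)\varkappa_{d-1}\rho_{0}^{d}(a)\operatorname{arccos}(1-\tau)=N(a)\operatorname{arccos}(1-\tau),
$$
which is exactly \eqref{e1.44}.

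I do not expect any genuine obstacle here: the argument is a bookkeeping combination of facts already established in Lemma \ref{lemma1.10-}, and the only slightly delicate point — that the slices $Q_{\tau,n}^{y}(a)$ are actually pairwise disjoint, since $\operatorname{arccos}(1-\tau)\le\operatorname{arccos}(1/2)=\pi/3<\pi$ for $\tau\in(0,1/2]$ so the intervals $[2\pi n-\operatorname{arccos}(1-\tau),\,2\pi n+\operatorname{arccos}(1-\tau)]$ do not overlap — is not even needed, as plain subadditivity suffices for the upper bound.
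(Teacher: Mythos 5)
Your argument is correct and is exactly the (implicit) argument of the paper, which simply states that Lemma \ref{lemma1.10} follows from \eqref{e1.41}--\eqref{e1.43}: subadditivity over the slices, the bound \eqref{e1.43} on each slice, the count \eqref{e1.42}, and the simplification $4|y|^{-1}\le 8\rho_{0}(a)$ yield precisely the constant $N(a)=(2/\pi+8)\varkappa_{d-1}\rho_{0}^{d}(a)$. No issues.
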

Using the Cauchy-Schwartz inequality we derive from \eqref{e1.44} that
\begin{equation}\label{e1.45}
\int_{Q_{\tau}^{y}(a)}a(z)\, dz\le\|a\|_{L_2}\Big(\operatorname{arccos}(1-\tau)N(a)\Big)^{1/2},\
\ |y|\ge(2 \rho_{0}(a))^{-1},\ \ \tau\in(0,1/2].
\end{equation}
Now our aim is to construct a semi-open interval $(0,\tau_0(a)]\subset(0,1/2]$ such that for all  $\tau\in(0,\tau_0(a)]$
the right-hand side in \eqref{e1.45} does not exceed $\frac{3}{8}\|a\|_{L_{1}}$.
To this end we can choose
\begin{equation}\label{e4.14+}
\tau_0(a):=\left\{
\begin{array}{ll}
1/2,\ \ \  \ \ \text{ if } \Big(\frac{3}{8}\|a\|_{L_1}\|a\|_{L_2}^{-1}\Big)^{2} >\pi N(a);\\[3mm]
\min \Big\{1/2,
1-\cos\Big(N(a)^{-1}\Big(\frac{3}{8}\|a\|_{L_1}\|a\|_{L_2}^{-1}\Big)^{2}\Big) \Big\},
\hbox{ otherwise.}
\end{array}
\right.
\end{equation}
Considering \eqref{e1.35}, \eqref{e1.45} and \eqref{e4.14+} we obtain
\begin{lemma}\label{lemma1.11}
Under the assumptions of Lemma {\rm\ref{lemma1.7}} the following estimate holds:
\begin{equation}\label{e1.46}
\int_{B_{\rho_{0}(a)}(0)\setminus
Q_{\tau}^{y}(a)}a(z)dz\ge\frac{1}{2}\|a\|_{L_1},\ \ |y|\ge(2\rho_{0}(a))^{-1},\ \
0<\tau\le\tau_{0}(a);
\end{equation}
here $\rho_{0}(a)$ and $N(a)$ are defined in  {\rm(\ref{e1.35})} and
{\rm(\ref{e1.44})}, respectively.
\end{lemma}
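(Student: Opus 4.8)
The plan is to obtain \eqref{e1.46} as the difference of two mass bounds that are already available. By Lemma \ref{lemma1.6} the ball $B_{\rho_0(a)}(0)$ carries at least $\tfrac78\|a\|_{L_1}$ of the total mass of $a$, and by \eqref{e1.45} the "bad" set $Q_\tau^y(a)$ carries at most $\|a\|_{L_2}\bigl(\arccos(1-\tau)\,N(a)\bigr)^{1/2}$ of it, uniformly for $|y|\ge(2\rho_0(a))^{-1}$ and $\tau\in(0,1/2]$. Since $Q_\tau^y(a)\subset B_{\rho_0(a)}(0)$ by the very definition of $Q_\tau^y(a)$, I would simply write
\begin{equation*}
\int_{B_{\rho_0(a)}(0)\setminus Q_\tau^y(a)}a(z)\,dz=\int_{B_{\rho_0(a)}(0)}a(z)\,dz-\int_{Q_\tau^y(a)}a(z)\,dz\ge\tfrac78\|a\|_{L_1}-\|a\|_{L_2}\bigl(\arccos(1-\tau)\,N(a)\bigr)^{1/2},
\end{equation*}
so it remains only to check that the subtracted term is at most $\tfrac38\|a\|_{L_1}$ for $\tau\in(0,\tau_0(a)]$; then the right-hand side is $\ge\tfrac78\|a\|_{L_1}-\tfrac38\|a\|_{L_1}=\tfrac12\|a\|_{L_1}$, which is exactly the claim.

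The second step is to verify that the piecewise choice of $\tau_0(a)$ in \eqref{e4.14+} was made precisely so that this last inequality holds, i.e. that $\arccos(1-\tau)\le c(a):=N(a)^{-1}\bigl(\tfrac38\|a\|_{L_1}\|a\|_{L_2}^{-1}\bigr)^2$ whenever $0<\tau\le\tau_0(a)$. Here I would split into the two cases of \eqref{e4.14+}. If $c(a)>\pi$, then $\tau_0(a)=1/2$ and for $\tau\le1/2$ one has $1-\tau\ge1/2$, hence $\arccos(1-\tau)\le\arccos(1/2)=\pi/3<\pi<c(a)$, so there is nothing more to do. If $c(a)\le\pi$, then $\tau_0(a)=\min\{1/2,\,1-\cos c(a)\}$; for such $\tau$ we have $1-\tau\ge\cos c(a)$ and $1-\tau\in[-1,1]$, and since $\arccos$ is non-increasing on $[-1,1]$ and $c(a)\in[0,\pi]$ is exactly $\arccos(\cos c(a))$, this yields $\arccos(1-\tau)\le c(a)$, as needed. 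Combining this with the display from the first step gives \eqref{e1.46}.

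I do not expect a genuine obstacle here: the two quantitative inputs — the tail bound of Lemma \ref{lemma1.6} and the measure/Cauchy–Schwarz estimate \eqref{e1.45} — are already established, and what remains is bookkeeping. The only mildly delicate point is the case analysis in \eqref{e4.14+}, where one must keep the argument of $\arccos$ in $[-1,1]$ (ensured by $\tau\le1/2$, and in the second case by $c(a)\le\pi$) and use the monotonicity of $\arccos$; once this is handled, assembling the pieces in the order above completes the proof.
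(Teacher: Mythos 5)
Your proposal is correct and follows exactly the paper's route: the paper obtains Lemma \ref{lemma1.11} by combining the mass bound \eqref{e1.35}, the Cauchy--Schwarz estimate \eqref{e1.45}, and the choice of $\tau_{0}(a)$ in \eqref{e4.14+}, which is precisely your subtraction argument. Your verification of the case analysis in \eqref{e4.14+} is a correct spelling-out of a step the paper leaves implicit.
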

As a consequence of the latter statement we have
\begin{lemma}\label{lemma1.12}
Under the conditions of Lemma {\rm\ref{lemma1.7}} the following estimate
is satisfied:
\begin{equation*}
\hat A(y)\ge\frac{1}{2}\|a\|_{L_1}\tau_{0}(a),\ \
|y|\ge(2\rho_{0}(a))^{-1};
\end{equation*}
here the constants $\rho_{0}(a)$ and $\tau_{0}(a)$ are introduced in \eqref{e1.35} and \eqref{e4.14+}, respectively.
\end{lemma}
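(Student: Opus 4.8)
The plan is to mimic the dichotomy already used in the proof of Lemma \ref{lemma1.9}, but now in the regime of large $|y|$, and to feed the mass estimate of Lemma \ref{lemma1.11} into it. Since $a\ge 0$ and $1-\cos t\ge 0$, I would first discard the contribution of $\{|z|>\rho_0(a)\}$ in the defining formula \eqref{e1.17a} and write, for $|y|\ge(2\rho_0(a))^{-1}$,
\[
\hat A(y)=\int_{\R^d}(1-\cos\langle z,y\rangle)a(z)\,dz\ge\int_{B_{\rho_0(a)}(0)}(1-\cos\langle z,y\rangle)a(z)\,dz.
\]
Then I would split $B_{\rho_0(a)}(0)$ into the ``nearly resonant'' set $Q_\tau^y(a)$, on which $1-\cos\langle z,y\rangle\le\tau$, and its complement within the ball, on which $1-\cos\langle z,y\rangle>\tau$; here $\tau\in(0,1/2]$ is a free parameter to be fixed later. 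On the complement the integrand is bounded below by $\tau\,a(z)$, so
\[
\hat A(y)\ge\int_{B_{\rho_0(a)}(0)\setminus Q_\tau^y(a)}(1-\cos\langle z,y\rangle)a(z)\,dz\ge\tau\int_{B_{\rho_0(a)}(0)\setminus Q_\tau^y(a)}a(z)\,dz.
\]

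The second ingredient is estimate \eqref{e1.46} of Lemma \ref{lemma1.11}: provided $0<\tau\le\tau_0(a)$ one has $\int_{B_{\rho_0(a)}(0)\setminus Q_\tau^y(a)}a(z)\,dz\ge\tfrac12\|a\|_{L_1}$, uniformly over all $y$ with $|y|\ge(2\rho_0(a))^{-1}$. It therefore suffices to choose $\tau$ as large as permitted, namely $\tau=\tau_0(a)$, and to combine the two displays; this yields $\hat A(y)\ge\tfrac12\|a\|_{L_1}\tau_0(a)$, which is exactly the asserted inequality. (Together with Lemma \ref{lemma1.9}, this is what upgrades the qualitative positivity in \eqref{e1.19} to an explicit lower bound for $\mathcal A_r(a)$.)

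Essentially no real obstacle remains at this stage: the entire difficulty has already been absorbed into Lemma \ref{lemma1.11} and, upstream, into the measure estimates for $Q^y_{\tau,n}(a)$ in Lemmata \ref{lemma1.10-}--\ref{lemma1.10} together with the Cauchy--Schwarz bound \eqref{e1.45} that pins down the explicit value of $\tau_0(a)$. The only point deserving a line of care is to check that $\tau=\tau_0(a)$ is admissible in Lemma \ref{lemma1.11}, i.e. that $\tau_0(a)\in(0,1/2]$: positivity follows from \eqref{e4.14+} because $\|a\|_{L_1}>0$ and $N(a)<\infty$, while $\tau_0(a)\le 1/2$ is built into the definition \eqref{e4.14+} via the $\min\{1/2,\cdot\}$; this guarantees that the scalar $\tau_0(a)$ pulled out of the integral is a genuine $y$-independent positive constant.
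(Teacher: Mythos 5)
Your proof is correct and follows exactly the route the paper intends: the paper states Lemma \ref{lemma1.12} as an immediate consequence of Lemma \ref{lemma1.11}, and the implicit argument is precisely your split of $B_{\rho_0(a)}(0)$ into $Q_{\tau}^{y}(a)$ and its complement, bounding the integrand below by $\tau a(z)$ there and taking $\tau=\tau_0(a)$. The admissibility check $\tau_0(a)\in(0,1/2]$ is a sensible extra line of care.
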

Finally, recalling the definition of  $r_{0}(a)$ in \eqref{e1.37} and that of
 $\tau_{0}(a)$ in \eqref{e4.14+}, and taking into account   \eqref{e1.33} and \eqref{e1.34},
 by Lemmas  \ref{lemma1.8}, \ref{lemma1.9} and \ref{lemma1.12}, we obtain
\begin{proposition}\label{prop1.13}
Let conditions \eqref{e1.1} and \eqref{e1.4} be fulfilled, and assume that
$a\in L_{2}(\R^d)$.  Then the following estimates hold:
\begin{equation*}
\frac{1}{2}\|a\|_{L_1}r_{0}(a)^{2}\le\mathcal{M}(a)\le M_{2}(a);
\end{equation*}
\begin{equation*}
\min\{\frac{1}{8}\|a\|_{L_1}r_{0}(a)^{2}r^{2},\frac{1}{2}\|a\|_{L_1}\tau_{0}(a)\}\le\mathcal{A}_{r}(a)\le 2\|a\|_{L_1},\
\ r>0.
\end{equation*}
\end{proposition}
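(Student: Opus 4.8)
The plan is to assemble Proposition \ref{prop1.13} directly from the lemmas established above, handling separately the two-sided bounds for $\mathcal{M}(a)$ and for $\mathcal{A}_{r}(a)$.

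For $\mathcal{M}(a)$ there is essentially nothing left to do: the upper bound $\mathcal{M}(a)\le M_2(a)$ is the trivial estimate \eqref{e1.34}, and the lower bound $\tfrac12\|a\|_{L_1}r_0(a)^2\le\mathcal{M}(a)$ is precisely Lemma \ref{lemma1.8}, see \eqref{e1.39}.

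For $\mathcal{A}_{r}(a)=\min_{|y|\ge r}\hat A(y)$ the upper bound $\mathcal{A}_{r}(a)\le 2\|a\|_{L_1}$ is \eqref{e1.33}. For the lower bound I would split the set $\{y\in\R^d:\,|y|\ge r\}$ into the part $\{y:\,r\le|y|\le(2\rho_0(a))^{-1}\}$ (which may be empty) and the part $\{y:\,|y|\ge(2\rho_0(a))^{-1}\}$. On the first part Lemma \ref{lemma1.9} yields $\hat A(y)\ge\tfrac18\|a\|_{L_1}r_0^2(a)|y|^2\ge\tfrac18\|a\|_{L_1}r_0^2(a)r^2$, while on the second part Lemma \ref{lemma1.12} gives $\hat A(y)\ge\tfrac12\|a\|_{L_1}\tau_0(a)$. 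Taking the infimum over the union produces exactly the displayed minimum; when $r>(2\rho_0(a))^{-1}$ the first region is empty and only Lemma \ref{lemma1.12} contributes, which still dominates the claimed minimum, so the estimate is uniform in $r>0$.

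Since each individual inequality has already been proved, there is no genuine obstacle here; the only point deserving a little care is that the decomposition of $\{|y|\ge r\}$ must cover every regime of the parameter $r$, in particular the case in which $r$ is so large that Lemma \ref{lemma1.9} is vacuous. Finally, I would note that the auxiliary quantities $\rho_0(a)$, $r_0(a)$, $\tau_0(a)$ and $N(a)$ were already written out through $\|a\|_{L_1}$, $\|a\|_{L_2}$ and $M_k(a)$, $k=1,2,3$, in \eqref{e1.35}, \eqref{e1.37}, \eqref{e4.14+} and \eqref{e1.44}, so the resulting bounds are indeed of the announced explicit form.
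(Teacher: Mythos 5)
Your proposal is correct and follows exactly the route the paper takes: the two-sided bound for $\mathcal{M}(a)$ is \eqref{e1.34} together with Lemma \ref{lemma1.8}, and the bound for $\mathcal{A}_r(a)$ comes from \eqref{e1.33} plus the split of $\{|y|\ge r\}$ at $|y|=(2\rho_0(a))^{-1}$ handled by Lemmas \ref{lemma1.9} and \ref{lemma1.12}. Your explicit remark about the case where the first region is empty is a detail the paper leaves implicit, but the argument is the same.
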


\begin{remark}
If the assumptions of Proposition {\rm \ref{prop1.13}} hold, then $\mathcal{M}(a)$ can be estimated from below
by a positive constant that depends only on $d$, $\|a\|_{L_1}$, $\|a\|_{L_2}$
 and $M_3(a)$,  while $\mathcal{A}_r(a)$ admits a lower bound by a positive constant that depends on the same
 parameters and $r$. Furthermore, under the same assumptions,  the constant  ${\mathcal C}(a,\mu)$
 introduced in Theorem {\rm \ref{teor3.1}}  satisfies the estimate ${\mathcal C}(a,\mu)\geq \tilde C>0$, where
  $\tilde C$ depends only on
  $ \mu_-, \; \mu_+, \; d,\;  \|a\|_{L_1},\; \|a\|_{L_2}$  and $M_j(a), \,j=1,\,2,\,3$.
  \end{remark}

\subsection*{Acknowledgements}
The research of A. Piatnitski and E. Zhizhina was supported by the Ministry of Science and Higher Education of the Russian Federation,   agreements 075-15-2022-287 and  075-15-2022-289, respectively. 

The research of V. Sloushch  and T. Suslina was supported by Russian Science Foundation (project 22-11-00092,
 https://rscf.ru/project/22-11-00092/).

\end{document}